\pgfplotsset{compat = newest}
\newcommand{\Ac}{\mathcal{A}}
\newcommand{\Bc}{\mathcal{B}}
\newcommand{\Dc}{\mathcal{D}}
\newcommand{\Ec}{\mathcal{E}}
\newcommand{\Fc}{\mathcal{F}}
\newcommand{\Gc}{\mathcal{G}}
\newcommand{\Hc}{\mathcal{H}}
\newcommand{\Jc}{\mathcal{J}}
\newcommand{\Rc}{\mathcal{R}}
\newcommand{\Sc}{\mathcal{S}}
\newcommand{\Qc}{\mathcal{Q}}
\newcommand{\Ic}{\mathcal{I}}
\newcommand{\Kc}{\mathcal{K}}
\newcommand{\Lc}{\mathcal{L}}
\newcommand{\Xc}{\mathcal{X}}
\newcommand{\As}{\mathscr{A}}
\newcommand{\Ls}{\mathscr{L}}
\newcommand{\Ns}{\mathscr{N}}
\newcommand{\Vs}{\mathscr{V}}
\newcommand{\Ws}{\mathscr{W}}
\newcommand{\Rs}{\mathscr{R}}
\newcommand{\Ts}{\mathscr{T}}
\newcommand{\Fs}{\mathscr{F}}
\newcommand{\Rb}{\mathbb{R}}
\newcommand{\Cb}{\mathbb{C}}
\newcommand{\Pb}{\mathbb{P}}
\newcommand{\Eb}{\mathbb{E}}
\newcommand{\Nb}{\mathbb{N}}
\newcommand{\Bf}{\mathfrak{B}}
\newcommand{\Hf}{\mathfrak{H}}
\newcommand{\Df}{\mathfrak{D}}
\newcommand{\CE}{\mathbb{E}|}
\newcommand{\rs}{\mathcal{R}}
\newcommand{\es}{\mathcal{J}}
\newcommand{\um}{\frac{1}{2}}
\newcommand{\one}{\mathbb{1}}
\newcommand{\tr}{{\rm tr}}
\newcommand{\Span}{{\rm span}}
\newcommand{\alg}{{\rm alg}}
\newcommand{\ket}[1]{\left| #1 \right>}
\newcommand{\bra}[1]{\left< #1 \right|}
\newcommand{\inner}[2]{\left< #1,#2 \right>}
\newcommand{\expect}[1]{\left< #1 \right>}
\newcommand{\ketbra}[2]{\left\vert #1 \middle>\middle< #2 \right\vert}
\definecolor{darkviolet}{rgb}{0.58, 0.0, 0.83}
\definecolor{lavender}{rgb}{0.45, 0.31, 0.59}
\theoremstyle{theorem}
\newtheorem{lemma}{Lemma}
\newtheorem{proposition}{Proposition}
\newtheorem{theorem}{Theorem}
\newtheorem{corollary}{Corollary}
\theoremstyle{definition}
\newtheorem{definition}{Definition}
\newtheorem*{assumption*}{Assumptions}
\theoremstyle{remark}
\newtheorem{remark}{Remark}
\newtheorem{example}{Example}
\begin{document}
\begin{frontmatter}
\title{Quantum model reduction for continuous-time quantum filters}

\begin{aug}

\author[A]{\inits{T.}\fnms{Tommaso }~\snm{Grigoletto*}\ead[label=e1]{tommaso.grigoletto@unipd.it}},
\author[B]{\inits{C.}\fnms{Clément}~\snm{Pellegrini}\ead[label=e2]{clement.pellegrini@math.univ-toulouse.fr}}
\and
\author[A,C]{\inits{F.}\fnms{Francesco}~\snm{Ticozzi}\ead[label=e3]{francesco.ticozzi@unipd.it}}

\address[A]{Department of Information Engineering, University of Padova, Italy\printead[presep={,\ }]{e1,e3}}

\address[B]{Institut de Mathématiques de Toulouse, UMR5219, Université de Toulouse, CNRS, UPS IMT,
F-31062 Toulouse Cedex 9, France\printead[presep={,\ }]{e2}}

\address[C]{Department of Physics and Astronomy, Dartmouth College, Hanover, New Hampshire 03755, USA}
\end{aug}

\begin{abstract}
The use of quantum stochastic models is widespread in dynamical reduction, simulation of open systems, feedback control and adaptive estimation. In many applications only part of the information contained in the filter's state is actually needed to reconstruct the target observable quantities; thus, filters of smaller dimensions could be in principle implemented to perform the same task. In this work, we propose a systematic method to find, when possible, reduced-order quantum filters that are capable of exactly reproducing the evolution of expectation values of interest. In contrast with existing reduction techniques, the reduced model we obtain is exact and in the form of a Belavkin filtering equation, ensuring physical interpretability. This is attained by leveraging tools from the theory of both minimal realization and non-commutative conditional expectations. The proposed procedure is tested on prototypical examples, laying the groundwork for applications in quantum trajectory simulation and quantum feedback control.
\end{abstract}
\end{frontmatter}

\section{Introduction}

Despite the comforting unitarity of quantum dynamics as prescribed by Schr\"odinger's equation \cite{von2018mathematical}, its stochastic extensions have emerged as natural candidates to model quantum measurement processes as dynamical systems \cite{percival1998quantum,adler2001martingale}, and to introduce spontaneous localization mechanisms in quantum theory \cite{ghirardi1985model,ghirardi1990markov}.  Probabilistic behavior can be introduced in Schr\"odinger's equation as a stochastic fluctuation of the Hamiltonian operator, forcing one to add some correction terms in order to maintain its state as a valid state vector \cite{barchielli2009quantum}. 

Independently, quantum stochastic evolutions of the same form have been derived in the 1970's as the result of the dynamical interaction of quantum systems with an infinite-dimensional environment, modeled as a quantum field in the framework of quantum probability \cite{parthasarathy2012introduction}. In the pioneering work of Belavkin, stochastic models of this type emerge as the quantum equivalent of a Kushner-Stratonovich equation \cite{Kushner,stratonovich1965conditional, Bucy}, i.e. the dynamical model for a quantum system undergoing indirect continuous observation \cite{BELAVKIN1992171,bouten_quantum_2006,bouten2007introduction,vanhandel2008stabilityquantummarkovfilters}. Later, similar models emerged in quantum optics, and have been used to model different types of measurements and their fluctuations \cite{carmichael1993open,wiseman1994quantum, wiseman2009quantum}. For a review of quantum optical models from a mathematical perspective and derivations of the models that avoid the need for noncommutative operator-valued processes, see \cite{barchielli2009quantum}. 

The potential of stochastic quantum evolutions as quantum filtering equations, which provide state estimation based on measurements, to support control algorithms have been already proposed in Belavkin's work \cite{belavkin2004towards}, and have been developed into a subfield of quantum control \cite{altafini2012modeling}. State-based feedback control based on stochastic models has been experimentally implemented on different platforms \cite{sayrin2011real, bonato2016optimized}. In any application of these models that requires numerical integration of the resulting SDEs, increasing the size of the system introduces a major hurdle. When integration has to be performed in real time, as in feedback protocols, this issue practically limits applicability to extremely small systems. 

\vspace{3mm} In this work we aim to construct smaller models that are able to {\em exactly} reproduce the output of interest for the model, which shall be assumed to be some linear functional of the state of the system. This is done by projecting the dynamics onto subspaces or algebras that contain the full trajectories of the observables of interest in Heisenberg picture. We show that the latter are contained in a Krylov-type subspace \cite{kry31} which is defined following a direct analogy with the observability analysis in linear system theory \cite{kalman1969topics}. The idea has been introduced for deterministic dynamics in \cite{tit2023,prxq2024} and for discrete time processes in \cite{letter2024}. The works \cite{legoll2010effective,hartman2020} are similar in spirit but are limited to the classical case. The quantum continuous-time scenario, as we shall see, presents peculiar challenges.

The main other approach that has been proposed to limit the size of the model to be integrated is the so-called {\em quantum projection filter}, by introducing parametrization of the state and constructing reductions to the corresponding manifolds approximations \cite{ramadan_exact_2023, nurdin_structures_2014, gao_design_2020, gao_improved_2020, handel_quantum_2005, gao_exponential_2018}. With respect to our case, however, the resulting models are approximate and do not guarantee to preserve the form of the dynamics, limiting their physical interpretability.

\vspace{3mm}{\em System and measurements:} We consider a finite-dimensional quantum system, \(\Hc\simeq\Cb^n\), whose state is described, for $t\geq0$, by the density operator $\rho_t\in\Df(\Hc)$ and subject to $p$ continuous measurement of homodyne type \cite{barchielli2009quantum} and $q$ measurements of counting type \cite{barchielli1991measurements11}.
For each measurement of homodyne type, labeled with $j=1,\dots,p$, the output signal is a scalar stochastic process $(Y_t^j)_{t\geq0}$, whose dynamics obeys the stochastic differential equation 
\begin{equation}\label{eq:output}
    dY_t^j = \tr[(D_j+D_j^*)\rho_t]dt + dW_{t}^j,
\end{equation}
where $\{(W_t^j)_{t\geq0}\}_{j=1}^p$ are independent Wiener processes and $D_j\in\Bf(\Hc)$ are operators that describe the effects of the measurement on the system. 
For each measurement of counting type, labeled with $j=1,\dots,q$, the output is a scalar counting process $(N_t^j)_{t\geq0}$ of stochastic intensity \(\int_{0}^t \tr[C_j\rho_{s} C_j^*]ds\), where $C_j\in\Bf(\Hc)$ are operators that describe the effect of measurement on the system. 
\bigskip

\noindent{\em Stochastic master equation.}
The evolution of the state \((\rho_t)_{t\geq0}\) under these assumptions is known as a {\em quantum trajectory} and is modeled by a jump-diffusion stochastic differential equation (SDE), known in the literature as stochastic master equation (SME) \cite{barchielli2009quantum} or quantum filtering equation \cite{BELAVKIN1992171,bouten2007introduction}. Given an Hermitian operator $H$, a set of arbitrary operators $\{L_j\}_{j=1}^m$, and the sets of operators that describe the relation between the state $\rho_t$ and the measurement outcomes $\{D_j\}_{j=1}^p,\{C_j\}_{j=1}^q$ the state $\rho_t$ satisfies the following stochastic differential equation:
\begin{equation}
\begin{split}
    d\rho_t = &\, \Lc(\rho_{t-}) dt\\& + \sum_{j=1}^{p} \left[ D_j\rho_{t-}+\rho_{t-} D_j^* -\tr[D_j\rho_{t-}+\rho_{t-} D_j^*]\rho_{t-} \right] \left(dY_t^j-\tr[D_j\rho_{t-}+\rho_{t-} D_j^*] dt\right)\\ & +  \sum_{j=1}^{q}\left[\frac{C_j\rho_{t-} C_j^*}{\tr[C_j\rho_{t-} C_j^*]}-\rho_{t-}\right]\left(dN_t^j-\tr[C_j\rho_{t-} C_j^*]dt\right),
\end{split}
    \label{eq:SME}
\end{equation}
where the operator $\mathcal L$ is the so called \emph{Lindblad} (or GKLS) generator and is defined as
\begin{equation}
    \begin{split}
        \Lc(\rho) \equiv&  -i[H,\rho] + \sum_{j=1}^m L_j\rho L_j^*-\frac{1}{2}\{L_j^* L_j,\rho\} + \sum_{j=1}^p D_j\rho D_j^*-\frac{1}{2}\{D_j^* D_j,\rho\} + \sum_{j=1}^q C_j\rho C_j^*-\frac{1}{2}\{C_j^* C_j,\rho\}.
    \end{split}
\end{equation}
If $\rho_0\in \Df(\Hc)$, then the solution $(\rho_t)_{t\geq 0}$ of Eq \eqref{eq:SME} is valued in $\Df(\Hc)$ \cite{pellegrini2010markov}. 

Note that, in Equation \eqref{eq:SME} one can include parameters that characterize the measurements efficiencies \cite{1amini2014stability}. We here decided to not include them in order to lighten the notations. Nonetheless the introduction of efficiency parameters does not change the results we derive next as those are simply scalar coefficients that are not affected by the involved reductions. 
\bigskip

\noindent{\em Ouput functionals}. 
In many cases of practical interest, we are not actually interested in all the information contained in $(\rho_t)_{t\geq0}$. In particular, quantum filtering equations are often used to estimate the state $\rho_t$ which is then used to to compute estimates of linear functionals of the state. Relevant cases include:
\begin{itemize}
    \item In simulation of {\em quantum trajectories} aimed to study the evolution of observables of interest, e.g. $\tr[O \rho_t]$, subject to continuous measurement \cite{tirrito_full_2023}. In the case of {\em non-demolition measurements} in continuous time (\cite{Benoist_2014,bauer2013repeated,bauer2011convergence,cardona2020exponential,benoist2024exponentially}, see Section \ref{sec:qnd} for details and extensions) the Hamiltonian and noise operators need to be diagonal, and the observation of interests correspond to the probability of finding the state in one of their  common eigen-subspaces: i.e. $p_j = \tr[\Pi_j \rho_t]$ with orthogonal projectors $\Pi_j$ such that $\sum_j \Pi_j = \one$.
    
    \item In {\em state reduction models} that mimic the asymptotic behavior induced by quantum measurements \cite{adler2001martingale, Adler_2008}, one obtains models that are equivalent to the non-demolition models, in which case the observables of interest are the  set of spectral projections. 
    \item  In many {\em feedback control scenarios}, see e.g \cite{amini2012stabilization,grigoletto2021stabilization,liang2022switching, benoist2017exponential}, the state $\rho_t$ is used to compute Lyapunov functions of the type $V(\rho_t) = \tr[K\rho_t]$ for some $K\in\Bf(\Hc)$ which is then used to perform feedback control on the system.
   
    \item In Montecarlo-type {\em simulations} of open quantum systems that employ quantum stochastic models to explore the evolution of expectation value of observables under Lindblad dynamics, in which case one could be interested in estimating $\Eb[\tr[O\rho_t]]$, \cite{PhysRevB.105.064305}.
    
\end{itemize}
In general we can assume to be interested in reproducing only the stochastic processes $\{(\Theta_t^j)_{t\geq0}\}_{j=1}^r$, that are defined as 
\begin{equation}
    \Theta_t^j \equiv \tr[O_j\rho_t],
    \label{eq:output_equation}
\end{equation} for a finite set of operators $\{O_j\}_{j=1}^r\subset\Bf(\Hc)$ and for all $t\geq0$. 

From a system-theoretic viewpoint \cite{kalman}, the paired equations \eqref{eq:SME}-\eqref{eq:output_equation} represent a stochastic filter, that takes the signals $(Y_t^j)_{t\geq0}$  and $(N_t^j)_{t\geq0}$ as inputs, and returns as an output the signals $(\Theta_t^j)_{t\geq0}$.  In the following, we denote by $\Sigma$ the filter described by equations \eqref{eq:SME} and \eqref{eq:output_equation}. Similarly, $\Sigma_L$ and $\Sigma_Q$ represent the two other filters, which we shall introduce next, where \eqref{eq:SME} is substituted with a minimal linear SDE and with a reduced order SME, respectively. We informally say that a filter $\Sigma$ is a \textit{quantum filter} if the stochastic differential equation that governs its dynamics is a stochastic master equation of the form \eqref{eq:SME}.
\bigskip

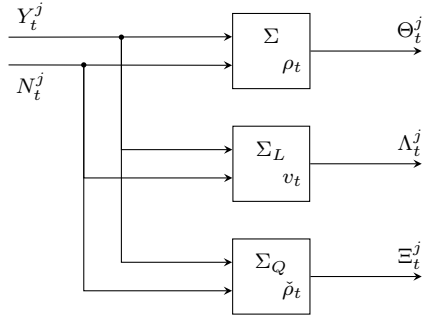
\begin{figure}[ht]
    \centering
    \begin{tikzpicture}[>=stealth]
        \coordinate (orig) at (0,0);
        \coordinate (sigma) at (4,0);
        \coordinate (sigma_c) at (4,-1.5);
        \coordinate (sigma_cc) at (4,-3);
        \coordinate (r) at (6,0);
        \coordinate (r_c) at (6,-1.5);
        \coordinate (r_cc) at (6,-3);

        \node[minimum width = 1cm, minimum height = 1cm,] (Orig) at (orig) {};
        \node[draw, minimum width = 1cm, minimum height = 1cm, align=center] (Sigma) at (sigma) {$\Sigma$\\$\qquad\rho_t$};
        \node[draw, minimum width = 1cm, minimum height = 1cm, align=center] (Sigma_c) at (sigma_c) {${\Sigma}_L$\\$\qquad v_t$};
        \node[draw, minimum width = 1cm, minimum height = 1cm, align=center] (Sigma_cc) at (sigma_cc) {${\Sigma}_Q$\\$\qquad\check{\rho}_t$};

        \path[draw,->] (Sigma) -- node[above, pos=0.9]{$\Theta_t^j$} (r);
        \path[draw,->] (Sigma_c) -- node[above, pos=0.9]{$\Lambda_t^j$} (r_c);
        \path[draw,->] (Sigma_cc) -- node[above, pos=0.9]{$\Xi_t^j$} (r_cc);
        \path[draw,->] (Orig.20) -- node[above, pos=0.1]{$Y^j_t$} (Sigma.160);
        \path[draw,->] (Orig.340) -- node[below, pos=0.1]{$N^j_t$} (Sigma.200);
        \path[draw,->] ($(Orig.20)+(1.5,0)$) |-  (Sigma_c.160);
        \path[draw,->] ($(Orig.340)+(1.0,0)$) |-  (Sigma_c.200);
        \path[draw,->] ($(Orig.20)+(1.5,0)$) |-  (Sigma_cc.160);
        \path[draw,->] ($(Orig.340)+(1.0,0)$) |-  (Sigma_cc.200);
        \path[fill] ($(Orig.20)+(1.5,0)$) circle[radius=1pt];
        \path[fill] ($(Orig.340)+(1.0,0)$) circle[radius=1pt];
    \end{tikzpicture}
    \caption{Schematic of the use of the original filter $\Sigma$ and reduced filters ${\Sigma_L}$ and $\Sigma_Q$.}
    \label{fig:schematic}
\end{figure}

\noindent{\em Reduction}. The objective of this work is to construct, when possible, a more computationally-efficient  {\em quantum filter}, denoted by $\Sigma_Q$, that, using the measurement process $\{(Y_t^j)_{t\geq0}\}_{j=1}^p$ and $\{(N_t^j)_{t\geq0}\}_{j=1}^q$ can {\em exactly} reproduce the output process $\{(\Theta_t^j)_{t\geq0}\}_{j=1}^r$ of the original filter $\Sigma$. 

The main result is provided in Section \ref{sec:reduced_quantum_filters}, where we construct a quantum filter ${\Sigma}_Q$, defined over a *-subalgebra $\check{\As}\subseteq\Bf(\check{\Hc})$ with $\check{\Hc}\subseteq\Hc$, such that: 
\begin{itemize}
    \item to each output process $(Y_t^j)_{t\geq0},$ for $j=1,\dots,p,$  is associated an operator \(\check{D}_j\in\Bf(\check{\Hc})\);
    \item to each output process $(N_t^j)_{t\geq0},$ with  $j=1,\dots,q,$ there is a set of associated operators $\{\check{C}_{j,k}\}_{k=1}^d\subset\Bf(\check{\Hc})$;
\end{itemize}
and whose state $\check{\rho}_t\in\Df(\check{\As})$ evolves according to 
\begin{equation}
    \begin{split}
        d\check{\rho}_t = &\,\check{\Lc}(\check{\rho}_{t-}) dt \\&+ \sum_{j=1}^{p}\left[\check{D}_j\check{\rho}_{t-}+\check\rho_{t-} {\check{D}_j}^* - \tr\left[\check{D}_j\check\rho_{t-}+\check\rho_{t-} {\check{D}_j}^*\right]\check{\rho}_{t-}\right] \left(dY_t^j - \tr[\check{D}_j\rho_{t-}+\check\rho_{t-} {\check{D}_j}^*]dt\right)\\ &+ \sum_{j=1}^{q} \left[  \frac{\sum_{k=1}^d\check{C}_{j,k}\check\rho_{t-}\check{C}_{j,k}^*}{\sum_{k=1}^d\tr[\check{C}_{j,k}\check\rho_{t-}\check{C}_{j,k}^*]} - \check{\rho}_{t-} \right]\left(dN_t^j-\sum_{k=1}^d\tr[\check{C}_{j,k}\check\rho_{t-}\check{C}_{j,k}^*]dt\right)
    \end{split}
    \label{eq:reduced_quantum_filter}
\end{equation}
where $\check{\Lc}$ is a Lindblad generator associated with an Hamiltonian $\check{H}\in\Bf(\check{\Hc})$, and noise operators \(\{\check{L}_{j,k}\}\subset\Bf(\check{\Hc})\), $\{\check{D}_{j}\}$ and $\{\check{C}_{j,k}\}\subset\Bf(\check{\Hc})$. 
Linear functionals of interest can be computed using the reduced filter $\Sigma_Q$ via 
\begin{equation}
     \Xi_t^j \equiv \tr[\check{O}_j\check{\rho}_t]\qquad\forall j=1,\dots,r
\end{equation}
where each $\check{O}_j\in\check{\As}$ is a reduced operator associated to $O_j$. The initial condition \(\rho_0\) is also reduced trough a linear map $\Rc_\As:\Df(\Hc)\to\Df(\check{\As})$, i.e. $\check{\rho}_0=\Rc_\As(\rho_0)$.

Under reasonable assumptions, we prove in Theorem \ref{thm:linear_solves_prob_1} that, initializing the reduced filter with the initial condition $\check{\rho}_0=\Rc_\As(\rho_0)$ almost surely,  we have $\Theta_t^j = \Xi_t^j$, for all initial conditions $\rho_0\in\Df(\Hc),$ for all times $t\geq0,$ and for all $j=1,\dots,r$. 
\bigskip

\noindent{\em Minimal linear filter}.
In the derivation of the quantum filter $\Sigma_Q$, it will prove instrumental to first derive  another filter, denoted by $\Sigma_L,$ as detailed in Section \ref{sec:observable_space}. This filter is {\em linear} (in the sense that its evolution is governed by a linear stochastic master equation) and is minimal in the dimension of the state (which will result to be $\kappa=\dim(\Ns^\perp)$, the dimension of the minimal operator subspace that contains all the observables of interest evolved in Heisenberg picture). What we obtain in this case is not necessarily a quantum filter, in the sense that its state is not necessarily a density operator, but is capable of exactly reproducing the processes of interest $\{(\Theta_t^j)_{t\geq0}\}$ {\em exactly} as the original filter $\Sigma$.
Although it does not provide any physical intuition on the model, it can be relevant for practical applications: for example, it allows one to efficiently implement a filter that estimates the processes $\{(\Theta_t^j)_{t\geq0}\}$ on a classical computer.
\bigskip

A schematic representation of the use of the three filters is shown in Figure \ref{fig:schematic}. All the reduced filters we derive are built to correctly work for all initial conditions $\rho_0\in\Df(\Hc)$: if one were to consider only pure states and a purity-preserving measurement (i.e. with perfect detection efficiency), one could in principle obtain smaller linear filters, but they would only work for this restricted scenario.

\section{Notation and problem setting}
\label{sec:problem_setting}

In this work, we are concerned with finite-dimensional Hilbert spaces $\Hc\simeq\Cb^{n}$. The algebra of bounded operators on $\Hc$ is denoted by $\Bf(\Hc)\simeq\Cb^{n\times n}$, the set of Hermitian operators by $\Hf(\Hc) = \{X\in\Bf(\Hc)|X=X^*\}$, and those that are positive semi-definite by $\Hf_{\geq}(\Hc)=\{X \in\Bf(\Hc)|X=X^* \geq0\}$, while the set of density operators $\Df(\Hc)=\{\rho\in\Bf(\Hc)|\rho=\rho^*\geq0, \tr(\rho)=1\}$. 
With some abuse of notation, given an operator space $\Vs\subseteq\Bf(\Hc)$ we denote by $\Df(\Vs) \equiv \Df(\Hc) \cap \Vs$ the set of density operators contained in $\Vs$.
Throughout the paper, with only few exceptions, we employ the calligraphic notation for superoperators, e.g. $\Lc,\Dc,\Gc,\Kc$ and the script notation for operator spaces, e.g. $\As,\Vs,\Ws$. 

For any density operator $\rho\in\Df(\Hc)$, and operators $C, D\in\Bf(\Hc)$, and $H\in\Hf(\Hc)$ we define the following super-operators: 
\begin{align}
\begin{aligned}
\begin{split}
    [H,\cdot]: \Df(\Hc) &\to \Bf(\Hc)\\ \rho&\mapsto H\rho -\rho H,
\end{split}\qquad
&\begin{split}
    \Kc_C: \Df(\Hc) &\to \Bf(\Hc)\\ \rho&\mapsto C\rho C^*,
\end{split}\\
\begin{split}
    \Gc_{D}: \Df(\Hc) &\to \Bf(\Hc)\\ \rho&\mapsto D\rho + \rho D^*,
\end{split}\qquad
&\begin{split}
    \Dc_C: \Df(\Hc) &\to \Bf(\Hc)\\ \rho&\mapsto C\rho C^* - \frac{1}{2}\{C^* C, \rho\},
\end{split}
\end{aligned}
\label{eq:fundamental_super_operators}
\end{align}
and denote by $\Ic$ the identity superoperator. 

\subsubsection{Stochastic master equation}
In this section,  we rigorously introduce all the necessary concepts and assumptions regarding the dynamics of interest, and outline the derivation of continuous-time quantum filters \eqref{eq:SME} starting from a linear stochastic differential equation for the stochastic evolution operator. This allows us to introduce two intermediate models that are going to be used in the key results of the paper and other derivations later in the paper: the linear SDE \eqref{eq_defS} for the stochastic evolution superoperator and the linear SDE \eqref{eq:Belavkin-Zakai} for the un-normalized state. The latter, in particular, allow us to directly use the tools from control theory to reduce the model. Equation \eqref{eq:SME} is then re-obtained as the SME \eqref{eq:SME_2}. We refer the reader to \cite{barchielliholevo,barchielli2009quantum,Benoist2021} and references therein for more details on the derivation. 

The first step consists in introducing a matrix valued stochastic process called the \emph{stochastic evolution operator}. To this end, let $(\Omega,\mathcal F,(\mathcal F_t)_{t\geq0},\mathbb Q)$ be a filtered probability space with standard Brownian motions $\{(Y_t^j)_{t\geq0}\}_{j=1}^p$ and standard Poisson processes  $\{(N_t^j)_{t\geq0}\}_{j=1}^q$, with intensity $1$, such that the full family $\{(Y_t^j)_{t\geq0}\}_{j=1}^p\cup\{(N_t^j)_{t\geq0}\}_{j=1}^q$ is independent. The filtration $(\mathcal F_t)_{t\geq0}$ is assumed to satisfy the standard conditions, we denote $\mathcal F_{\infty}$ by $\mathcal F$, and the processes $\big(Y_t^j\big)_{t\geq0}$ and $\big(N_t^j-t\big)_{t\geq0}$ are $(\mathcal F_t)_{t\geq0}$-martingales under $\mathbb Q$. 

On $(\Omega,\mathcal F,(\mathcal F_t)_{t\geq0},\mathbb Q)$, for $s\in\mathbb R_+$, let $(\mathcal A_t^s)_{t\in[s,\infty)}$ be the solution to the following SDE:
\begin{equation}\label{eq_defS}
d \mathcal A_{t}^s= \mathcal L\circ\mathcal A_{t-}^{s}\,d t+\sum_{i=1}^p \mathcal G_{D_i}\circ \mathcal A_{t-}^{s}\,d Y^i_t+\sum_{j=1}^q(\mathcal K_{C_j}-\mathcal I)\circ \mathcal A_{t-}^{s}\,d N^j_t,\qquad \mathcal A_s^s=\mathcal I,
\end{equation}
with
\begin{equation}
    \begin{split}
        \Lc(\rho) \equiv&  -i[H,\rho] + \sum_{j=1}^m L_j\rho L_j^*-\frac{1}{2}\{L_j^* L_j,\rho\} + \sum_{j=1}^p D_j\rho D_j^*-\frac{1}{2}\{D_j^* D_j,\rho\} + \sum_{j=1}^q C_j\rho C_j^*-\frac{1}{2}\{C_j^* C_j,\rho\}.
    \end{split}
\end{equation}
where the operators $\{L_j\}_{j=1}^m,\{D_j\}_{j=1}^p,\{C_j\}_{j=1}^q \subset\Bf(\Hc)$ and the Hamiltonian $H$ are the same we mentioned before.

A fundamental relation satisfied by $(\mathcal A_{t}^s)$ is 
$$\mathcal A_t^s\circ\mathcal A_s^r=\mathcal A_t^r,\quad \forall\, 0\leq r\leq s\leq t.$$
We then call the superoperator $\mathcal A_t^s$ the stochastic evolution superoperator, as in \cite{barchielli2009quantum}.  This superoperator is also called the propagator in the physics literature. 

We can now derive the linear stochastic master equation. To this end, let $(\tau_t)_{t\geq0}$ be a family of un-normalized states valued in $\Hf_{\geq0}(\Hc)$ (Hermitian, positive semi-definite) , such that $\tau_0=\rho_0$ and defined by
$$\tau_t=\mathcal A_t^0(\rho_0),\quad \forall t\geq0.$$ Using Ito rules, that is \[dY_t^jdY_t^k=\delta_{jk}dt,\qquad dN_t^jdN_t^k=\delta_{jk}dN_t^k,\qquad dY_t^idN_t^k=0,\qquad dY_t^idt=0\qquad\text{and}\qquad dN_t^kdt=0,\] we can show that $(\tau_t)_{t\geq0}$ satisfies the linear stochastic differential equation (SDE): 
\begin{equation}
\begin{split}
    d\tau_t &= \left[\Lc +\Ic - \sum_{j=1}^q \Kc_{C_j} \right](\tau_{t-}) dt + \sum_{j=1}^{p}\Gc_{D_j}(\tau_{t-}) dY_t^j +\sum_{j=1}^q \left[\Kc_{C_j}-\Ic\right](\tau_{t-}) dN_t^j\\
    &=\Lc (\tau_{t-}) dt + \sum_{j=1}^{p}\Gc_{D_j}(\tau_{t-}) dY_t^j +\sum_{j=1}^q \left[\Kc_{C_j}-\Ic\right](\tau_{t-}) (dN_t^j-dt).
\end{split}
    \label{eq:Belavkin-Zakai}
\end{equation}
In the sequel, we shall denote 
\begin{equation}
    \Qc(\tau) \equiv \left[\Lc +\Ic - \sum_{j=1}^q \Kc_{C_j} \right](\tau),
\end{equation}
for all positive operators $\tau$. Under the probability measure $\mathbb Q$, the processes $\{(Y_t^j)_{t\geq0}\}_{j=1}^p$ and $\{(N_t^j-t)_{t\geq0}\}_{j=1}^q$ are martingales. Using the fact that $\tr[\Lc(\tau_t)]=0$ then the process $(\tr(\tau_t))_{t\geq0}$ is also a martingale satisfying $\mathbb E[\tr(\tau_t)]=1, t\geq0$. This is the key to making the following Girsanov change of measure. Let $T\geq0$, one can consider the change of probability measure
$$\Pb^{\rho_0}_T(F)=\mathbb E_{\mathbb Q}[\mathbf 1_F\tr(\tau_t)], \qquad \forall F\in\mathcal F_T,$$
where $\mathbf 1_F$ is the indicator function for $F\in\mathcal F_T$.
The family of probability measures $(\Pb^{\rho_0}_T)_{T\geq 0}$ constructed by this procedure is consistent: that is, for all $F\in\mathcal F_s$, we have have $\Pb^{\rho_0}_t(F)=\Pb^{\rho_0}_s(F),$  for all $t\geq s$. In this way, one can extend this family of probability and define a unique probability measure $\Pb^{\rho_0}$ such that  $$\Pb^{\rho_0}(F)=\Pb^{\rho_0}_t(F),\qquad\forall F\in\mathcal F_t.$$  Note that intrinsically this probability depends on the initial state $\rho_0$. Nevertheless, all our study is robust with respect to this dependency, and the choice does not influence the reduction procedure.  

Now consider the process $(\rho_t)_{t\geq0}$ defined by 
$$\rho_t=\frac{\tau_t}{\tr(\tau_t)},\qquad t\geq0,$$ we obtain \eqref{eq:SME} satisfied by $(\rho_t)_{t\geq0}$ from \eqref{eq:Belavkin-Zakai}
using the Ito stochastic calculus for jump and diffusion processes.

Again, using Ito stochastic calculus, one can show that $(\rho_t)_{t\geq0}\subset\Df(\Hc)$ satisfies the stochastic master equation (SME) \cite{barchielli2009quantum,bouten2007introduction} or quantum filtering equation \cite{BELAVKIN1992171}:
\begin{equation}
    \begin{split}
        d\rho_t =\,& \Lc(\rho_{t-}) dt\\& + \sum_{j=1}^{p} \left[ \Gc_{D_j}(\rho_{t-}) -\tr[\Gc_{D_j}(\rho_{t-})]\rho_{t-} \right] \left(dY_t^j-\tr[\Gc_{D_j}(\rho_{t-})dt]\right)\\&  + \sum_{j=1}^{q}\left[\frac{\Kc_{C_j}(\rho_{t-})}{\tr[\Kc_{C_j}(\rho_{t-})]}-\rho_{t-}\right]\left(dN_t^j-\tr[\Kc_{C_j}(\rho_{t-})]dt\right).
    \end{split}
    \label{eq:SME_2}
\end{equation}
We thus recover the equation \eqref{eq:SME} presented in Introduction. 

A particular feature of the Girsanov change of measure is that the processes $(W_t^i)_{t\geq0}$, $i=1,\ldots,p$ defined by
$$W_t^i = Y_t^i-\int_0^t\tr[\Gc_{D_j}(\rho_s)]ds,$$
are independent Brownian motions under probability $\mathbb P^{\rho_0}$. The processes $(N_t^j)_{t\geq0}$, $j=1,\ldots,q$ are Poisson processes with intensity
$$t\rightarrow \tr[\Kc_{C_j}(\rho_t)],$$ which in particular implies that the processes defined by
$$N_t^j-\int_0^t\tr[\Kc_{C_j}(\rho_s)]ds$$ are martingales under $\mathbb P^{\rho_0}.$

\subsubsection{Linear functionals}

In many application scenarios for SMEs, one is not interested in the entire information contained in the state $\rho_t$ but only in a limited set of processes that depend linearly on the state $\rho_t$. In this work we focus on \textit{linear functionals} of the state as they cover many cases of interest. Namely, we assume to be interested in reproducing only the stochastic processes $\{\Theta_t^j\}_{j=1}^r$, that are defined as 
\begin{equation}
    \Theta_t^j \equiv \tr[O_j\rho_t],
    \label{eq:output}
\end{equation} for a finite set $\{O_j\}_{j=1}^r\subset\Bf(\Hc)$ and for all $t\geq0$. 

Often times, in practical situations, the set $\{O_j\}_{j=1}^r$ is composed of Hermitian matrices, i.e. {\em observables}. For this reason, in the following we refer to the set $\{O_j\}$ as the set of observables of interest and assume that $\{O_j\}_{j=1}^r\subset\Hf(\Hc)$. Note, however, that this extra assumption is only made for convenience of presentation and can easily be lifted. 
A particularly relevant example is when $\Hc = \Hc_S\otimes\Hc_E,$ and the output of interest is the reduced state on $\Hc_S$. In this case, the output of the linear map $\tr_E$ can be equivalently obtained by choosing $O_j = S_j\otimes \one_E$ where $\{S_j\}$ form an Hermitian basis for $\mathfrak{B}(\Hc_S)$ \cite{prxq2024}. 

We here make two assumptions on this set of operators. 
\begin{assumption*}
    The set $\{O_j\}$ is such that:
    \begin{enumerate}
        \item $\{D_j+D_j^*\}_{j=1}^p\cup\{C_j^* C_j\}_{j=1}^q\subseteq\Span\{O_j\}_{j=1}^r$; \label{ass:observables}
        \item $\one\in\Span\{O_j\}_{j=1}^r$. \label{ass:identity}
    \end{enumerate}
\end{assumption*}
Assumption \ref{ass:observables} will prove to be necessary to ensure that the statistics of the measurement process $\{(Y_t^j)_{t\geq0}\}_{j=1}^p$ and $\{(N_t^j)_{t\geq0}\}_{j=1}^q$ are preserved by the reduced model. In fact, Assumption \ref{ass:observables} is equivalent to requiring that the reduced filter is capable of reproducing just the expectation values of the drift terms in the measurement signals. This fact is showcased in concrete examples in Section \ref{sec:examples}.
On the other hand, Assumption \ref{ass:identity} is technical and derives from the fact that linear functionals can be computed equivalently from states $\rho_t$ or unnormalized states $\tau_t$, since 
\[\tr[O_j\rho_t] = \frac{\tr[O_j\tau_t]}{\tr[\tau_t]}.\]
Assumption \ref{ass:identity} thus allow us to ensure that requiring to preserve \(\tr[O_j\tau_t]\) implies that \(\tr[O_j\rho_t]\) is also preserved. Note that Assumption \ref{ass:identity} is satisfied in most cases of practical interest, as it physically means that we can test the presence of the system undergoing continuous measurements. Note that one can always include more operators into the set of observables of interest $\{O_j\}_{j=1}^r,$ so that both Assumption \ref{ass:observables} and \ref{ass:identity} are satisfied, at the cost of potentially obtaining a larger model. 

\section{Non-observable space and linear reduced filters}
\label{sec:observable_space}

We next introduce the notion of indistinguishable states, which we leverage in the reduction of the models. This concept is well known in the literature on control theory \cite{kalman,kalman1969topics,wonham,marrobasile} and has also been used in the context of quantum filtering \cite{vanhandel2008stabilityquantummarkovfilters}. The definitions that we give here are dual with respect to what is given in \cite{vanhandel2008stabilityquantummarkovfilters} (are given in Schr\"{o}dinger instead of Heisenberg picture).
\begin{definition}[Indistinguishable states and non-observable subpace.]
    We say that two states $\rho_0$ and $\rho_1$ are \textit{indistinguishable from} $\{O_j\}_{j=1}^r$ if we have 
    \begin{equation}
        \tr\left[O_j \mathcal A_t^0(\rho_0)\right] = \tr\left[O_j \mathcal A_t^0(\rho_1) \right]
    \end{equation}
    for all $t\geq0$, and all operators $\{O_j\}_{j=1}^r$.
    
    \noindent The {\em non-observable space} is then defined as the set of operators that are indistinguishable from 0:
    \begin{equation}
        \Ns \equiv\{X\in\Bf(\Hc)|\quad \tr[O_j \mathcal A_t^0 (X)] = 0,\, \forall t\geq0, \, \forall O_j\}.
        \label{eq:non_observable_definition}
    \end{equation}
\end{definition}

The connection between indistinguishable states and the non-observable subspace comes naturally: because of linearity of the map $A_t^0 \cdot {A_t^0}^*$, we have that two states $\rho_0,\,\rho_1$ are indistinguishable if their difference belongs to the non-observable subspace, that is, $\rho_0-\rho_1\in\Ns$. Verifying that $\Ns$ is, in fact, an operator space is also trivial.

Now we shall explore the properties of $\mathcal N$ and we start with a technical lemma that we only prove and express in the case $p=q=1$ since the generalization is straightforward.

\begin{lemma}
\label{lem:trick}
    Assume that $p=q=1$ and let $D$ and $C$ the corresponding measurement operators. Assume that, for some operators $O,X_0\in\Bf(\Hc)$ we have $\tr[OX_t]=0$ (where $X_t\equiv \mathcal A_t^0 (X_0)$) for all $t\geq0$. Then: 
    \[\tr[O\Lc(X_t)]=0, \qquad\text{and}\qquad \tr[O\Gc_D(X_t)]=0 \qquad\text{and}\qquad \tr[O\Kc_C(X_t)]=0\] for all $t\geq0$. 
\end{lemma}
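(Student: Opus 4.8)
The plan is to apply the Ito formula to the scalar process $\tr[OX_t]$, use the hypothesis that it vanishes identically, and then peel off the three driving terms ($dt$, $dY_t$, $dN_t$) one at a time. Since $X_t=\mathcal{A}_t^0(X_0)$ solves the linear SDE \eqref{eq_defS} specialized to $p=q=1$, linearity of the trace gives
\[
d\,\tr[OX_t] = \tr[O\Lc(X_{t-})]\,dt + \tr[O\Gc_D(X_{t-})]\,dY_t + \tr[O(\Kc_C-\Ic)(X_{t-})]\,dN_t .
\]
By assumption $\tr[OX_t]\equiv 0$, so both its quadratic variation and its drift must vanish, and the whole argument amounts to showing that this forces each coefficient above to be zero.

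First I would extract the two stochastic coefficients through the quadratic variation. Using the Ito table $dY_t\,dY_t=dt$, $dN_t\,dN_t=dN_t$, $dY_t\,dN_t=0$, the bracket of the (identically zero) process is
\[
0=\big[\tr[OX]\big]_t = \int_0^t \tr[O\Gc_D(X_{s-})]^2\,ds + \int_0^t \tr[O(\Kc_C-\Ic)(X_{s-})]^2\,dN_s .
\]
Both integrals are nonnegative, hence each vanishes identically. The continuous term forces $\tr[O\Gc_D(X_s)]=0$ for Lebesgue-almost every $s$; since $s\mapsto X_s$ is c\`adl\`ag and $\tr[O\Gc_D(\cdot)]$ is a continuous linear functional, a function that is c\`adl\`ag and zero a.e. is zero everywhere, so this upgrades to all $s\geq 0$.

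For the jump term, the vanishing of $\int_0^t \tr[O(\Kc_C-\Ic)(X_{s-})]^2\,dN_s$ only constrains the integrand at the random jump times of $N$, so to reach every $t$ I would pass to expectations. The integrand is nonnegative and predictable (it is built from $X_{s-}$), and the predictable compensator of $N$ under $\mathbb{Q}$ is $ds$; hence $\mathbb{E}\big[\int_0^t \tr[O(\Kc_C-\Ic)(X_{s-})]^2\,ds\big]=\mathbb{E}\big[\int_0^t \tr[O(\Kc_C-\Ic)(X_{s-})]^2\,dN_s\big]=0$, which gives $\tr[O(\Kc_C-\Ic)(X_{s-})]=0$ for a.e. $s$, almost surely. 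Left-continuity of $s\mapsto X_{s-}$ promotes this to all $s$, and taking right limits yields $\tr[O(\Kc_C-\Ic)(X_s)]=0$ for all $s$; combined with $\tr[OX_s]=0$ this is exactly $\tr[O\Kc_C(X_s)]=0$. Substituting the two vanishing coefficients back into the displayed SDE leaves $\int_0^t\tr[O\Lc(X_{s-})]\,ds\equiv 0$, whence $\tr[O\Lc(X_s)]=0$ for all $s$ by the same a.e.-to-everywhere regularity argument.

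I expect the main obstacle to be precisely this passage from "almost every $t$ / at jump times" to "every $t$": the diffusive coefficient is controlled directly by the continuous part of the quadratic variation, but the counting coefficient is only felt by the dynamics at jump instants, so the compensation identity (turning $dN_s$ into $ds$ in expectation) together with the path regularity of $X_t$ is the essential tool. An equivalent route would invoke uniqueness of the decomposition of the zero semimartingale into its continuous finite-variation, continuous-martingale, and purely discontinuous-martingale parts, which isolates the same three coefficients; I would keep the quadratic-variation version as it is the most self-contained.
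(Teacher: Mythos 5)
Your overall strategy mirrors the paper's proof: expand $\tr[OX_t]\equiv 0$ via Ito, kill the stochastic integrands through a quadratic-variation argument, upgrade from almost every $t$ to every $t$ using path regularity, and then read off the drift term. The one structural divergence is legitimate: you use the optional bracket, so the jump coefficient appears against $dN_s$ and you must invoke the compensation formula $\mathbb{E}\left[\int g\, dN_s\right]=\mathbb{E}\left[\int g\, ds\right]$ to pass from the random jump times to Lebesgue-almost-every time, whereas the paper computes the conditional (predictable) quadratic variation, under which both the diffusive and the jump coefficients already appear against $ds$. Your route is slightly longer but you identify and handle the resulting subtlety (the optional bracket only sees the jump integrand at jump times) correctly; also, whether one writes the jump part compensated or not is immaterial here.

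There is, however, a genuine gap: you never address the fact that $\tr[OX_t]$ is complex-valued. The lemma allows arbitrary $O,X_0\in\Bf(\Hc)$, so the coefficients $\tr[O\Gc_D(X_{s-})]$ and $\tr[O(\Kc_C-\Ic)(X_{s-})]$ are complex scalars, and your central claim --- ``both integrals are nonnegative, hence each vanishes identically'' --- is false as stated: the bracket you compute involves squares $\tr[\cdot]^2$ of complex numbers, which are not nonnegative, so cancellation between (or within) the $ds$-term and the $dN_s$-term is not excluded by positivity. The same defect propagates to your expectation step, since the compensation formula is applied to an integrand you assert is nonnegative. This is precisely why the paper, before taking quadratic variations, splits the identity into its real and imaginary parts and runs the argument separately for $\Re$ and $\Im$; an equivalent fix is to take the bracket of the (zero) process against its complex conjugate, which produces $\int|\cdot|^2\,ds+\int|\cdot|^2\,dN_s$ and restores nonnegativity. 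With either correction inserted before your quadratic-variation step, the remainder of your argument (compensation, left-continuity of $s\mapsto X_{s-}$, right limits to recover $X_s$, and the final drift identity) goes through.
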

\begin{proof}
    From the assumption that $\tr[O X_t]=0$, we have
    \[\tr[O X_0] + \int_0^t \tr[O \Lc(X_{s-})]ds + \int_0^{t} \tr[O\Gc_D(X_{s-})]dW_s + \int_{0}^t \tr[O(\Kc_{C}-\Ic)(X_{s-})](dN_s-ds) = 0.\]
    Note that, by assumption $\tr[O X_0]=0$. Then, as the above quantities are complex-valued then we should take the real $\Re$ and $\Im$ imaginary part which yields 
    \[\int_0^t \Re(\tr[O \Lc(X_{s-})])ds + \int_0^{t} \Re(\tr[O\Gc_D(X_{s-})])dW_s + \int_{0}^t \Re(\tr[O(\Kc_{C}-\Ic)(X_{s-})])(dN_s-ds) = 0,\]
    and similarly for the imaginary part. 
    Now, computing the conditional quadratic variation and using Ito's rules we obtain
    \[\int_0^t \Re(\tr[O \Gc_D(X_{s-})])^2 ds + \int_{0}^t \Re(\tr[O(\Kc_{C}-\Ic)(X_{s-})])^2 ds = 0.\]
    Hence
    \[\Re(\tr[O\Gc_{D}(X_t)])=0\qquad\text{and}\qquad \Re(\tr[O\Kc_{C}(X_t)])=0.\] for almost all $t\geq0$. More precisely, since the involved processes are c\`adl\`ag the two equality hold for all $t\geq0$. The same reasoning holds for the imaginary part thus \(\tr[O\Gc_{D}(X_t)]=0\) and \(\tr[O\Kc_{C}(X_t)]=0\), for all $t\geq0$. Finally coming back to the first equation we have 
    \[\int_{0}^t \tr[O\Lc(X_{s-})]ds = 0\] which directly implies $\tr[O\Lc(X_t)] = 0$ for all $t\geq0$ (again using the c\`adl\`ag property).
\end{proof}

We next list the main properties of the non-observable subspace. 
 
\begin{proposition}
        \label{prop:non-observable}
    \label{prop:observable_construction}
    Provided the SDE \eqref{eq:Belavkin-Zakai}, the non-observable subspace  $\Ns\subseteq\Bf(\Hc)$ is the largest operator subspace such that the following properties simultaneously hold: 
    \begin{enumerate}
        \item $\tr[O_jX]=0$ for all $X\in\Ns$ and for all $j=1,\dots,r$, i.e. $\Ns\subseteq\cap_{j=1}^r\{X\in\Bf(\Hc)|\, \tr[O_j X]=0\}$;
        \item It is $\Lc$-invariant, i.e. $\Lc(X)\in\Ns,$ for all $ X\in\Ns$;
        \item It is $\Gc_{D_j}$-invariant, i.e. \(\Gc_{D_j}(X)\in\Ns\), for all $X\in\Ns$, for all $j=1,\dots,p$;
        \item It is $\Kc_{C_j}$-invariant, i.e. \(\Kc_{C_j}(X)\in\Ns\), for all $X\in\Ns$, for all $j=1,\dots,q$.
    \end{enumerate}     

    Furthermore, if we denote by $\Ns^\perp$ the orthogonal complement w.r.t. $\inner{\cdot}{\cdot}_{HS}$, i.e. $\Bf(\Hc)=\Ns\oplus\Ns^\perp$ with $\kappa\equiv\dim(\Ns^\perp)$, and denote by $\Ts$ the super-operator algebra $$\Ts \equiv {\rm Alg}\left(\{\Ic, {\Lc}^*\}\cup\{\Gc_{D_j}^*\}_{j=1}^p \cup\{\Kc_{C_j}^*\}_{j=1}^q \right),$$ closed with respect to linear combinations and composition (i.e. for any $\Ac,\Bc\in\Ts$ and $\alpha,\beta\in\Cb$ we have $\alpha\Ac+\beta\Bc\in\Ts$ and $\Ac\Bc\in\Ts$) we have that 
    \begin{equation}
        \Ns^\perp = \Ts \left[\Span\{O_j\}_{j=1}^r\right]\,,
        \label{eq:observable_diffusive}
    \end{equation}
    where the right hand side of the equation is intended as $\Span\left\{\Ec[O], \, \Ec\in\Ts, \, O\in\Span\{O_j\}_{j=1}^r\right\}$.
\end{proposition}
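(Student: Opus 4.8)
The plan is to reduce the entire statement to the single identity \eqref{eq:observable_diffusive}, namely $\Ns^\perp=\Ms$ with $\Ms\equiv\Ts[\Span\{O_j\}_{j=1}^r]$, from which the four listed properties and maximality follow by standard duality. Indeed, since $\Ic\in\Ts$ we have $\Span\{O_j\}\subseteq\Ms$, so $\Ns=\Ms^\perp\subseteq(\Span\{O_j\})^\perp$, which (using $O_j=O_j^*$) is exactly property 1. Moreover $\Ms$ is by construction invariant under $\Lc^*,\Gc_{D_j}^*$ and $\Kc_{C_j}^*$ --- e.g. for $\Fc\in\Ts$ one has $\Lc^*\Fc(O_j)=(\Lc^*\Fc)(O_j)\in\Ms$ because $\Ts$ is closed under composition --- hence its orthogonal complement $\Ns=\Ms^\perp$ is invariant under the adjoints $(\Lc^*)^*=\Lc$, $(\Gc_{D_j}^*)^*=\Gc_{D_j}$ and $(\Kc_{C_j}^*)^*=\Kc_{C_j}$, giving properties 2--4. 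Finally, maximality is obtained separately: if $\Vs$ satisfies properties 1--4, then invariance under $\Lc,\Gc_{D_j},\Kc_{C_j}$ forces the solution $\mathcal{A}_t^0(X)$ of \eqref{eq:Belavkin-Zakai} to remain in $\Vs$ for every $X\in\Vs$ (by uniqueness of solutions of the linear SDE restricted to the invariant subspace $\Vs$), while property 1 gives $\tr[O_j\mathcal{A}_t^0(X)]=0$ for all $t$; hence $X\in\Ns$ and $\Vs\subseteq\Ns$.

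It remains to establish \eqref{eq:observable_diffusive}, which I would do by proving the two inclusions. For $\Ms\subseteq\Ns^\perp$, take $Y\in\Ns$ and write $X_t\equiv\mathcal{A}_t^0(Y)$, so that $\tr[O_jX_t]=0$ for all $t$ and all $j$. Applying Lemma \ref{lem:trick} (in its general $p,q$ form) gives $\tr[O_j\Lc(X_t)]=\tr[O_j\Gc_{D_i}(X_t)]=\tr[O_j\Kc_{C_k}(X_t)]=0$ for all $t$. The key observation is that each of these can be rewritten, via the trace pairing, as $\tr[\widetilde{O}\,X_t]=0$ for a new test operator $\widetilde{O}$ (the transpose of the generator applied to $O_j$); since the process $X_t$ is unchanged, Lemma \ref{lem:trick} applies again with $\widetilde{O}$ in place of $O_j$. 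Iterating, and using the c\`adl\`ag property to keep every identity valid for all $t$ rather than merely almost everywhere, one obtains $\tr[\Ec(O_j)X_t]=0$ for every superoperator $\Ec$ in the algebra generated by the trace-transposes of $\{\Ic,\Lc,\Gc_{D_i},\Kc_{C_k}\}$. Evaluating at $t=0$, where $X_0=Y$, and converting the trace pairing back to $\inner{\cdot}{\cdot}_{HS}$ using $O_j=O_j^*$, this states precisely that $Y\perp\Ts[\Span\{O_j\}]=\Ms$, i.e. $Y\in\Ms^\perp$.

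For the reverse inclusion $\Ns^\perp\subseteq\Ms$, equivalently $\Ms^\perp\subseteq\Ns$, take $Y\in\Ms^\perp$. As noted above, $\Ms^\perp$ is invariant under $\Lc,\Gc_{D_i},\Kc_{C_k}$, so the solution $X_t=\mathcal{A}_t^0(Y)$ of \eqref{eq:Belavkin-Zakai} started at $Y$ stays in $\Ms^\perp$ for all $t$, again by uniqueness of solutions of the linear SDE restricted to the invariant subspace. Since $\Span\{O_j\}\subseteq\Ms$ gives $\Ms^\perp\subseteq(\Span\{O_j\})^\perp$, we conclude $\tr[O_jX_t]=0$ for all $t$ and $j$, i.e. $Y\in\Ns$. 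Combining the two inclusions yields \eqref{eq:observable_diffusive}.

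The main obstacle, to my mind, is the bookkeeping in the iteration of Lemma \ref{lem:trick}: one must verify that passing the generators across the trace produces \emph{exactly} the adjoint algebra $\Ts$ appearing in \eqref{eq:observable_diffusive}, and the distinction between the bilinear trace-transpose and the Hilbert--Schmidt adjoint (which differ by the conjugation $X\mapsto X^*$) must be resolved using the Hermiticity $O_j=O_j^*$. The second delicate point is justifying that an identity holding for almost every $t$ in fact holds for all $t$, so that the lemma can be re-applied at each step; this is precisely where the c\`adl\`ag regularity of the processes, already exploited in the proof of Lemma \ref{lem:trick}, is essential.
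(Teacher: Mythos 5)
Your proposal is correct, but it is organized in the reverse order from the paper's proof and avoids its most delicate step. The paper first proves properties 2--4 directly: to show, e.g., $\Lc(X)\in\Ns$ for $X\in\Ns$ one must control $\tr[O_j\mathcal A_t^0(\Lc(X))]$, and since $\mathcal A_t^0$ does not commute with $\Lc$, the paper invokes the cocycle property $\mathcal A_{t+s}^0=\mathcal A_{t+s}^s\circ\mathcal A_s^0$, the shift $\theta^s$, independence, and conditional expectations, so as to apply Lemma \ref{lem:trick} with the deterministic averaged operator $\mathbb E_{\mathbb Q}\left[h\cdot(\mathcal A_t^0)^*(O)\right]$; the characterization \eqref{eq:observable_diffusive} is then deduced afterwards, by duality, from the maximality of $\Ns$. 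You instead prove \eqref{eq:observable_diffusive} first: the inclusion $\Ns\subseteq\left(\Ts[\Span\{O_j\}]\right)^\perp$ by iterating Lemma \ref{lem:trick} with purely deterministic test operators (each application moves one more adjoint generator onto $O_j$, and the lemma's ``for all $t$'' conclusion, secured by the c\`adl\`ag argument, re-feeds its own hypothesis), and the reverse inclusion by the algebraic invariance of $\Ts[\Span\{O_j\}]$ combined with strong uniqueness of the linear SDE restricted to an invariant subspace --- the same uniqueness argument the paper itself uses in the proof of Theorem \ref{thm:linear_filter_reduction}, and which also makes rigorous the maximality claim the paper delegates to the literature. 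Properties 1--4 then fall out by orthocomplement duality. The net effect is that you never need the shift-operator/conditional-expectation machinery: your only probabilistic inputs are Lemma \ref{lem:trick} and pathwise uniqueness, which makes the argument more elementary, at the price of obtaining the invariance of $\Ns$ (the system-theoretic heart of the statement) only indirectly, as a corollary of the algebraic identity rather than as a structural fact about the flow. One detail worth tightening is the transpose-versus-adjoint bookkeeping you flag: it resolves even more cleanly than you suggest, since $\Lc$, $\Gc_{D_j}$, $\Kc_{C_j}$, their adjoints and all their compositions are hermiticity-preserving, so trace-transposes coincide with Hilbert--Schmidt adjoints as superoperators, and hermiticity of the $O_j$ then turns trace-orthogonality into HS-orthogonality.
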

\begin{proof}
 By definition, $\Ns$ is contained in $\cap_{j=1}^r\{X\in\Bf(\Hc)|\, \tr[O_j X]=0\}$ since $A_{0}^0=\one$.

We next want to prove that $\Ns$ is $\Lc$-invariant, $\Gc_{D_j}$-invariant for all $j=1,\dots,p$ and $\Kc_{C_j}$-invariant for all $j=1,\dots,q$, that is, for all $X\in\Ns$ we want to show that $\tr[O_j\mathcal A_t^0(\Lc(X))]=0$, $\tr[O_j\mathcal A_t^0(\Gc_{D_j}(X))]=0$, $j=1,\ldots,p$, $\tr[O_j\mathcal A_t^0(\Kc_{C_j}(X))]=0$, $j=1,\ldots,q$  for all $t\geq0$.

Consider then $X\in\Ns$. By definition, $\forall t,s>0$,  we have 
    \begin{eqnarray*}
    0=\tr[O \mathcal A_{t+s}^0(X)]
    =\tr[O \mathcal A_{t+s}^s\circ{\mathcal A_s^0}(X)]
    =\tr[{\mathcal A_{t+s}^s}^* (O){\mathcal A_s^0}(X)].
    \end{eqnarray*} 
Let us then define the shift operator $\theta^t$ acting on $\Omega$ as
$$\theta^t\omega(s)\equiv\omega(t+s)-\omega(t).$$
Now for all measurable functions $h\in\mathcal F$, multiplying by $h\circ \theta^s$, the previous inequality, we have
   $$\tr[{(\mathcal A_{t+s}^s})^* (O){\mathcal A_s^0}(X)]\cdot(h\circ \theta^s)=0,$$
where $\cdot$ is the scalar multiplication.
Now remarking that $\mathcal A_{t+s}^s=\mathcal A_t^0\circ\theta^s$, we have
$$\tr[(h\circ \theta^s)\cdot({\mathcal A_t^0\circ\theta^s})^* (O)  {\mathcal A_s^0}(X)]=0.$$
By noting that the random variable $(h\circ \theta^s)\cdot({\mathcal A_t^0\circ\theta^s})^* (O)$  is independent of $\mathcal F_s$, we have
\begin{eqnarray*}
0&=\,&\mathbb E_{\mathbb Q}\left[\tr[(h\circ \theta^s)\cdot({\mathcal A_t^0\circ\theta^s})^* (O) {\mathcal A_s^0}(X)]\vert\mathcal F_s\right]\\
&=\,&\tr\left[\mathbb E_{\mathbb Q}\left[(h\circ \theta^s)\cdot({\mathcal A_t^0\circ\theta^s})^* (O)\vert \mathcal F_s\right] {\mathcal A_s^0}(X)\right]\\
&=\,&\tr\left[\mathbb E_{\mathbb Q}\left[(h\circ \theta^s)\cdot({\mathcal A_t^0\circ\theta^s})^* (O)\right] {\mathcal A_s^0}(X)\right]\\
&=\,& \tr\left[\mathbb E_{\mathbb Q}\left[h\cdot(({\mathcal A_t^0})^* (O))\right] {\mathcal A_s^0}(X)\right].
\end{eqnarray*}

Then by using Lemma \ref{lem:trick} with $O$ replaced by $\mathbb E_{\mathbb Q}\left[h\cdot({\mathcal A_t^0})^* (O)\right]$, we have for example for $\mathcal L$
$$\tr\left[\mathbb E_{\mathbb Q}\left[h\cdot({\mathcal A_t^0})^* (O)\,\,\mathcal L(X)\right]\right]=0.$$
Therefore for all measurable functions $h$ and all $t\geq 0,$ we have
$$\mathbb E_{\mathbb Q}\left[h\cdot\tr\left[({\mathcal A_t^0})^* (O)\,\,\mathcal L(X)\right]\right]=0.$$
The since $h$ is arbitrary it follows that
$$0=\tr[({\mathcal A_t^0})^* (O)\,\,\mathcal L(X)]=\tr[O\mathcal A_t^0 (\mathcal L(X))],$$
which was the required results. The same holds for the other super-operators.

To prove that $\Ns$ is indeed the largest subspace such that properties 1--4 hold one can recur to a common argument from the system-theoretic literature (see e.g. \cite[Property 2.6.8]{marrobasile} or \cite{kalman,kalman1969topics,wonham,vanhandel2008stabilityquantummarkovfilters}) which we include next for completeness: Assume $\Ws\subseteq\Bf(\Hc)$ is an operator space such that properties 1--4 hold for $\Ws$, then it is easy to prove that $\Ws\subseteq\Ns$. 

Let us now consider $\Ns^\perp$. By definition of $\Ns$ and of orthogonal complement (w.r.t. $\inner{\cdot}{\cdot}_{HS}$) we have that 
\[\Ns^\perp = \Span\{({\mathcal A_t^0})^* (O_j), \, t\geq0,\,\forall j=1,\dots,r \}.\]
By common properties of the orthogonal complement, we have that properties 1--4 imply that $\Ns^\perp$ is the smallest operators subspace such that: \begin{itemize}\item1a) $\Ns^\perp \supseteq \Span\{O_j\}_{j=1}^r$; \item 2a) \(\Ns^\perp\) is $\Lc^*$-invariant; \item 3a) \(\Ns^\perp\) is $\Gc_{D_j}^*$-invariant for all $j=1,\dots,p$; 
\item 4a) \(\Ns^\perp\) is $\Kc_{C_j}^*$-invariant for all $j=1,\dots,q$.
\end{itemize}
Consider then the super-operator algebra $$\Ts \equiv {\rm Alg}\left(\{\Ic, {\Lc}^*\}\cup\{\Gc_{D_j}^*\}_{j=1}^p \cup\{\Kc_{C_j}^*\}_{j=1}^q \right).$$ By properties 2a--4a we then have that $\Ns^\perp$ is invariant under the action of any super-operator contained in $\Ts$, i.e. $\forall X\in\Ns^\perp$, and $\forall\Ac\in\Ts$, $\Ac(X)\in\Ns^\perp$. Combining this with property 1a, the statement naturally follows.
\end{proof}

We can observe that Assumption \ref{ass:identity} directly implies that $\one\in\Ns^\perp$. 
We shall further notice that the superoperator algebra $\Ts$ includes the super-operator Lie algebra $\Ls = {\rm LieAlg}\{\Tilde{\Lc}^*,\Gc_{D_j}^*\}$ closed with respect to linear combination and the operation $[\Ac,\Bc] = \Ac\Bc-\Bc\Ac$ as it was defined in \cite{amini2019estimation}. The reason why we use the super-operator algebra $\Ts$ instead of the Lie algebra $\Ls$ as it is commonly done in bilinear system theory, see e.g. \cite{elliott2009bilinear}, is because we are here interested in the operator {\em space} that is generated by observables of interest evolved in Heisenberg picture instead of the set of observables itself. 

\subsection{Reduced linear filters}

In this subsection we formalize the intuition that $\Ns^\perp$ contains all the necessary degrees of freedom and we can thus restrict the original quantum filter onto it to obtain a reduced filter that correctly reproduces the processes $\{(\Theta_t^j)_{t\geq0}\}_{j=1}^r$. 

More precisely, let $\Ns$ be the non-observable subspace defined in equation \eqref{eq:non_observable_definition} and let $\Ns^\perp$ its orthogonal complement (w.r.t. $\inner{\cdot}{\cdot}_{HS}$), i.e. $\Bf(\Hc)=\Ns\oplus\Ns^\perp$. 
Let then $\Pi_{\Ns^\perp}$ be the orthogonal projector onto $\Ns^\perp$ and let $\Rc_{\Ns^\perp}:\Bf(\Hc)\to\Cb^\kappa$ and $\Jc_{\Ns^\perp}:\Cb^\kappa\to\Ns^\perp$ be full rank factors such that \[\Pi_{\Ns^\perp}=\Jc_{\Ns^\perp}\Rc_{\Ns^\perp}\qquad\text{ and }\qquad\Rc_{\Ns^\perp}\Jc_{\Ns^\perp} = \one_\kappa\in\Cb^{\kappa \times \kappa}\] where $\kappa=\dim(\Ns^\perp)$. 
Notice that the choice of the two factors $\Rc_{\Ns^\perp}$ and $\Jc_{\Ns^\perp}$ of $\Pi_{\Ns^\perp}$ is not unique and what follows works for any choice of the factors. None the less a possible choice of these two factors can be constructed as follows: let $\{E_k\}$ be an orthonormal operator basis for $\Ns^\perp$ and let $\{e_k\}$ be an orthonormal vector basis for $\Cb^\kappa$, then \[\Rc_{\Ns^\perp}(X) = \sum_{k} e_k \inner{E_k}{X}_{HS}\qquad\text{ and }\qquad\Jc_{\Ns^\perp}(x) = \sum_k E_k \inner{e_k}{x}_{\Cb^\kappa}\] for all $x\in\Cb^\kappa$ and where by $\inner{\cdot}{\cdot}_{\Cb^\kappa}$ we intend the standard euclidean inner product for $\Cb^\kappa$.

The restriction of SDE \eqref{eq:Belavkin-Zakai} onto the subspace $\Ns^\perp$ is given by the process $(v_t)\in\Cb^\kappa$ with initial condition $v_0=\Rc_{\Ns^\perp}(\rho_0)$ and evolving through the SDE
\begin{equation}
        dv_t = Q v_{t-} dt + \sum_{j=1}^p G_j v_{t-} dY_t^j + \sum_{j=1}^q (K_j-\one_\kappa) v_{t-} dN_t^j,
    \label{eq:linear_filter}
\end{equation}
where 
\begin{align}
    Q \equiv \Rc_{\Ns^\perp} {\Qc}\Jc_{\Ns^\perp}\in\Cb^{\kappa \times \kappa}, && 
    G_j \equiv \Rc_{\Ns^\perp}\Gc_{D_j}\Jc_{\Ns^\perp}\in\Cb^{\kappa \times \kappa}, && 
    K_j \equiv \Rc_{\Ns^\perp}\Kc_{C_j}\Jc_{\Ns^\perp}\in\Cb^{\kappa \times \kappa}.
    \label{eq:reduced_op_linear}
\end{align}
Notice that the driving increments of the SDEs \eqref{eq:linear_filter} and \eqref{eq:Belavkin-Zakai} are the same: $\{dW_t^j\}_{j=1}^p$ and $\{dN_t^j\}_{j=1}^{q}$. 
The observables can be reduced as well by considering \[\inner{O_j}{\tau_t}_{HS} = \inner{O_j}{\Pi_{\Ns^\perp}(\tau_t)}_{HS} = \inner{\Jc_{\Ns^\perp}^*(O_j)}{\Rc_{\Ns^\perp}(\tau_t)}_{\Cb^\kappa} =  \inner{\zeta_j}{\Rc_{\Ns^\perp}(\tau_t)}_{\Cb^\kappa}\]
where we defined $\zeta_j \equiv \Jc_{\Ns^\perp}^*(O_j)\in\Cb^\kappa$ so that $\Lambda_t^j=\inner{\zeta_j}{v_{t}}_{\Cb^\kappa}$.
We next prove that $\tr[O_j\tau_t] = \inner{\zeta_j}{v_t}_{\Cb^\kappa}$ for all $t\geq0$.
\begin{theorem}
\label{thm:linear_filter_reduction}
    Consider the two processes $(\tau_t)_{t\geq0}$ and $(v_t)_{t\geq0}$ driven by the linear SDEs \eqref{eq:Belavkin-Zakai} and \eqref{eq:linear_filter} with the same output processes $\{(Y_t^j)_{t\geq0}\}_{j=1}^p$ and $\{(N_t^j)_{t\geq0}\}_{j=1}^q$ and with initial conditions $\rho_0$ and $v_0=\Rc_{\Ns^\perp}(\rho_0)$. Then on $(\Omega,\mathcal F,(\mathcal F_t),\mathbb Q),$ we have $\mathbb{Q}$-almost surely:
    \begin{itemize}
        \item $v_t=\Rc_{\Ns^\perp}(\tau_t)$, for all $t\geq0$;
        \item $\tr[O\tau_t] = \inner{\Jc_{\Ns}^\perp(O)}{v_t}_{\Cb^\kappa}$, for all initial conditions, $\rho_0\in\Df(\Hc)$, for all $t\geq0$ and for all $O\in\Ns^\perp$. 
    \end{itemize} 

    Furthermore, the reduced linear filter \eqref{eq:linear_filter} is a linear filter of minimal dimension such that the  conditions above hold.
\end{theorem}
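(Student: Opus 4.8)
The plan is to derive the first claim from pathwise uniqueness for linear SDEs, after establishing the right intertwining relations. First I would record that $\ker\Rc_{\Ns^\perp}=\Ns$: indeed $\Rc_{\Ns^\perp}$ is surjective (because $\Rc_{\Ns^\perp}\Jc_{\Ns^\perp}=\one_\kappa$), so $\dim\ker\Rc_{\Ns^\perp}=n^2-\kappa=\dim\Ns$, and since $\ker\Rc_{\Ns^\perp}\subseteq\ker\Pi_{\Ns^\perp}=\Ns$ the two coincide. Writing any $X=\Pi_{\Ns^\perp}(X)+(\Ic-\Pi_{\Ns^\perp})(X)$, observing that $(\Ic-\Pi_{\Ns^\perp})(X)\in\Ns$, and using that $\Ns$ is $\Gc_{D_j}$-, $\Kc_{C_j}$- and $\Qc$-invariant (Proposition \ref{prop:non-observable}, together with $\Qc=\Lc+\Ic-\sum_j\Kc_{C_j}$ preserving $\Ns$), the $\Ns$-component is annihilated by $\Rc_{\Ns^\perp}$ while $\Pi_{\Ns^\perp}=\Jc_{\Ns^\perp}\Rc_{\Ns^\perp}$ reproduces the reduced matrices of \eqref{eq:reduced_op_linear}. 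This yields the identities $\Rc_{\Ns^\perp}\Qc=Q\,\Rc_{\Ns^\perp}$, $\Rc_{\Ns^\perp}\Gc_{D_j}=G_j\,\Rc_{\Ns^\perp}$, and $\Rc_{\Ns^\perp}\Kc_{C_j}=K_j\,\Rc_{\Ns^\perp}$ as maps $\Bf(\Hc)\to\Cb^\kappa$.

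Applying the linear map $\Rc_{\Ns^\perp}$ to the Belavkin--Zakai SDE \eqref{eq:Belavkin-Zakai} and using these relations shows that $w_t\equiv\Rc_{\Ns^\perp}(\tau_t)$ solves
\[
dw_t=Q\,w_{t-}\,dt+\sum_{j=1}^pG_j\,w_{t-}\,dY_t^j+\sum_{j=1}^q(K_j-\one_\kappa)\,w_{t-}\,dN_t^j ,
\]
which is exactly \eqref{eq:linear_filter}, with $w_0=\Rc_{\Ns^\perp}(\rho_0)=v_0$. Since \eqref{eq:linear_filter} has constant (hence globally Lipschitz) coefficients and is driven by the same Brownian motions and Poisson processes, pathwise uniqueness forces $v_t=w_t=\Rc_{\Ns^\perp}(\tau_t)$ $\Qb$-a.s.\ for all $t\ge0$, the c\`adl\`ag property upgrading a.s.-for-each-$t$ to a.s.-for-all-$t$. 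The second bullet is then immediate: for $O\in\Ns^\perp$ the orthogonal projector $\Pi_{\Ns^\perp}$ is self-adjoint and fixes $O$, so $\tr[O\tau_t]=\inner{O}{\Pi_{\Ns^\perp}(\tau_t)}_{HS}=\inner{\Jc_{\Ns^\perp}^*(O)}{\Rc_{\Ns^\perp}(\tau_t)}_{\Cb^\kappa}=\inner{\Jc_{\Ns^\perp}^*(O)}{v_t}_{\Cb^\kappa}$, matching the computation displayed just before the theorem.

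For minimality I would run a rank/observability bound against an arbitrary competing linear filter. Suppose a state $z_t\in\Cb^d$ is driven by the same inputs, initialized through a linear map $z_0=R(\rho_0)$, and reproduces every output $\tr[O_j\tau_t]$ for all $\rho_0\in\Df(\Hc)$, all $t\ge0$ and all $j$. Since $\mathrm{span}_\Cb\Df(\Hc)=\Bf(\Hc)$, linearity extends both $R$ and the reproduction property to all $\rho_0\in\Bf(\Hc)$. Now take two initial conditions with $R(\rho_0)=R(\rho_0')$: the competing filter then has identical initial state and identical driving noise, so its trajectory, and hence its outputs, coincide pathwise; comparing with the true outputs gives $\tr[O_j\mathcal A_t^0(\rho_0-\rho_0')]=0$ for all $t,j$, i.e.\ $\rho_0-\rho_0'\in\Ns$. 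Hence $\ker R\subseteq\Ns$, so $\mathrm{rank}\,R\ge n^2-\dim\Ns=\kappa$ and therefore $d\ge\kappa$; as the constructed filter attains $d=\kappa$, it is minimal.

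The intertwining identities and the self-adjointness computation are routine once $\ker\Rc_{\Ns^\perp}=\Ns$ and the invariance from Proposition \ref{prop:non-observable} are in hand. The step I expect to require the most care, and the main obstacle, is the minimality bound: it must be argued at the level of the \emph{stochastic} input--output map (equal initialization plus a common noise path forcing identical trajectories, rather than a single-time comparison), and one must justify the passage from the physical set $\Df(\Hc)$ to all of $\Bf(\Hc)$ so that $\ker R\subseteq\Ns$ is a genuine statement about complex subspaces yielding the dimension count.
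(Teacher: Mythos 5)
Your proposal is correct and follows essentially the same route as the paper's proof: invariance of $\Ns$ (Proposition \ref{prop:non-observable}) yields the intertwining/projection identities, applying $\Rc_{\Ns^\perp}$ to \eqref{eq:Belavkin-Zakai} plus pathwise uniqueness gives $v_t=\Rc_{\Ns^\perp}(\tau_t)$, and the output identity follows from self-adjointness of $\Pi_{\Ns^\perp}=\Jc_{\Ns^\perp}\Rc_{\Ns^\perp}$. Your minimality argument is the same kernel mechanism as the paper's (equal reduced initializations plus a common noise path force equal outputs, so $\ker R\subseteq\Ns$ by the defining property of $\Ns$, after extending from $\Df(\Hc)$ to $\Bf(\Hc)$ by linearity), merely stated as a direct rank bound rather than the paper's contradiction built from the perturbation $\tau_0+X_0$ of the maximally mixed state.
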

\begin{proof}
    Let us start by recalling that, by Proposition \ref{prop:non-observable}, we have that $\Ns$ is $\Lc$-,  $\Gc_{D_j}$- and $\Kc_{C_j}$-invariant. As a consequence, \begin{eqnarray*}\Rc_{\Ns^\perp}  \Lc &=\,& \Rc_{\Ns^\perp}  \Lc  (\Pi_{\Ns^\perp} + \Pi_\Ns)\\ &=\,& \Rc_{\Ns^\perp}  \Lc  \Pi_{\Ns^\perp} + \Rc_{\Ns^\perp}  \Lc  \Pi_{\Ns}\\ &=\,& \Rc_{\Ns^\perp}  \Lc  \Pi_{\Ns^\perp} + \cancel{\Rc_{\Ns^\perp}  \Pi_{\Ns}}  \Lc  \Pi_{\Ns}\\& =\,& \Rc_{\Ns^\perp}  \Lc  \Pi_{\Ns^\perp}
    \end{eqnarray*}where $\Pi_\Ns$ is the orthogonal projector onto $\Ns$ and we used the fact that $\Ns\in\ker\Pi_{\Ns^\perp}$ and $\Rc_{\Ns^\perp}$ is a full rank factor of $\Pi_{\Ns^\perp}$ hence $\Rc_{\Ns^\perp}\Pi_\Ns = 0$. Similarly, $\Rc_{\Ns^\perp}\Gc_{D_j} = \Rc_{\Ns^\perp}\Gc_{D_j}\Pi_{\Ns^\perp}$ and $\Rc_{\Ns^\perp}\Kc_{C_j} = \Rc_{\Ns^\perp}\Kc_{C_j}\Pi_{\Ns^\perp}$.
    
    Recalling than that 
    \[\tau_t = \tau_0 + \int_{0}^t \Qc(\tau_{s-}) ds + \sum_{j=1}^p \int_{0}^t \Gc_{D_j}(\tau_{s-}) dY_s^j + \sum_{j=1}^q \int_{0}^t (\Kc_{C_j}-\Ic)(\tau_{s-}) dN_s^j \]
    and defining $\Tilde{v}_t = \Rc_{\Ns^\perp}(\tau_t)$ we have 
    \begin{align*}
        \Tilde{v}_t =\,& \underbrace{\Rc_{\Ns^\perp}(\tau_0)}_{{v}_0}  + \int_0^t \Rc_{\Ns^\perp}  \Qc (\tau_{s-}) d_s + \sum_{j=1}^p \int_{0}^t \Rc_{\Ns^\perp}\Gc_{D_j}(\tau_{s-}) dY_s^j + \sum_{j=1}^q \int_{0}^t \Rc_{\Ns^\perp}(\Kc_{C_j}-\Ic)(\tau_{s-}) dN_s^j\\
        =\,& {v}_0 + \int_0^t \underbrace{\Rc_{\Ns^\perp}  \Qc  \Jc_{\Ns^\perp}}_{Q}  \underbrace{\Rc_{\Ns^\perp} (\tau_{s-})}_{\Tilde{v}_{s-}} d_s + \sum_{j=1}^p \int_{0}^t \underbrace{\Rc_{\Ns^\perp}\Gc_{D_j}  \Jc_{\Ns^\perp}}_{G_j}  \underbrace{\Rc_{\Ns^\perp} (\tau_{s-})}_{\Tilde{v}_{s-}} dY_s^j + \\
        &+ \sum_{j=1}^q \int_{0}^t \left(\underbrace{\Rc_{\Ns^\perp}\Kc_{C_j}\Jc_{\Ns^\perp}}_{K_j}-\underbrace{\Rc_{\Ns^\perp}\Jc_{\Ns^\perp}}_{\one_\kappa}\right)\underbrace{\Rc_{\Ns^\perp}(\tau_{s-})}_{\Tilde{v}_{s-}} dN_s^j\\
        =\,& {v}_0 + \int_0^t {Q} {\Tilde{v}_{s-}} d_s + \sum_{j=1}^q \int_{0}^t {G_j} {\Tilde{v}_{s-}} dY_s^j + \sum_{j=1}^q \int_{0}^t ({K_j}-\one_\kappa){\Tilde{v}_{s-}} dN_s^j
    \end{align*}
    and thus \(v_t=\Tilde{v}_t=\Rc_{\Ns^\perp}(\tau_t)\), for all $t\geq0$ $\mathbb{Q}$-almost surely (this comes from the uniqueness of strong solution of involved SDEs \cite[Section V.3]{protter2005stochastic}). 
    We can then notice that, we have for all $O\in\Ns^\perp$  for all $t\geq 0$ 
    \[\tr[O \tau_t] = \tr[O \Pi_{\Ns^\perp}(\tau_t)] = \tr[O \Jc_{\Ns^\perp}\Rc_{\Ns^\perp}(\tau_t)] = \tr[\Jc_{\Ns^\perp}^*(O) \Rc_{\Ns^\perp}(\tau_t)] = \inner{\Jc_{\Ns^\perp}^*(O)}{v_t}_{\Cb^\kappa}.\]

To prove the minimality of the reduced filter we proceed by contradiction. 
Assume there exists an operator space $\Ws\subsetneq\Bf(\Hc)$  such that $\dim(\Ws)<\dim(\Ns^\perp)$ and allows for an exact model reduction. That is, let the orthogonal projector $\Pi_\Ws$ on $\Ws$ to be factorized as $\Pi_\Ws = \Jc_\Ws\Rc_\Ws$ 
such that, with $w_0 = \Rc_{\Ws}(\tau_0)$ and $w_t$ obtained from the dynamics reduced on $\Ws,$ we have a valid reduction such that \begin{equation}\label{eq:redx1}\tr[O^*\tau_t] = \inner{\Jc_\Ws^*(O)}{w_t}_{\Cb^s},\end{equation} for all $\tau_0\in\Df(\Hc)$, all $O\in\Ns^\perp$ and for all time $t\geq0$. Since \eqref{eq:redx1} holds for all density operators, by linearity it also holds for any $X_0\in\Bf(\Hc)$ as density operators generate the full operator space.

Since we assumed that $\dim(\Ws)<\dim(\Ns^\perp)$ we have that $\dim(\ker\Rc_\Ws)>\dim(\Bf(\Hc))-\dim(\Ns^\perp)$ and $\ker\Rs_\Ws\cap\Ns^\perp \neq\{0\}$. Therefore, there exists $X_0\neq 0 \in \Ns^\perp,$  such that $\Rc_{\Ws}(X_0)=0.$
Consider as initial condition $\tau_0=\frac{\one}{\tr{(\one )}}.$ Given any realization of the model noises, the output generated by the reduced model on $\Ws$ with initial condition $w_0=\Rc_\Ws(\tau_0)$ is identical to that generated by the initial condition \(\tilde w_0=\Rc_\Ws(\tilde\tau_0)\) with \[ \tilde\tau_0=\tau_0+X_0,\]
since $X_0\in\ker(\Rc_\Ws).$
Notice that despite $\tilde \tau_0$ does not correspond to a density operator, the linear reductions must still work as we noted above.
On the other hand, Proposition \ref{prop:non-observable}  ensures that $\tr[O_j^* X_t] = 0$ for all $j$ and $t\geq 0$ if and only if $X_0\in \Ns,$ where $(X_t)$ is the solution of linear evolution corresponding to initial condition $X_0$. Since we chose $X_0\in\Ns^\perp,$ there must exist a $j$ such that $\tr[O_j^* X_t]\neq 0$ for some $t\geq 0.$ 
Thus, the outputs of the original model differ when considering  initial conditions $\tau_0$ and $\tilde\tau_0.$ 
This contradicts the fact that the evolution on $\Ws$ were an exact reduction of the original model.
\end{proof}

With the previous theorem we showed that the reduced process $(v_t)_{t\geq0}$ reproduces the processes $\tr[O_j \tau_t]$. However we are actually interested in reproducing the processes $\tr[O_j \rho_t]$ which can be retrieved by a-posteriori re-normalization.
\begin{corollary}
\label{cor:normalization}
    Under the assumptions of Theorem \ref{thm:linear_filter_reduction} and assuming $\one\in\Ns^\perp$ we have  $$\tr[O_j \rho_t] = \frac{\inner{\zeta_j}{v_t}}{\inner{e}{v_t}},$$ for all $t\geq 0$ and all $j=1,\ldots,r$, where we recall that $\zeta_j \equiv \Jc_{\Ns^\perp}^*(O_j)\in\Cb^{f}$ and where we put $e\equiv\Jc_{\Ns^\perp}^*(\one)$.
\end{corollary}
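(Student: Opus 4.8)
The plan is to obtain the result as an almost immediate consequence of Theorem~\ref{thm:linear_filter_reduction}, applied twice: once to the observable $O_j$ and once to the identity $\one$. First I would recall that the normalized and unnormalized states are tied together by $\rho_t = \tau_t/\tr(\tau_t)$, so that for each $j$
\[
\tr[O_j\rho_t] = \frac{\tr[O_j\tau_t]}{\tr[\tau_t]},
\]
where the denominator is understood as $\tr[\tau_t]=\tr[\one\cdot\tau_t]$. The quotient is well defined because $\tr(\tau_t)$ is strictly positive $\mathbb P^{\rho_0}$-almost surely (indeed $\rho_t$ was defined precisely as $\tau_t/\tr(\tau_t)$, and $(\tr(\tau_t))_{t\geq0}$ is a strictly positive martingale under $\mathbb Q$ with $\Eb[\tr(\tau_t)]=1$).

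Next I would observe that both the numerator and the denominator are of the form $\tr[O\tau_t]$ with $O\in\Ns^\perp$, which is exactly the quantity that the second item of Theorem~\ref{thm:linear_filter_reduction} rewrites in terms of the reduced state $v_t$. For the numerator, since $O_j\in\Span\{O_k\}_{k=1}^r\subseteq\Ns^\perp$ (this inclusion is property 1a established in the proof of Proposition~\ref{prop:non-observable}), Theorem~\ref{thm:linear_filter_reduction} gives, $\mathbb Q$-almost surely,
\[
\tr[O_j\tau_t] = \inner{\Jc_{\Ns^\perp}^*(O_j)}{v_t}_{\Cb^\kappa} = \inner{\zeta_j}{v_t}_{\Cb^\kappa},
\]
using the definition $\zeta_j\equiv\Jc_{\Ns^\perp}^*(O_j)$.

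For the denominator I would invoke the standing hypothesis $\one\in\Ns^\perp$, which, as remarked in the text just before the corollary, follows from Assumption~\ref{ass:identity}. Applying the same item of Theorem~\ref{thm:linear_filter_reduction} now to $O=\one\in\Ns^\perp$ yields
\[
\tr[\tau_t] = \tr[\one\cdot\tau_t] = \inner{\Jc_{\Ns^\perp}^*(\one)}{v_t}_{\Cb^\kappa} = \inner{e}{v_t}_{\Cb^\kappa},
\]
with $e\equiv\Jc_{\Ns^\perp}^*(\one)$. Substituting the two displayed expressions into the ratio gives $\tr[O_j\rho_t]=\inner{\zeta_j}{v_t}/\inner{e}{v_t}$ for all $t\geq0$ and all $j$, as claimed. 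The almost-sure equalities established under $\mathbb Q$ transfer to $\mathbb P^{\rho_0}$ by absolute continuity of $\mathbb P^{\rho_0}$ with respect to $\mathbb Q$ on each $\mathcal F_t$.

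I do not expect any genuine obstacle here: the entire mathematical content is carried by Theorem~\ref{thm:linear_filter_reduction}, and the corollary is essentially the observation that an a-posteriori renormalization by $\tr(\tau_t)$ commutes with the reduction because $\one$ itself lies in $\Ns^\perp$. The only point meriting a word of care is the well-posedness of the quotient, i.e.\ that $\inner{e}{v_t}_{\Cb^\kappa}=\tr(\tau_t)$ is almost surely nonzero, which I would address exactly as above via the strict positivity of the martingale $\tr(\tau_t)$ under $\mathbb P^{\rho_0}$.
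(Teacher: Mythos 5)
Your proposal is correct and follows essentially the same route as the paper's own proof: both reduce the claim to the identity $\tr[X\tau_t]=\inner{\Jc_{\Ns^\perp}^*(X)}{v_t}_{\Cb^\kappa}$ for $X\in\Ns^\perp$ (applied to $O_j$ and to $\one$), combined with $\tr[O_j\rho_t]=\tr[O_j\tau_t]/\tr[\tau_t]$. Your added remarks on the strict positivity of $\tr(\tau_t)$ and the transfer of almost-sure statements from $\mathbb{Q}$ to $\Pb^{\rho_0}$ are sensible points of care that the paper leaves implicit, but they do not change the substance of the argument.
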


\begin{proof}
    First we shall notice that, for all $X\in\Ns^\perp$, we have \[\inner{X}{\tau_t}_{HS} = \tr[X\tau_t] = \tr[X \Pi_{\Ns^\perp}(\tau_t)] = \tr[\Jc^*_{\Ns^\perp}(X) \Rc_{\Ns^\perp}(\tau_t)] = (\Jc^*_{\Ns^\perp}(X))^* v_t = \inner{\Jc^*_{\Ns^\perp}(X)}{v_t}_{\Cb^\kappa}.\] The rest follows from the fact that $\one\in\Ns^\perp$ and $\tr[O_j\rho_t] = \frac{\tr[O_j\tau_t]}{\tr[\tau_t]}$. 
\end{proof}

This Corollary implies that, in order to reproduce the processes $\{(\Theta_t^j)_{t\geq0}\}_{j=1}^r$ using the reduced process $(v_t)_{t\geq0}$, it is sufficient to ensure that $\one\in\Ns^\perp$. As we saw in Lemma \ref{prop:observable_construction}, the easiest way to ensure this holds, is to assume that $\one\in\{O_j\}_{j=1}^r$, that is Assumption \ref{ass:identity}. This allows us to ensure that, under Assumption \ref{ass:identity}, the reduced filter $\Sigma_L$ is capable of reproducing the processes $\{(\Theta_t^j)_{t\geq0}\}_{j=1}^r$.
 
\section{Reduced quantum filters}
\label{sec:reduced_quantum_filters}
We can notice that Theorem \ref{thm:linear_filter_reduction} provides a partial solution to the model reduction problem presented in the Introduction. In fact, if one is only interested in finding a linear filter that correctly reproduces the expectation values of the observables $\{O_j\}_{j=1}^r$ then one can find a minimal one by choosing as state $v_t=\Rc_{\Ns^\perp}(\tau_t)$. 
The main limitation of this approach is the fact that the reduced filter described by equation \eqref{eq:linear_filter}  does not correspond to a valid SME, i.e. if $v_0$ is not properly initialized, the output one obtains might be non-physical, in the sense that they might not be replicated by any evolution in the density operator set, and the resulting equation is hard to interpret.

In order to ensure that the reduced filter is a valid quantum model we leverage the algebraic framework developed in recent works, \cite{tit2023, letter2024, prxq2024}. We next collect and summarize the results that are necessary to continue for the reader's convenience.

\subsection{Properties of $*$-algebras and conditional expectations}
We here consider operator *-algebras  $\As\subseteq\Bf(\Hc),$ that in this finite dimensional settings, are operator subspaces closed with respect to the standard matrix product and adjoint, i.e. for all $X,Y\in\As$ then $\alpha X+\beta Y\in \As$ for all $\alpha,\beta\in\Cb$, $XY\in\As$, and $X^*, Y^* \in\As$ \cite{blackadar2006operator}. An algebra $\As$ is unital if it contains the identity operator, i.e. $\one \in\As$. A key result regarding the structure  of *-algebras is known as Wedderburn decomposition \cite{wedderburn1908hypercomplex}. 
Given a unital algebra $\As\subset\Bf(\Hc)$, there exists a decomposition of the Hilbert space
\[\Hc = \bigoplus_{k=1}^K \Hc_{F,k}\otimes\Hc_{G,k} \]
for some $K\in\Nb$, 
and there exists a unitary operator $U\in\Bf(\Hc)$, $U U^* = \one$ that decomposes the algebra $\As$:
\begin{align}
    \As &= U\left(\bigoplus_{k=1}^K \Bf(\Hc_{F,k}) \otimes \one_{G,k}\right)U^*\label{eq:wedderburn}.
\end{align}

A {\em conditional expectation} onto a unital $*$-algebra $\As$, $\CE_\As:\Bf(\Hc)\to\Bf(\Hc)$ with ${\rm Im}\CE_\As = \As$, is a completely positive projector, i.e. $\CE_\As[X] = X$ for all $X\in\As$ and $\CE_\As[X^* X]\geq 0$ for all $X\in\Bf(\Hc)$ \cite{blackadar2006operator}.
Note, that while idempotent, a conditional expectation need not be self-adjoint, and in general is thus not an orthogonal projection. The dual (w.r.t. $\inner{\cdot}{\cdot}_{HS}$) of a conditional expectation $\CE_\As^*$ is called a \textit{state extension} and is a CPTP projector onto its image. Conditional expectations also assume a specific structure related to the Wedderburn decomposition \eqref{eq:wedderburn}. In particular, there exists a set of full-rank density operators $\{\sigma_k\in\Df(\mathcal{H}_{G,k})\}$ such that, for all $X\in\mathcal{B(H)}$
    \begin{align}
         \CE_\mathscr{A}[X] 
         = U\left(\bigoplus_{k=1}^{K} \tr_{\Hc_{G,k}}\left[(V_k^* X V_k)(\one_{F,k}\otimes\sigma_k)\right]\otimes\one_{G,k}\right)U^*
        \label{eqn:cond_exp_blocks}
    \end{align}
where $V_k$ are non-square isometries from $\Hc_{F,k}\otimes\Hc_{G,k}$ to $\Hc$ and such that $V_k V_k^*=\one_{F,k}\otimes \one_{G,k}$, see, e.g. \cite{wolf2012quantum}.

Trough the Wedderburn decomposition of $\As$ one can further reduce the representation of $\As$ by avoiding the repeated blocks created by the tensor products $\otimes\one_{G,k}$. More precisely, given a unital algebra $\As= U\left(\bigoplus_{k=1}^K \Bf(\Hc_{F,k}) \otimes \one_{G,k}\right)U^*$ 
we can observe that it is isomorphic to $\check{\As} = \bigoplus_{k=1}^K \Bf(\Hc_{F,k})$, $\check{\As}\subseteq \Cb^{m\times m}$ with $m=\sum_{k=1}^K \dim(\Hc_{F,k})$. 
As proven in \cite[Theorem 1]{tit2023}, the dual of a conditional expectation $\CE_\As^*$ can be factorized in two CPTP factors $\Rc_{\As}:\Bf(\Hc)\to\check{\As}$ and $\Jc_{\As}: \check{\As}\to\Bf(\Hc)$ with ${\rm Im}\Jc_{\As} = {\rm Im}\CE_\As^*$ such that $\CE_\As^* = \Jc_{\As}\Rc_{\As}$ and $\Rc_{\As}\Jc_{\As} = \Ic_{\check{\As}}$.
Explicitly, one can derive the block-structure of the CPTP  linear maps of interest to be 
\begin{equation}
\begin{split}
    \Rc_{\As}(X) &=\bigoplus_{k=1}^K \tr_{\Hc_{G,k}}(V_k^* X V_k)=\bigoplus_{k=1}^K X_{F,k}=\check X,\\
    \Jc_{\As}(\check X)&=U\left(\bigoplus_{k=1}^K X_{F,k} \otimes\sigma_{k}\right)U^*.
\end{split}
    \label{eq:cond_exp_factorization}
\end{equation}

The main reason why we are interested in conditional expectations, their duals, and their factorizations, is the following. Consider a CPTP map $\Ac:\Bf(\Hc)\to\Bf(\Hc)$, an algebra $\As$ and a conditional expectation $\CE_\As$. Then the restriction of $\Ac$ onto ${\rm Im}(\CE_\As^*)$ is $\check{\Ac} = \Rc_{\As}  \Ac  \Jc_{\As}$ with $\check{\Ac}:\check{\As}\to\check{\As}$ and most importantly is CPTP since $\Rc_{\As}$, $\Jc_{\As}$ and $\Ac$ are CPTP. A similar result, proven in \cite{prxq2024}, holds for Lindblad generators and we report the statement here for the reader's convenience.

\begin{theorem}[\cite{prxq2024}]
\label{thm:Lindblad reduction}
 Consider a $*$-subalgebra $\mathscr{A}$ of $\mathcal{B(H)}.$ 
 Let then $\Rc_\As$ and $\Jc_\As$ be the CPTP factorization of $\CE_\mathscr{A}^* = \es_\As\rs_\As$ as defined in equation \eqref{eq:cond_exp_factorization}. 
 Then for any Lindblad generator $\Lc(\cdot)$, its reduction onto ${\rm Im}(\CE_\As^*)$, $\check{\Lc}(\cdot)\equiv\Rc_\As\Lc\Jc_{\As}(\cdot)$ is also a Lindblad generator, i.e. $\check{\Lc}:\check{\As}\to\check{\As}$ and $\{e^{\check{\Lc}t}\}_{t\geq 0}$ is a quantum dynamical semigroup. 
\end{theorem}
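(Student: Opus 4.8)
The plan is to invoke the fundamental characterization of Lindblad generators (the GKLS theorem): a linear map on a finite-dimensional operator algebra is a Lindblad generator if and only if the one-parameter family it generates is a quantum dynamical semigroup, i.e. $e^{\check{\Lc} t}$ is CPTP for every $t\ge 0$ (the semigroup property and continuity are automatic in finite dimension). Since $\check{\Lc}=\Rc_\As\Lc\Jc_\As$ maps $\check{\As}$ into $\check{\As}$, it generates a semigroup on $\check{\As}$, so the whole content of the theorem reduces to showing that each $e^{\check{\Lc} t}$ is completely positive and trace preserving. The naive attempt — to write $e^{\check{\Lc} t}$ as a composition of the CPTP factors $\Rc_\As,\,e^{\Lc t},\,\Jc_\As$ — fails, because $\Jc_\As\Rc_\As=\CE_\As^*\neq\Ic$, so $e^{\check{\Lc} t}\neq \Rc_\As e^{\Lc t}\Jc_\As$ in general; and since $\Lc$ is not itself CP, the first-order Euler map $\Ic+\frac{t}{n}\Lc$ is of no direct use either. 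The key is to interpose an auxiliary family of genuinely CPTP maps.

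To this end I would introduce, for $s\ge 0$, the family $\Phi_s\equiv\Rc_\As\, e^{\Lc s}\,\Jc_\As:\check{\As}\to\check{\As}$. Each $\Phi_s$ is CPTP, being a composition of the CPTP maps $\Jc_\As$, the semigroup element $e^{\Lc s}$, and $\Rc_\As$. Moreover $\Phi_0=\Rc_\As\Jc_\As=\Ic_{\check{\As}}$ by the factorization property, and differentiating at $s=0$ gives $\frac{d}{ds}\Phi_s\big|_{s=0}=\Rc_\As\Lc\Jc_\As=\check{\Lc}$, so $\Phi_s=\Ic_{\check{\As}}+s\,\check{\Lc}+o(s)$. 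Thus $\{\Phi_s\}$ is a family of CPTP maps agreeing with the true semigroup $\{e^{\check{\Lc} s}\}$ to first order at the origin, even though (again because $\CE_\As^*\neq\Ic$) it is not itself a semigroup.

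The final step is a standard Euler / Lie–Trotter product-formula argument, valid in finite dimension: since $\Phi_0=\Ic$ and $\Phi$ is differentiable at $0$ with derivative $\check{\Lc}$, one has $e^{\check{\Lc} t}=\lim_{n\to\infty}\big(\Phi_{t/n}\big)^n$ for every $t\ge 0$. For each $n$ the map $(\Phi_{t/n})^n$ is CPTP, being an $n$-fold composition of CPTP maps; as the set of CPTP maps on a finite-dimensional space is closed, the limit $e^{\check{\Lc} t}$ is CPTP as well. Hence $\{e^{\check{\Lc} t}\}_{t\ge 0}$ is a quantum dynamical semigroup on $\check{\As}$, and by the GKLS characterization $\check{\Lc}$ is a Lindblad generator, as claimed. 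As a consistency check, trace preservation can also be read off directly — $\Rc_\As$ is trace preserving and $\Lc$ is trace annihilating, so $\tr[\check{\Lc}(\check{X})]=\tr[\Lc\Jc_\As(\check{X})]=0$ — and Hermiticity preservation follows since $\Rc_\As,\Jc_\As$ and $\Lc$ all preserve adjoints.

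The main obstacle is precisely the non-idempotence $\Jc_\As\Rc_\As=\CE_\As^*\neq\Ic$, which blocks every attempt to realize the reduced semigroup as a single composition of CP maps; the resolution — recovering the semigroup as a limit of compositions of the one-step CPTP maps $\Phi_{t/n}$, which match $\check{\Lc}$ only infinitesimally — is the crux. A minor technical point is that $\check{\As}=\bigoplus_k\Bf(\Hc_{F,k})$ is a direct sum of full matrix algebras rather than a single $\Bf(\cdot)$; this is handled by viewing $\check{\As}$ as the block-diagonal subalgebra of $\Bf(\check{\Hc})$ with $\check{\Hc}=\bigoplus_k\Hc_{F,k}$, on which the GKLS characterization applies verbatim.
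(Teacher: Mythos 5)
Your proposal is correct, but it proves the theorem by a genuinely different route than the one the paper relies on. The paper imports this result from \cite{prxq2024}, and the approach there --- echoed in this paper's Appendix \ref{sec:noise_operators_reduction} and Proposition \ref{prop:operator_reduction} --- is \emph{constructive}: it uses the Wedderburn decomposition of $\As$ and the explicit block form \eqref{eq:cond_exp_factorization} of $\Rc_\As$ and $\Jc_\As$ to reduce each term of $\Lc$ separately, showing $\Rc_\As[H,\Jc_\As(\cdot)]=[\Jc_\As^*(H),\cdot]$ and $\Rc_\As\Dc_L\Jc_\As=\sum_{\check L\in\Xc(L)}\Dc_{\check L}$, so that $\check{\Lc}$ is exhibited directly in GKLS form with explicit reduced Hamiltonian and noise operators. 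Your argument instead is soft: the interpolating family $\Phi_s=\Rc_\As e^{\Lc s}\Jc_\As$ is CPTP, satisfies $\Phi_0=\Ic_{\check\As}$ and $\Phi_s=\Ic_{\check\As}+s\check{\Lc}+O(s^2)$, and the finite-dimensional Chernoff/Euler product formula gives $e^{\check{\Lc}t}=\lim_n(\Phi_{t/n})^n$, a limit of CPTP maps, hence CPTP by closedness of that set. This is valid (the telescoping estimate behind the product formula goes through since $\|\Phi_{t/n}\|^n$ stays bounded), and it correctly identifies and circumvents the genuine obstacle $\Jc_\As\Rc_\As=\CE_\As^*\neq\Ic$. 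What each approach buys: yours is shorter and more general --- it uses nothing about $\As$ beyond CPTP-ness of the factors and $\Rc_\As\Jc_\As=\Ic_{\check\As}$, so it would apply to any CPTP factorization of a projection --- whereas the paper's constructive route additionally delivers the reduced operators $\check H$, $\{\check L_k\}$, $\{\check D_j\}$, $\{\check C_{j,k}\}$, which are indispensable downstream for writing the reduced SME \eqref{eq:reduced_quantum_filter} and \eqref{eq:reduced_SME}.

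One caveat on your last step: the classical GKLS characterization is stated for semigroups on a full matrix algebra $\Bf(\Hc)$, while your semigroup lives only on the proper subalgebra $\check\As=\bigoplus_k\Bf(\Hc_{F,k})\subsetneq\Bf(\check\Hc)$; it does not apply ``verbatim'' there, since extending the semigroup to all of $\Bf(\check\Hc)$ in a CPTP way is not immediate (the natural extension $e^{\check{\Lc}t}\circ\Pc$ by the pinching $\Pc$ onto $\check\As$ is a CPTP semigroup but fails $T_0=\Ic$). To extract an explicit Hamiltonian-plus-dissipator form from your abstract argument one needs the C*-algebraic version of the characterization (conditional complete positivity, Evans--Lewis) or the paper's constructive computation. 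However, since the theorem's ``i.e.'' takes the defining content of ``Lindblad generator'' to be precisely that $\check{\Lc}:\check\As\to\check\As$ and $\{e^{\check{\Lc}t}\}_{t\geq0}$ is a quantum dynamical semigroup, your proof establishes the statement as written.
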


Furthermore, any unital algebra $\As$ admits a CPTP and a unital conditional expectation $\CE_\As$ which is an orthogonal projector onto $\As$, i.e. $\CE_\As^2 = \CE_\As = \CE_\As^*$ and $\CE_\As(\one) = \one$. For orthogonal conditional expectations, the representation \eqref{eqn:cond_exp_blocks} holds with $\sigma_k = \frac{\one_{G_k}}{\dim(\Hc_{G,k})}$. In the rest of this work we will only focus on orthogonal conditional expectations.

The main idea that we use in the next subsection, is to use this property to ensure that the reduced model is a valid quantum model, i.e. $\check{\Lc}$ is a Lindblad generator. In particular we compute an algebra $\As$ that contains $\Ns^\perp$, i.e. $\As\supseteq\Ns^\perp$, and define $\CE_\As$ to be the orthogonal conditional expectation onto $\As$. We then use its factor $\Rc_{\As}$ and $\Jc_{\As}$ to compute the reduced model. Notice that we can pick $\As$ to be the smallest algebra that contains $\Ns^\perp$, i.e. $\As=\alg(\Ns^\perp)$ in case we are interested in the smallest model but we could also consider larger algebras if for example we want to preserve other properties of the model. 
Moreover, from Assumption \ref{ass:identity}, we have $\one\in\Ns^\perp$, hence any algebra $\As$ such that $\As\supseteq \Ns^\perp$, including $\alg(\Ns^\perp)$, is unital and we will thus work under this assumption in the following.

The relation between the dimensions of the involved spaces is clearly $\dim\Ns^\perp\leq\dim\As\leq\dim\Bf(\Hc),$ while the dimension of the smallest algebra that contains the compact representation of the algebra $\check{\As}\subseteq \Cb^{m\times m}$ is entirely determined by the Wedderburn decomposition, since it depends on the number of blocks and their dimensions. For this reason, $m$ cannot be determined a-priori.

Notice that, while considering a conditional expectation $\CE_\As$ onto a unital algebra is sufficient to ensure that the reduced generator $\check{\Lc}$ is Lindblad, this is not necessary. It is thus, in principle, possible to find a reduced Lindblad generator without using conditional expectations and their duals, but to the best of the author's knowledge necessary conditions for a reduced generator to be Lindblad are not known. 

\begin{remark}
    Proving that $\alg(\Ns^\perp)$ is the smallest operator space that admits the existence of (the dual of) a conditional expectation is non trivial. While in classical probability theory the existence of a conditional expectation fixing any state is always guaranteed, for quantum systems this need not be the case. Necessary and sufficient conditions for this to be the case are provided in Takesaki's modular theory and its specializations to the finite dimensional case \cite{TAKESAKI1972306,petz2007quantum,tit2023}. These issues have been addressed in \cite[Theorem 4]{tit2023}. We refer the reader to \cite[Sections IV and V]{tit2023} for more details on the matter. Interestingly, the construction of minimal algebras that contain a set of states of interest emerges also when one aims to generalize the notion of sufficient statistics to the quantum domain \cite{jenvcova2006sufficiency}, or when one considers the set of maps that do not disturb a given subset of states \cite{PhysRevA.66.022318}.
\end{remark}

\subsection{Reduced quantum filters}

The main idea of the following is to construct a stochastic process which we denote  $(\check{\tau}_t)_{t\geq0}$ by restricting the SDE \eqref{eq:Belavkin-Zakai} onto a $*$-algebra $\As$ that contains $\Ns^\perp$.
Let us consider $\As$ to be a $*$-algebra such that $\Ns^\perp\subseteq\As$ (e.g. $\As=\alg(\Ns^\perp)$). Let then $\CE_{\As}$ be the orthogonal conditional expectation onto $\As$ and let $\Rc_{\As}$ and $\Jc_{\As}$ be its full rank factors as defined in \eqref{eq:cond_exp_factorization}, i.e. $\CE_\As = \Jc_{\As}\Rc_{\As}$ and $\Rc_{\As}\Jc_{\As} = \Ic_{\check{\As}}$. We can then consider the stochastic process $(\check{\tau}_t)\in\check{\As}$ defined over the $*$-algebra $\check{\As}\subseteq\Bf(\check{\Hc})$ with initial condition $\check{\tau}_0 = \Rc_{\As}(\rho_0)\in\check{\As}$ and evolving through the linear SDE 
\begin{equation}
    d\check{\tau}_t = \check{\Qc}(\check{\tau}_{t-}) dt + \sum_{j=1}^p \check{\Gc}_{D_j}(\check{\tau}_{t-}) dY_t^j + \sum_{j=1}^q \left[\check{\Kc}_{C_j}-\Ic_{\check{\As}}\right](\check{\tau}_{t-}) dN_t^j,
    \label{eq:quantum_reduced_filter_linear}
\end{equation}
where
\begin{align}
\begin{split}
    \check{\Lc}:\Df(\check{\As})&\to\check{\As}\\
    \check{\rho}&\mapsto 
    \Rc_{\As} {\Lc}\Jc_{\As}(\check{\rho})
\end{split}&&
\begin{split}
    \check{\Gc}_{D_j}:\Df(\check{\As})&\to\check{\As}\\
    \check{\rho}&\mapsto 
    \Rc_{\As} {\Gc}_{D_j}\Jc_{\As}(\check{\rho})
\end{split}&&
\begin{split}
    \check{\Kc}_{C_j}:\Df(\check{\As})&\to\check{\As}\\
    \check{\rho}&\mapsto
    \Rc_{\As} {\Gc}_{D_j}\Jc_{\As}(\check{\rho})
\end{split}
\end{align}
and $$\check{\Qc} \equiv \check{\Lc} +\Ic_{\check{\As}} - \sum_{j=1}^q \check{\Kc}_{C_j}$$ where $\Ic_{\check{\As}}$ is the identity superoperator over the algebra $\check{\As}$. Note that: the SDE \eqref{eq:quantum_reduced_filter_linear} is driven with the same innovation processes $\{(dY_t^j)_{t\geq0}\}_{j=1}^p$, $\{(dN_t^j)_{t\geq0}\}_{j=1}^q$ as the SDE \eqref{eq:Belavkin-Zakai}, and, by Theorem \ref{thm:Lindblad reduction}, $\check{\Lc}$ is a Lindblad generator.
The observables of interest $O_j$ are also reduced to obtain 
$\check{O}_j\equiv \Jc_{\As}^*(O_j)\in\check{\As}$. We next prove that $\tr[O_j\tau_t] = \tr[\check{O}_j \check{\tau}_t]$ for all $O_j$ and for all $t\geq0$. 
\begin{theorem}
\label{thm:SDE_reduction}
    Consider the two processes $(\tau_t)_{t\geq0}$ and $(\check{\tau}_t)_{t\geq0}$ evolving through the linear SDEs \eqref{eq:Belavkin-Zakai} and \eqref{eq:quantum_reduced_filter_linear} with initial conditions $\tau_0 = \rho_0$ and $\check{\tau}_0 = \Rc_\As(\rho_0)$. 
    Then, for all initial conditions, $\rho_0\in\Df(\Hc)$, for all $t\geq0$ and for all $O_j, j=1,\ldots,r$ we have $$\tr[O_j\tau_t] = \tr[\check{O}_j \check{\tau}_t]$$ with $\check{O}_j = \Jc_{\As}^*(O_j)$. 
\end{theorem}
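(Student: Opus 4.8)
The plan is to reduce the statement to the minimal linear filter of Theorem~\ref{thm:linear_filter_reduction} rather than to attempt a direct pointwise comparison between $\check\tau_t$ and $\tau_t$. Set $\sigma_t \equiv \Jc_\As(\check\tau_t)\in\As$. Since $\Jc_\As^*$ is the Hilbert--Schmidt adjoint of $\Jc_\As$ and the $O_j$ are Hermitian, the reduced pairing unfolds as $\tr[\check O_j\check\tau_t] = \inner{\Jc_\As^*(O_j)}{\check\tau_t} = \inner{O_j}{\Jc_\As(\check\tau_t)} = \tr[O_j\sigma_t]$. Because $\Span\{O_j\}\subseteq\Ns^\perp$ (Proposition~\ref{prop:non-observable}) and $\Pi_{\Ns^\perp}=\Jc_{\Ns^\perp}\Rc_{\Ns^\perp}$, it therefore suffices to prove the single identity $\Rc_{\Ns^\perp}(\sigma_t)=\Rc_{\Ns^\perp}(\tau_t)$ for all $t\geq0$: indeed $\Rc_{\Ns^\perp}(\tau_t)=v_t$ is exactly the minimal linear filter of Theorem~\ref{thm:linear_filter_reduction}, and pairing both sides with $\zeta_j=\Jc_{\Ns^\perp}^*(O_j)$ then yields $\tr[O_j\tau_t]=\tr[O_j\sigma_t]=\tr[\check O_j\check\tau_t]$.

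First I would derive the SDE obeyed by $\sigma_t$. Applying the deterministic linear map $\Jc_\As$ to the reduced equation~\eqref{eq:quantum_reduced_filter_linear} and using $\check\Qc=\Rc_\As\Qc\Jc_\As$, $\check\Gc_{D_j}=\Rc_\As\Gc_{D_j}\Jc_\As$, $\check\Kc_{C_j}=\Rc_\As\Kc_{C_j}\Jc_\As$ together with $\CE_\As=\Jc_\As\Rc_\As$ and $\sigma_{t}=\Jc_\As\check\tau_{t}$, every term collapses to $\CE_\As$ composed with the corresponding unreduced generator, giving
\[
d\sigma_t = \CE_\As\Qc(\sigma_{t-})\,dt + \sum_{j=1}^p \CE_\As\Gc_{D_j}(\sigma_{t-})\,dY_t^j + \sum_{j=1}^q\left[\CE_\As\Kc_{C_j}(\sigma_{t-})-\sigma_{t-}\right]dN_t^j,
\]
with $\sigma_0=\CE_\As(\rho_0)$. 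Two structural facts then close the argument: (i) since $\Ns^\perp\subseteq\As$ and both $\CE_\As$ and $\Pi_{\Ns^\perp}$ are orthogonal projectors, $\Pi_{\Ns^\perp}\CE_\As=\Pi_{\Ns^\perp}$, hence $\Rc_{\Ns^\perp}\CE_\As=\Rc_{\Ns^\perp}\Pi_{\Ns^\perp}\CE_\As=\Rc_{\Ns^\perp}$; and (ii) the $\Ns$-invariance of the generators (Proposition~\ref{prop:non-observable}), which, exactly as in the proof of Theorem~\ref{thm:linear_filter_reduction}, yields the intertwinings $\Rc_{\Ns^\perp}\Qc=Q\Rc_{\Ns^\perp}$, $\Rc_{\Ns^\perp}\Gc_{D_j}=G_j\Rc_{\Ns^\perp}$ and $\Rc_{\Ns^\perp}\Kc_{C_j}=K_j\Rc_{\Ns^\perp}$. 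Applying $\Rc_{\Ns^\perp}$ to the displayed SDE and combining (i) and (ii) shows that $\Rc_{\Ns^\perp}(\sigma_t)$ satisfies the linear filter SDE~\eqref{eq:linear_filter} with initial condition $\Rc_{\Ns^\perp}(\sigma_0)=\Rc_{\Ns^\perp}\CE_\As(\rho_0)=\Rc_{\Ns^\perp}(\rho_0)=v_0$. Since $v_t=\Rc_{\Ns^\perp}(\tau_t)$ solves the same equation with the same initial data, strong uniqueness of solutions of the jump--diffusion SDE \cite[Section V.3]{protter2005stochastic} forces $\Rc_{\Ns^\perp}(\sigma_t)=v_t=\Rc_{\Ns^\perp}(\tau_t)$ $\mathbb{Q}$-almost surely, which is the identity needed above.

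The step I expect to be the genuine obstacle is recognizing that one must work at the level of the $\Ns^\perp$-projection rather than the full algebra reduction. The naive analogue of Theorem~\ref{thm:linear_filter_reduction}, namely $\check\tau_t=\Rc_\As(\tau_t)$, is in general \emph{false}: it would require $\Rc_\As\Qc(\Ic-\CE_\As)=0$, i.e. invariance of $\As^\perp$ under the generators, whereas we only know $\As^\perp\subseteq\Ns$ and a subspace of the invariant space $\Ns$ need not itself be invariant. The resolution is identity (i): although the conditional expectation distorts the full state, composing the $\Ns^\perp$-reduction $\Rc_{\Ns^\perp}$ with $\CE_\As$ on the left erases that distortion precisely because $\Ns^\perp$ sits inside $\As$. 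Minor care is also needed to confirm that $\check O_j=\Jc_\As^*(O_j)$ and $\sigma_t$ remain Hermitian, so that the pairing manipulation of the first paragraph is literally $\tr[\check O_j\check\tau_t]=\tr[O_j\sigma_t]$; this holds since $\Jc_\As$, $\Rc_\As$ and $\CE_\As$ are Hermiticity preserving.
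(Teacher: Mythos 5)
Your proof is correct and follows essentially the same route as the paper's: both arguments compare the two processes only after projecting onto $\Ns^\perp$, and both hinge on exactly the two identities you isolate --- $\Pi_{\Ns^\perp}\CE_\As=\Pi_{\Ns^\perp}$ (from $\Ns^\perp\subseteq\As$ and self-adjointness of the orthogonal projectors) and the $\Ns$-invariance of $\Qc,\Gc_{D_j},\Kc_{C_j}$ --- followed by strong uniqueness of the linear jump--diffusion SDE and the trace pairing $\tr[\check O_j\check\tau_t]=\tr[O_j\Jc_\As(\check\tau_t)]$. The only organizational difference is that the paper derives the SDEs for $\Pi_{\Ns^\perp}(\tau_t)$ and $\Pi_{\Ns^\perp}\Jc_\As(\check\tau_t)$ side by side and matches them directly, whereas you work in the $\Cb^\kappa$ coordinates of $\Rc_{\Ns^\perp}$ and reuse Theorem~\ref{thm:linear_filter_reduction} to handle the $\tau_t$ side, which shortens the computation without changing the underlying idea.
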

\begin{proof}
    Let us denote by $\Pi_{\Ns^\perp}$ the orthogonal projection onto $\Ns^\perp$. 
    Recalling that by Proposition \ref{prop:non-observable} we have that $\Ns$ is both $\Lc$- and $\Gc_{D_j}$-invariant for all $\{D_j\}_{j=1}^p$ and $\Kc_{C_j}$-invariant for all $\{C_j\}_{j=1}^q$ we have $\Pi_{\Ns^\perp}  \Lc  = \Pi_{\Ns^\perp}  \Lc  \Pi_{\Ns^\perp}$ and, similarly, $\Pi_{\Ns^\perp} \Gc_{D_j} = \Pi_{\Ns^\perp}\Gc_{D_j}\Pi_{\Ns^\perp}$, $\Pi_{\Ns^\perp} \Kc_{C_j} = \Pi_{\Ns^\perp}\Kc_{C_j}\Pi_{\Ns^\perp}$. 

    We first want to prove that $\Pi_{\Ns^\perp}(\tau_t) = \Pi_{\Ns^\perp}\Jc_\As(\check{\tau}_t)$ for all $t\geq0$ and for all initial conditions $\rho_0\in\Df(\Hc)$. Defining $X_t \equiv \Pi_{\Ns^\perp}(\tau_t)\in\Ns^\perp$ we then have:
    \begin{align*}
         X_t =\,&\Pi_{\Ns^\perp}(\rho_0) + \int_{0}^t \Pi_{\Ns^\perp}\Qc(\tau_{s-}) ds + \sum_{j=1}^p \int_{0}^t \Pi_{\Ns^\perp}\Gc_{D_j}(\tau_{s-}) dY_s^j + \sum_{j=1}^q \int_{0}^t \Pi_{\Ns^\perp}[\Kc_{C_j}-\Ic](\tau_{s-}) dN_s^j \\
        =\,& \Pi_{\Ns^\perp}(\rho_0) + \int_{0}^t \Pi_{\Ns^\perp}\Qc\underbrace{\Pi_{\Ns^\perp} (\tau_{s-})}_{X_{s-}} ds + \sum_{j=1}^p \int_{0}^t \Pi_{\Ns^\perp}\Gc_{D_j}\underbrace{\Pi_{\Ns^\perp}(\tau_{s-})}_{X_{s-}} dN_s^j +\\&
        +\sum_{j=1}^q \int_{0}^t \Pi_{\Ns^\perp}[\Kc_{C_j}-\Ic]\underbrace{\Pi_{\Ns^\perp}(\tau_{s-})}_{X_{s-}} dN_s^j \\
        =\,& \Pi_{\Ns^\perp}(\rho_0) + \int_{0}^t \Pi_{\Ns^\perp}\Qc ({X_{s-}}) ds + \sum_{j=1}^p \int_{0}^t \Pi_{\Ns^\perp}\Gc_{D_j}({X_{s-}}) dN_s^j 
        +\sum_{j=1}^q \int_{0}^t \Pi_{\Ns^\perp}[\Kc_{C_j}-\Pi_{\Ns^\perp}]({X_{s-}}) dN_s^j.
    \end{align*}
    On the other hand, by defining $\Gamma_t\equiv \Pi_{\Ns^\perp}\Jc_\As(\check{\tau}_t)\in\Ns^\perp$, we have: 
    \begin{align*}
        \Gamma_t =\,& \Pi_{\Ns^\perp}\Jc_{\As}(\check{\tau}_0) + \int_{0}^t \Pi_{\Ns^\perp}\Jc_{\As}\check{\Qc}(\check{\tau}_{s-}) ds + \sum_{j=1}^p \int_{0}^t \Pi_{\Ns^\perp}\Jc_{\As}\check{\Gc}_{D_j}(\check{\tau}_{s-}) dY_s^j + \\&+ \sum_{j=1}^q \int_0^t \Pi_{\Ns^\perp}\Jc_{\As} [\check{\Kc}_{C_j}-\Ic_{\check{\As}}](\check{\tau}_{s-}) dN_t^j\\
        =\,& \Pi_{\Ns^\perp}\CE_\As (\rho_0) + \int_{0}^t \Pi_{\Ns^\perp}\CE_\As{\Qc}  {\Jc_{\As}(\check{\tau}_{s-})} ds + \sum_{j=1}^p \int_{0}^t \Pi_{\Ns^\perp}\CE_\As{\Gc}_{C_j} {\Jc_{\As}(\check{\tau}_{s-})} dY_s^j  + \\&+ \sum_{j=1}^q  \int_0^t \Pi_{\Ns^\perp} \CE_\As [{\Kc}_{C_j}-\Ic_{\check{\As}}]{\Jc_{\As}(\check{\tau}_{s-})} dN_t^j \\
        =\,& \Pi_{\Ns^\perp}(\rho_0) + \int_{0}^t \Pi_{\Ns^\perp}{\Qc}  \underbrace{\Pi_{\Ns^\perp}\Jc_{\As}(\check{\tau}_{s-})}_{\Gamma_{s-}} ds + \sum_{j=1}^p \int_{0}^t \Pi_{\Ns^\perp}{\Gc}_{C_j} \underbrace{\Pi_{\Ns^\perp}\Jc_{\As}(\check{\tau}_{s-})}_{\Gamma_{s-}} dY_s^j  + \\&+ \sum_{j=1}^q  \int_0^t \Pi_{\Ns^\perp} [{\Kc}_{C_j}-\Ic_{\check{\As}}]\underbrace{\Pi_{\Ns^\perp}\Jc_{\As}(\check{\tau}_{s-})}_{\Gamma_{s-}} dN_t^j \\
        =\,& \Pi_{\Ns^\perp}(\rho_0) + \int_{0}^t \Pi_{\Ns^\perp}{\Qc} (\Gamma_{s-}) ds + \sum_{j=1}^p \int_{0}^t \Pi_{\Ns^\perp}{\Gc}_{C_j} (\Gamma_{s-}) dY_s^j 
        +\sum_{j=1}^q \int_{0}^t \Pi_{\Ns^\perp}[\Kc_{C_j}-\Pi_{\Ns^\perp}](\Gamma_{s-}) dN_s^j,
    \end{align*}
    where we used the fact that, since $\As\supseteq\Ns^\perp$, we have $\CE_\As\Pi_{\Ns^\perp} = \Pi_{\Ns^\perp}$. We can then notice that the two processes $(X_t)_{t\geq0}$ and $(\Gamma_t)_{t\geq0}$ coincide by performing the change of variable $\Gamma_s = X_s$. This proves that $\Pi_{\Ns^\perp}(\tau_t) = \Pi_{\Ns^\perp}(\Tilde{\tau}_t)$ for all $t\geq0$ and for all initial conditions $\rho_0$. We can then recall that $\Ns$ is contained in $\ker \tr[O_j \cdot]$ for all $O_j$ and thus for all $O_j$, $\tr[O_j \tau_t] = \tr[O_j \Pi_{\Ns^\perp}(\tau_t)]$ while, on the other hand, $\tr[\check{O}_j\check{\tau}_t] = \tr[O_j \Jc_{\As}(\check{\tau}_t)] = \tr[O_j \Pi_{\Ns^\perp}  \Jc_{\As}(\check{\tau}_t)] = \tr[O_j \Pi_{\Ns^\perp}(\Tilde{\tau}_t)]$.
\end{proof}

Theorem \ref{thm:SDE_reduction} proves that the reduced process $(\check{\tau}_t)_{t\geq0}$ is such that $\tr[O_j\tau_t] = \tr[\check{O}_j\check{\tau}_t]$ for all $t\geq0$ and for all $\rho_0\in\Df(\Hc)$. This is however not sufficient to ensure that the process $\check{\tau}_t$ is capable of reproducing the processes $\{\Theta_t^j\}$. A sufficient (not necessary) condition to obtain this is to require that $\tr[\tau_t]=\tr[\check{\tau}_t]$ for all $t\geq0$ and all initial conditions. The next corollary shows that, under Assumption \ref{ass:identity}, since $\one\in\Ns^\perp$, we have that the reduced process $(\check{\tau}_t)_{t\geq0}$ correctly reproduces the processes $(\Theta_t^j),j=1,\ldots,r$. 
\begin{corollary}
\label{cor:normalization_quantum}
    Under the assumptions of Theorem \ref{thm:SDE_reduction} and assuming that $\one\in\Ns^\perp$ we have that $$\Theta_t^j = \frac{\tr[\check{O}_j\check{\tau_t}]}{\tr[\check{\tau}_t]},$$ for all $t\geq0$ and all $j=1,\ldots,r$.
\end{corollary}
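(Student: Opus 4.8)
The plan is to reduce the whole statement to the single normalization identity $\tr[\tau_t]=\tr[\check{\tau}_t]$; once this is established, the claim is immediate from Theorem~\ref{thm:SDE_reduction} together with the definition $\rho_t=\tau_t/\tr[\tau_t]$.

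First I would use $\rho_t=\tau_t/\tr[\tau_t]$ to factor the output as
\[
\Theta_t^j=\tr[O_j\rho_t]=\frac{\tr[O_j\tau_t]}{\tr[\tau_t]}.
\]
By Theorem~\ref{thm:SDE_reduction} the numerator already satisfies $\tr[O_j\tau_t]=\tr[\check{O}_j\check{\tau}_t]$ for every $j$ and every $t\geq0$, so the only thing left to prove is that the two denominators agree, i.e. $\tr[\tau_t]=\tr[\check{\tau}_t]$ for all $t\geq0$ and all initial conditions $\rho_0$.

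For the denominator I would exploit that $\Jc_\As$ is trace-preserving, so that its Hilbert--Schmidt adjoint is unital: for every $\check{X}\in\check{\As}$,
\[
\inner{\Jc_\As^*(\one)}{\check{X}}_{\check{\As}}=\inner{\one}{\Jc_\As(\check{X})}_{HS}=\tr[\Jc_\As(\check{X})]=\tr[\check{X}],
\]
whence $\Jc_\As^*(\one)=\one_{\check{\As}}$. Since $\one\in\Ns^\perp$ by hypothesis (which also follows from Assumption~\ref{ass:identity}, $\one\in\Span\{O_j\}$), the conclusion of Theorem~\ref{thm:SDE_reduction}, being linear in the observable, applies to $O=\one$, giving
\[
\tr[\tau_t]=\tr[\one\,\tau_t]=\tr[\Jc_\As^*(\one)\,\check{\tau}_t]=\tr[\one_{\check{\As}}\,\check{\tau}_t]=\tr[\check{\tau}_t].
\]

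Substituting both identities into the factored expression for $\Theta_t^j$ then yields $\Theta_t^j=\tr[\check{O}_j\check{\tau}_t]/\tr[\check{\tau}_t]$, which is the claim. I expect the only genuinely delicate point to be this denominator identity: one has to recognise that the reduced normalization $\tr[\check{\tau}_t]$, computed as a trace on the compressed algebra $\check{\As}$, truly coincides with the physical normalization $\tr[\tau_t]$, and this is precisely where the unitality of $\Jc_\As^*$ (equivalently, the trace-preservation of $\Jc_\As$) and the membership $\one\in\Ns^\perp$ are both indispensable. Everything else is routine substitution.
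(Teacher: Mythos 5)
Your proof is correct and follows essentially the same route as the paper, which simply declares the proof identical to that of Corollary~\ref{cor:normalization}: the identity $\tr[O\tau_t]=\tr[\Jc_\As^*(O)\check{\tau}_t]$ for $O$ ranging over $\Ns^\perp$ (in particular $O=\one$), combined with the factorization $\rho_t=\tau_t/\tr[\tau_t]$. Two remarks: your explicit verification that $\Jc_\As^*(\one)=\one_{\check{\As}}$ (unitality, equivalently trace preservation of $\Jc_\As$) is exactly the point where this corollary differs from Corollary~\ref{cor:normalization}, whose denominator remains $\inner{\Jc_{\Ns^\perp}^*(\one)}{v_t}$ because no analogue of unitality is available for the factor $\Jc_{\Ns^\perp}$, so spelling this out is a genuine improvement on the paper's terse ``identical proof'' remark; on the other hand, justifying the application of Theorem~\ref{thm:SDE_reduction} to $O=\one$ ``by linearity'' tacitly uses Assumption~\ref{ass:identity} ($\one\in\Span\{O_j\}_{j=1}^r$), whereas under the corollary's literal hypothesis $\one\in\Ns^\perp$ alone you should instead observe that the proof of Theorem~\ref{thm:SDE_reduction} establishes the identity for every operator annihilating $\Ns$ under the trace pairing, hence for every Hermitian element of $\Ns^\perp$, not only for the listed $O_j$.
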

The proof of this corollary is identical to the proof of Corollary \ref{cor:normalization}.

Up to this point we proved that the reduced process $(\check{\tau}_t)_{t\geq0}$ is capable of reproducing the output processes $\{\Theta_t^j\}$ under Assumption \ref{ass:identity}. Note that we have not yet shown that the linear SDE \eqref{eq:quantum_reduced_filter_linear} generates a valid quantum process. 

Before we move to deriving the reduced quantum filter we need to focus on the operators that define the reduced super-operators $\check{\Lc}(\cdot)$ (or more precisely $\Rc_\As[H,\Jc_\As(\cdot)]$  and $\check{\Dc}_L(\cdot)$), $\check{\Gc}_{D_j}(\cdot)$ and $\check{\Kc}_{C_j}(\cdot)$. For example, we know from Theorem \ref{thm:Lindblad reduction} that $\check{\Lc}$ is a Lindblad generator but we did not specify yet how to compute the reduced Hamiltonian and noise operators. The following Proposition takes care of this.
\begin{proposition}
\label{prop:operator_reduction}
    Given an unital algebra $\As\subseteq\Bf(\Hc)$ which is isomorphic to $\check{\As}\subseteq\Bf(\check{\Hc})$, and given the CPTP factorization of $\CE_\mathscr{A}^* = \Eb|_\As$, $\Rc_\As$ and $\Jc_\As$ as defined in equation \eqref{eq:cond_exp_factorization} we have that:
    \begin{enumerate}
        \item For any Hamiltonian $H\in\Hf(\Hc)$, there exists an operator $\check{H} = \check{H}^*\in\check{\As}$ such that $[\check{H},\check{\rho}] = \Rc_\As [H,\Jc_\As(\check{\rho})]$;
        \item For any operator $L\in\Bf(\Hc)$, there exist a set of operators $\{\check{L}_k\}_{k=1}^d\subset\Bf(\check{\Hc})$ such that       
        \[\check{\Dc}_{L} = \Rc_{\As} \Dc_{L} \Jc_\As = \sum_{k=1}^d \Dc_{\check{L}_k}; \]
        \item For any operator $D\in\Bf(\Hc)$, there exist an operator $\check{D}\in\check{\As}$ and a set of operators $\{\check{D}'_k\}_{k=1}^d\subset \Bf(\check{\Hc})$ such that \[\check{\Gc}_{D} = \Rc_\As \Gc_D \Jc_{\As} = \Gc_{\check{D}} \qquad\text{ and }\qquad \check{\Dc}_{D} = \Rc_{\As} \Dc_{D} \Jc_\As = \Dc_{\check{D}} + \sum_{k=1}^d \Dc_{\check{D}'_k};\]
        \item For any operator $C\in\Bf(\Hc)$, there exist a set of operators $\{\check{C}_k\}_{k=1}^d\subset \Bf(\check{\Hc})$ such that \[ \check{\Kc}_{C} = \Rc_\As \Kc_C \Jc_{\As} = \sum_{k=1}^d \Kc_{\check{C}_k} \qquad\text{ and }\qquad \check{\Dc}_{C} = \Rc_{\As} \Dc_{C} \Jc_\As = \sum_{k=1}^d \Dc_{\check{C}_k};\]
    \end{enumerate}
   where $d\equiv\max_{j=1,\dots,N} \dim(\Hc_{G,j})^2$ of the Wedderburn decomposition \eqref{eq:wedderburn} of $\As$.
\end{proposition}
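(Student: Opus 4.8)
The plan is to push everything through the Wedderburn block form and then compute the three compressed superoperators block-by-block. Conjugating by $U$ we may assume $\As=\bigoplus_{k=1}^K\Bf(\Hc_{F,k})\otimes\one_{G,k}$; write $g_k\equiv\dim(\Hc_{G,k})$ and recall from \eqref{eq:cond_exp_factorization} that, for the orthogonal conditional expectation, $\sigma_k=\one_{G,k}/g_k$, $\Jc_\As(\check\rho)=\sum_kV_k(\rho_{F,k}\otimes\sigma_k)V_k^*$ and $\Rc_\As(X)=\bigoplus_k\tr_{\Hc_{G,k}}(V_k^*XV_k)$, with $V_l^*V_k=\delta_{kl}\,\one_{F,k}\otimes\one_{G,k}$ and $\sum_kV_kV_k^*=\one$. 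The whole computation rests on two elementary partial-trace identities: for $M\in\Bf(\Hc_{F,k}\otimes\Hc_{G,k})$ and $\rho\in\Bf(\Hc_{F,k})$,
\[
\tr_{\Hc_{G,k}}\!\big(M(\rho\otimes\tfrac{\one_{G,k}}{g_k})\big)=\tfrac1{g_k}\tr_{\Hc_{G,k}}(M)\,\rho,\qquad \tr_{\Hc_{G,k}}\!\big(M(\rho\otimes\tfrac{\one_{G,k}}{g_k})M^*\big)=\tfrac1{g_k}\sum_{i,j=1}^{g_k}M_{ij}\,\rho\,M_{ij}^*,
\]
where $M_{ij}=(\one_{F,k}\otimes\langle e_i|)M(\one_{F,k}\otimes|e_j\rangle)$ for a fixed orthonormal basis $\{|e_i\rangle\}$ of $\Hc_{G,k}$.

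For the Hamiltonian and for the gauge term $\Gc_D$ only the diagonal blocks $V_k^*(\cdot)V_k$ survive, since $\Jc_\As(\check\rho)V_k=V_k(\rho_{F,k}\otimes\sigma_k)$. Applying the first identity to $[H,\Jc_\As(\check\rho)]$ blockwise gives $\Rc_\As[H,\Jc_\As(\check\rho)]=[\check H,\check\rho]$ with $\check H\equiv\bigoplus_k\tfrac1{g_k}\tr_{\Hc_{G,k}}(V_k^*HV_k)\in\check\As$, which is Hermitian because $H$ is and the partial trace preserves Hermiticity; this is item 1. Identically, $\Rc_\As\Gc_D\Jc_\As(\check\rho)=\check D\check\rho+\check\rho\check D^*=\Gc_{\check D}(\check\rho)$ with the block-diagonal operator $\check D\equiv\bigoplus_k\tfrac1{g_k}\tr_{\Hc_{G,k}}(V_k^*DV_k)\in\check\As$, which is the first equation of item 3. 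No off-diagonal contribution arises here precisely because these terms are linear (one-sided) in $\rho$.

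For the conjugation (``gain'') part all blocks of $C$ enter: the $k$-th block of $\Rc_\As\Kc_C\Jc_\As(\check\rho)$ is $\sum_l\tr_{\Hc_{G,k}}\!\big(C_{kl}(\rho_{F,l}\otimes\sigma_l)C_{kl}^*\big)$ with $C_{kl}=V_k^*CV_l$, so the second identity yields $\big[\Rc_\As\Kc_C\Jc_\As(\check\rho)\big]_k=\sum_l\tfrac1{g_l}\sum_{i,j}(C_{kl})_{ij}\,\rho_{F,l}\,(C_{kl})_{ij}^*$, a completely positive map whose Kraus operators are the $\tfrac1{\sqrt{g_l}}(C_{kl})_{ij}$. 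Letting $i,j$ range over the common set $\{1,\dots,\max_jg_j\}$ (zero-padding) and relabelling the pair $(i,j)$ as a single index collects these into $d=\max_jg_j^2$ operators $\check C_k\in\Bf(\check\Hc)$ with $\Rc_\As\Kc_C\Jc_\As=\sum_{k=1}^d\Kc_{\check C_k}$. For the anticommutator (``drain'') part only the diagonal block $(C^*C)_{kk}=\sum_mC_{mk}^*C_{mk}$ survives, so the first identity gives $\Rc_\As\tfrac12\{C^*C,\Jc_\As(\cdot)\}=\tfrac12\{\check Y,\cdot\}$ with $\check Y\equiv\bigoplus_k\tfrac1{g_k}\tr_{\Hc_{G,k}}\big((C^*C)_{kk}\big)$; verifying $\sum_k\check C_k^*\check C_k=\check Y$ combines the two pieces into $\Rc_\As\Dc_C\Jc_\As=\sum_{k=1}^d\Dc_{\check C_k}$, which is item 4 (item 2 is the identical computation with a noise operator $L$ in place of $C$). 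Item 3's dissipator is obtained the same way, except that the diagonal block $D_{kk}$ also contributes the piece already isolated in $\check D$; splitting it off leaves $\Rc_\As\Dc_D\Jc_\As=\Dc_{\check D}+\sum_{k=1}^d\Dc_{\check D'_k}$ with the \emph{same} $\check D$ and residual operators $\check D'_k$ built from the off-diagonal blocks $D_{kl}$, $k\neq l$.

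The routine part is the blockwise algebra above; the delicate point is the bookkeeping of the Kraus operators. One must check that the operators extracted from the second identity can be organized into exactly $d=\max_jg_j^2$ elements of $\Bf(\check\Hc)$ for which the off-diagonal output blocks cancel, so that $\sum_k\Kc_{\check C_k}$ (respectively $\sum_k\Dc_{\check C_k}$) equals the compressed map \emph{as a map into} $\check\As$ and not merely on diagonal blocks; this is what guarantees that the reduced SME \eqref{eq:reduced_quantum_filter} preserves $\Df(\check\As)$. The hypothesis that $\CE_\As$ is the orthogonal conditional expectation, giving $\sigma_k=\one_{G,k}/g_k$, is used exactly here, and it keeps the bookkeeping consistent with the completely positive, trace-nonincreasing structure already certified by Theorem \ref{thm:Lindblad reduction}. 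Establishing the drain identity $\sum_k\check C_k^*\check C_k=\check Y$ (respectively $\check D^*\check D+\sum_k\check D'^*_k\check D'_k$ reproducing the $D$-drain), and hence that the compressed generator is exactly a Lindbladian with no spurious Hamiltonian beyond $\check H$, is the main obstacle and the only step that goes beyond direct computation.
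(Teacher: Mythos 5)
Your blockwise setup is sound and is essentially the paper's: your two partial-trace identities are correct (they are an inline re-derivation of Proposition \ref{prop:combining_reduction_injection}), they settle item 1 and the $\Gc_D$ half of item 3, and your block-level Kraus extraction $[\Rc_\As\Kc_C\Jc_\As(\check\rho)]_k=\sum_l\tfrac1{g_l}\sum_{i,j}(C_{kl})_{ij}\,\rho_{F,l}\,(C_{kl})_{ij}^*$ coincides with the paper's computation, with the matrix units $|e_i\rangle\langle e_j|$ playing the role of the orthonormal bases $G_\ell^{(j,k)}$. The gap is that the two steps you defer as ``the delicate point'' and ``the main obstacle'' are not side conditions: they are the actual content of the paper's proof (Proposition \ref{prop:cp_reduction} and Lemma \ref{lem:operator_reduction_dissipative}), and the specific assembly you propose fails. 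If for a fixed padded label $(i,j)$ you collect \emph{all} block pairs into one operator, $\check C_{(i,j)}\equiv\sum_{k,l}W_k\,g_l^{-1/2}(C_{kl})_{ij}\,W_l^*$, then $\Kc_{\check C_{(i,j)}}(\check\rho)$ contains cross terms $W_k\,(C_{kl})_{ij}\,\rho_{F,l}\,(C_{k'l})_{ij}^*\,W_{k'}^*$ with $k\neq k'$ (two output blocks fed by the same source block $l$), and these do not cancel after summing over $(i,j)$. Concretely, take $K=3$, all $g_k=1$ (so $d=1$) and $C=|1\rangle\langle2|+|3\rangle\langle2|$: then $\Rc_\As\Kc_C\Jc_\As$ maps the pure state $|2\rangle\langle2|$ to $|1\rangle\langle1|+|3\rangle\langle3|$, which has rank two, so no family of $d=1$ Kraus operators can represent it; your $\check C_{(1,1)}=C$ produces exactly the forbidden off-diagonal output. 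This also shows the cardinality cannot be the $d$ you (and, literally read, the statement) announce: the paper's construction produces $d(2N-1)$ operators, one per pair $(\ell,e)$.

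The paper's resolution, which is the missing idea, is to group the block operators by the block-diagonal \emph{shift} index $e=k-l$: each reduced Kraus operator $\check C_{\ell,e}=\sum_k W_{k-e}\,g_k^{-1/2}\,C^{(k-e,k)}_{\ell,F}\,W_k^*$ moves every block by the same amount, so on block-diagonal $\check\rho$ the identity $W_k^*\check\rho\,W_{k'}=\delta_{kk'}\rho_{F,k}$ forces the two output block indices to coincide and all cross terms vanish automatically, while cross terms in $\ell$ are absent because of the orthonormality $\tr[G_\ell^{(j,k)}G_u^{(j,k)*}]=\delta_{\ell u}$. With this grouping the drain identity $\sum_{\check L\in\Xc(L)}\check L^*\check L=\Jc_\As^*(L^*L)$ is a genuine direct computation (Lemma \ref{lem:operator_reduction_dissipative}), not an obstacle beyond one, and items 2--4 then follow since $\Rc_\As\{L^*L,\Jc_\As(\cdot)\}=\{\Jc_\As^*(L^*L),\cdot\}$ by your linear identity. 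One further inaccuracy: in item 3 the residual operators $\check D'_k$ do \emph{not} come only from the off-diagonal blocks of $D$; the components of the diagonal blocks $D_{kk}$ whose $\Hc_{G,k}$-factor is traceless also contribute. Indeed, for $K=1$ and $D=A\otimes B$ with $\tr B=0$ one has $\check D=\Jc_\As^*(D)=0$ yet $\check\Dc_D=\tr[B^*B]\,\Dc_A/g\neq0$; the residuals vanish only when $D\in\As$ (Corollary \ref{cor:reduced_operators}).
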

This Proposition is constructively proven in Appendix \ref{sec:noise_operators_reduction}. 

Applying this Proposition to the superoperators that appear in the SDE \eqref{eq:quantum_reduced_filter_linear} we obtain: 
\begin{itemize}
    \item A reduced Hamiltonian $\check{H}=\check{H}^*\in\check{\As}$; 
    \item A set $\bigcup_{j=1}^m\{\check{L}_{j,k}\}_{k=1}^{d}$ of reduced noise operators that describe the interaction with an un-monitored Markovian bath; 
    \item Two sets of noise operators $\{\check{D}_j\}_{j=1}^{p}$ and $\bigcup_{j=1}^p\{\check{D}_{j,k}'\}_{k=1}^{d}$ that describe the effects of homodyne-type measurement on the reduced model; 
    \item A set $\bigcup_{j=1}^q\{\check{C}_{j,k}\}_{k=1}^{d}$ of noise operators that describe the effects of counting-type measurements on the reduced model.
\end{itemize}
Combining this into SDE \eqref{eq:quantum_reduced_filter_linear} we obtain:
\begin{align}
\begin{split}
    d\check{\tau}_t =& \left[\check{\tau}_{t-}-i[{\check{H}},\check{\tau}_{t-}] + \sum_{j=1}^m \sum_{k=1}^{d} \Dc_{\check{L}_{j,k}}(\check{\tau}_{t-}) + \sum_{j=1}^p \left( \underbrace{\sum_{k=1}^{d} \Dc_{\check{D}_{j,k}'}}_{\star} +  \underbrace{\Dc_{\check{D}_j}}_{\star\star} \right)(\check{\tau}_{t-}) + \sum_{j=1}^{q} \sum_{k=1}^{d} \left(\Dc_{\check{C}_{j,k}}-\Kc_{\check{C}_{j,k}}\right)(\check{\tau}_{t-}) \right]dt \\
    &+ \sum_{j=1}^{p}\Gc_{\check{D}_j}(\check{\tau}_{t-}) dY_t^j + \sum_{j=1}^{q}  \left[ \sum_{k=1}^{d} \Kc_{\check{C}_{j,k}} - \Ic_{\check{\As}} \right](\check{\tau}_{t-})dN_t^j.
\end{split}
    \label{eq:reduced_sde_extended}
\end{align}
From this equation we can observe that, for every noise operator $\check{D}_j$ associated to a diffusive innovation process $dY_t^j$, the reduced SDE correctly include a dissipative term $\Dc_{\check{D}_j}$ (denoted in equation \eqref{eq:reduced_sde_extended} by $\star\star$), but can also, in principle, include more dissipative terms that are not directly associated to the innovation process $dY_t^j$ (in the equation \eqref{eq:reduced_sde_extended} denoted by $\star$). Note that, by Corollary \ref{cor:reduced_operators}, this term (denoted by $\star$) is null if $D_j\in\As$. 

We can further observe that to every noise operator $C_j$ associated to a jump innovation process $dN_t^j$ in the original model, we have, in the reduced model, more than one noise operators that is associated to the same jump process $dN_t^j$, namely $\{\check{C}_{j,k}\}$. In other words, in the reduced models multiple jumps are observed simultaneously whenever a single jump is observed in the original model. 

Applying the Kallianpur-Striebel formula \cite[Theorem 6.2]{bouten2007introduction} (or a-posteriori re-normalization) we have $\check{\rho}_t = \frac{\check{\tau}_t}{\tr(\check{\tau}_t)}$. The process $\check{\rho}_t\in\check{\As}$ has initial condition $\check{\rho}_0 = \Rc_\As(\rho_0)$ and, applying the Ito rules to equation \eqref{eq:reduced_sde_extended} one obtains the SME 
\begin{equation}
\begin{split}
    d\check{\rho}_t =\,& \check{\Lc}(\check{\rho}_{t-}) dt + \sum_{j=1}^{p}\left[\Gc_{\check{D}_j}(\check{\rho}_{t-}) - \tr\left[\Gc_{\check{D}_j}(\check{\rho}_{t-})\right]\check{\rho}_{t-}\right](dY_t^j - \tr[\Gc_{\check{D}_j}(\check{\rho}_{t-})]) \\
    &+ \sum_{j=1}^{q}  \left[ \frac{\sum_{k=1}^{d}\Kc_{\check{C}_{j,k}}(\check{\rho}_{t-})}{\sum_{k=1}^{d}\tr[\Kc_{\check{C}_{j,k}}(\check{\rho}_{t-})]} - \check{\rho}_{t-} \right]\left(dN_t^{j} - \sum_{k=1}^{d}\tr[\Kc_{\check{C}_{j,k}}(\check{\rho}_{t-})]\right).
\end{split}
\label{eq:reduced_SME}
\end{equation}

\begin{theorem}
    \label{thm:linear_solves_prob_1}
    Consider the process $(\rho_t)_{t\geq0}$ evolving through the SME \eqref{eq:SME} with initial condition $\rho_0\in\Df(\Hc)$ and with driving processes $\{(Y_t^j)_{t\geq0}\}_{j=1}^p$ and $\{(N_t^j)_{t\geq0}\}_{j=1}^q$. Consider then the reduced process $(\check{\rho}_t)_{t\geq0}$ evolving through the SME \eqref{eq:reduced_SME}, initial condition $\check{\rho}_0=\Rc_\As(\rho_0)$ and the same two driving processes $\{(Y_t^j)_{t\geq0}\}_{j=1}^p$ and $\{(N_t^j)_{t\geq0}\}_{j=1}^q$. Then, under Assumptions \ref{ass:observables} and \ref{ass:identity}, we have that:
    \begin{itemize}
        \item for all $\rho_0\in\Df(\Hc)$ and for all $t\geq0$, $\tr[O_j\rho_t]=\tr[\check{O}_j \check{\rho}_t]$ for all $j=1,\dots,r$;
        \item $\tr[\Gc_{D_j}(\rho_t)] = \tr[\Gc_{\check{D}_j}(\check{\rho}_t)]$ for all $t\geq0$ and for all $j=1,\dots,p$;
        \item $\tr[\Kc_{C_j}(\rho_t)] = \sum_{k=1}^{d} \tr[\Kc_{\check{C}_{j,k}}(\check{\rho}_t)]$ for all $t\geq0$ and for all $j=1,\dots,q$.
    \end{itemize} 
\end{theorem}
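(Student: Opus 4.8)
The plan is to reduce all three assertions to the already-established fact that the reduced \emph{unnormalized} process reproduces the traced observables of interest, and then to exploit Assumption \ref{ass:observables} to express the two measurement drifts as linear functionals that are already covered by that fact. No further stochastic analysis should be needed beyond what went into Theorem \ref{thm:SDE_reduction}; everything here is algebraic bookkeeping on top of it.

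For the first bullet I would simply invoke Corollary \ref{cor:normalization_quantum}. Assumption \ref{ass:identity} guarantees $\one\in\Span\{O_j\}_{j=1}^r\subseteq\Ns^\perp$, so its hypotheses are met and $\Theta_t^j=\tr[\check{O}_j\check{\tau}_t]/\tr[\check{\tau}_t]$. Since $\check{\rho}_t=\check{\tau}_t/\tr[\check{\tau}_t]$ and $\rho_t=\tau_t/\tr[\tau_t]$, this is exactly $\tr[O_j\rho_t]=\tr[\check{O}_j\check{\rho}_t]$. Equivalently, one applies Theorem \ref{thm:SDE_reduction} with $O=\one$ to get $\tr[\tau_t]=\tr[\check{\tau}_t]$, using $\Jc_\As^*(\one)=\one_{\check{\As}}$ (the adjoint of the trace-preserving map $\Jc_\As$ is unital), and then divides.

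The second and third bullets are then matters of rewriting. For the homodyne drift I would use $\tr[\Gc_{D_j}(\rho_t)]=\tr[(D_j+D_j^*)\rho_t]$ and, via the identity $\check{\Gc}_{D_j}=\Rc_\As\Gc_{D_j}\Jc_\As=\Gc_{\check{D}_j}$ from Proposition \ref{prop:operator_reduction} together with trace preservation of $\Rc_\As$,
\[\tr[\Gc_{\check{D}_j}(\check{\rho}_t)]=\tr[\Rc_\As\Gc_{D_j}\Jc_\As(\check{\rho}_t)]=\tr[\Gc_{D_j}\Jc_\As(\check{\rho}_t)]=\tr[\Jc_\As^*(D_j+D_j^*)\check{\rho}_t].\]
Assumption \ref{ass:observables} gives $D_j+D_j^*=\sum_l\alpha_{jl}O_l$, whence $\Jc_\As^*(D_j+D_j^*)=\sum_l\alpha_{jl}\check{O}_l$; applying the first bullet termwise yields $\tr[\Gc_{\check{D}_j}(\check{\rho}_t)]=\sum_l\alpha_{jl}\tr[\check{O}_l\check{\rho}_t]=\sum_l\alpha_{jl}\tr[O_l\rho_t]=\tr[\Gc_{D_j}(\rho_t)]$. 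The counting drift is handled identically with $C_j^*C_j$ in place of $D_j+D_j^*$: from $\check{\Kc}_{C_j}=\Rc_\As\Kc_{C_j}\Jc_\As=\sum_k\Kc_{\check{C}_{j,k}}$ and trace preservation of $\Rc_\As$ one gets $\sum_k\tr[\Kc_{\check{C}_{j,k}}(\check{\rho}_t)]=\tr[\Jc_\As^*(C_j^*C_j)\check{\rho}_t]$, and Assumption \ref{ass:observables} together with the first bullet closes the chain.

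The genuinely delicate points are structural rather than analytic. First, the normalizations of $\rho_t$ and $\check{\rho}_t$ must agree, and this is precisely what Assumption \ref{ass:identity} secures through $\one\in\Ns^\perp$; without it, passage from the unnormalized statement of Theorem \ref{thm:SDE_reduction} to the normalized SME \eqref{eq:reduced_SME} would fail. Second, one must use the exact form of the reduced measurement operators from Proposition \ref{prop:operator_reduction}, namely that $\Gc_{D_j}$ reduces to a single $\Gc_{\check{D}_j}$ whereas $\Kc_{C_j}$ reduces to a \emph{sum} $\sum_k\Kc_{\check{C}_{j,k}}$; this is exactly why the counting drift in the reduced filter is the sum over $k$ rather than a single term, which is where I expect the main care to be needed. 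Once these identifications are in place, Assumption \ref{ass:observables} is exactly the hypothesis that the measurement drifts lie in $\Span\{O_j\}_{j=1}^r$, so that reproducing the $\Theta_t^j$ automatically reproduces the drift statistics.
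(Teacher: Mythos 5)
Your proposal is correct and follows essentially the same route as the paper: the first bullet via Theorem \ref{thm:SDE_reduction} and Corollary \ref{cor:normalization_quantum} under Assumption \ref{ass:identity}, and the two drift identities by combining Assumption \ref{ass:observables} with the reduced-operator identities of Proposition \ref{prop:operator_reduction} and the first bullet. The only (immaterial) difference is that where the paper cites Lemma \ref{lem:operator_reduction_dissipative} to get $\sum_k\tr[\Kc_{\check{C}_{j,k}}(\check{\rho}_t)]=\tr[\Jc_\As^*(C_j^*C_j)\check{\rho}_t]$, you derive the same trace identity from the Kraus decomposition $\check{\Kc}_{C_j}=\sum_k\Kc_{\check{C}_{j,k}}$ together with trace preservation of $\Rc_\As$ and duality, and you spell out the linear-combination bookkeeping that the paper leaves implicit.
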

\begin{proof}
    From Theorem \ref{thm:SDE_reduction} and Corollary \ref{cor:normalization_quantum}, under Assumption \ref{ass:identity}, we have that for all $\rho_0\in\Df(\Hc)$, for all $t\geq0$, we have that \[\tr[O_j \rho_t] = \frac{\tr[O_j \tau_t]}{\tr[\tau_t]} = \frac{\tr[\check{O}_j \check{\tau}_t]}{\tr[\check{\tau}_t]} = \tr[\check{O}_j\check{\rho}_t]\]
    for all $j=1,\dots,r$. Furthermore, using Assumption \ref{ass:observables}, Lemma \ref{lem:operator_reduction_dissipative}, and the above we have 
    \begin{align*}
        \tr[(D_j+D_j^*)\rho_t] &= \tr[(\check{D}_j+\check{D}_j^*)\check{\rho}_t]\\
        \tr[C_j\rho_t C_j^*] &= \tr[\Jc_\As^*(C_j^* C_j)\check{\rho}_t] = \sum_{\check{C}\in\Xc(C_j)}\tr[\check{C}\check{\rho}_t\check{C}^*],
    \end{align*}
    thus concluding the proof.
\end{proof}

\begin{remark}
    Notice that the proof of Theorem \ref{thm:SDE_reduction} holds for any algebra $\As$ that contains $\Ns^\perp$. $\alg(\Ns^\perp)$ is, by definition, the smallest such algebra so it is possibly the best choice in case one aim to find the smallest quantum model that reproduces the output dynamics, however, it is also possible to consider larger algebras than $\alg(\Ns^\perp)$ if this is necessary to impose further properties on the reduced model. For example we might have that $\alg(\Ns^\perp)$ is not $\Lc^*$-invariant but there might exist an algebra $\As\supseteq\Ns^\perp$ which is $\Lc^*$-invariant which might be preferable in certain cases. See the next section for an example of why one might desire invariant algebras. 
\end{remark}

\section{Algebra invariance and filter stability}

An important difference between the two reduced filters $\Sigma_L$ and $\Sigma_Q$ shall be noted. From Theorem \ref{thm:linear_filter_reduction} we have $v_t = \Rc_{\Ns^\perp}(\rho_t)$ for all $t\geq0$ while, from Theorem \ref{thm:SDE_reduction}, we have that $\check{\tau}_0 = \Rc_{\As}(\tau_0)$ but this does not hold in general for all $t$, i.e. there might exist some time $t>0$ such that $\check{\tau}_t \neq \Rc_{\As}(\tau_t)$. This is due to the fact that the algebra $\As$ is not necessarily $\Lc^*$-, $\Gc_{D_j}^*$ and $\Kc_{C_j}^*$-invariant. This prompts the following proposition.

\begin{proposition}
\label{thm:invariant_algebra}
    Under the assumptions of Theorem \ref{thm:SDE_reduction} we have that: $\check{\tau}_t = \Rc_\As(\tau_t)$ for all $t\geq0$ and for all $\tau_0\in\Df(\Hc)$ if and only if $\As$ is  $\Lc^*$-, $\Gc_{D_j}^*$- and $\Kc_{C_j}^*$-invariant for all $\{D_j\}_{j=1}^p$ and all $\{C_j\}_{j=1}^q$. 
\end{proposition}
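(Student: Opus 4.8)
The plan is to collapse the whole equivalence onto a single intertwining identity for $\Rc_\As$ and then invoke uniqueness of solutions of the two linear SDEs. Throughout I would use that, since $\CE_\As$ is the \emph{orthogonal} conditional expectation onto $\As$, it is self-adjoint, $\CE_\As=\CE_\As^*$, and its factors satisfy $\CE_\As=\Jc_\As\Rc_\As$, $\Rc_\As\Jc_\As=\Ic_{\check\As}$; in particular $\Rc_\As\CE_\As=\Rc_\As$ and $\CE_\As\Jc_\As=\Jc_\As$. For a generic superoperator $\Mc$ write $\check\Mc\equiv\Rc_\As\Mc\Jc_\As$, so that $\check\Mc\,\Rc_\As=\Rc_\As\Mc\Jc_\As\Rc_\As=\Rc_\As\Mc\,\CE_\As$. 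The first step I would isolate as a lemma is the chain of equivalences, valid for each $\Mc\in\{\Lc,\Gc_{D_j},\Kc_{C_j}\}$,
\[
\As\text{ is }\Mc^*\text{-invariant}\iff \CE_\As\Mc=\CE_\As\Mc\,\CE_\As\iff \Rc_\As\Mc=\check\Mc\,\Rc_\As .
\]
The first equivalence is just $\Mc^*(\As)\subseteq\As$ rewritten through the orthogonal projector $\CE_\As$ onto $\As$ together with a passage to adjoints. The second follows by left-multiplying the middle identity by $\Rc_\As$ (using $\Rc_\As\CE_\As=\Rc_\As$) and, conversely, the right identity by $\Jc_\As$ (using $\Jc_\As\Rc_\As=\CE_\As$). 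Since $\Qc=\Lc+\Ic-\sum_j\Kc_{C_j}$ and $\As$ is trivially $\Ic$-invariant, this also yields $\Rc_\As\Qc=\check\Qc\,\Rc_\As$ whenever $\As$ is $\Lc^*$- and $\Kc_{C_j}^*$-invariant.

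For the direction ($\Leftarrow$), assume $\As$ is $\Lc^*$-, $\Gc_{D_j}^*$- and $\Kc_{C_j}^*$-invariant, so the three intertwining identities hold. Applying $\Rc_\As$ to the Belavkin--Zakai SDE \eqref{eq:Belavkin-Zakai} and commuting $\Rc_\As$ past each coefficient, the process $y_t\equiv\Rc_\As(\tau_t)$ satisfies
\[
dy_t=\check\Qc(y_{t-})\,dt+\sum_{j=1}^p\check\Gc_{D_j}(y_{t-})\,dY_t^j+\sum_{j=1}^q[\check\Kc_{C_j}-\Ic_{\check\As}](y_{t-})\,dN_t^j ,
\]
which is exactly the reduced SDE \eqref{eq:quantum_reduced_filter_linear}; the $\Ic_{\check\As}$ term appears automatically because $\Rc_\As\Ic=\Rc_\As=\Ic_{\check\As}\Rc_\As$. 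As $y_0=\Rc_\As(\tau_0)=\check\tau_0$, uniqueness of strong solutions (as already used in Theorem \ref{thm:linear_filter_reduction}) gives $\Rc_\As(\tau_t)=\check\tau_t$ for all $t\geq0$, $\mathbb{Q}$-almost surely.

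For the converse ($\Rightarrow$), suppose $\check\tau_t=\Rc_\As(\tau_t)$ for all $t$ and all $\tau_0\in\Df(\Hc)$. Then $\Rc_\As(\tau_t)$ solves the reduced SDE, while applying $\Rc_\As$ to \eqref{eq:Belavkin-Zakai} exhibits it as the semimartingale with drift $\Rc_\As\Qc(\tau_{t-})$, diffusion coefficients $\Rc_\As\Gc_{D_j}(\tau_{t-})$, and jump coefficients $\Rc_\As[\Kc_{C_j}-\Ic](\tau_{t-})$. Substituting $\check\tau_{t-}=\Rc_\As(\tau_{t-})$ turns the reduced coefficients into $\Rc_\As\Qc\,\CE_\As(\tau_{t-})$, $\Rc_\As\Gc_{D_j}\CE_\As(\tau_{t-})$ and $\Rc_\As\Kc_{C_j}\CE_\As(\tau_{t-})$ (the $\Ic$ parts of the jump term cancel). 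Comparing the two semimartingale decompositions driven by the same independent families $\{Y^j\}$ and $\{N^j-t\}$ and matching integrands, then evaluating at $t=0$ where $\tau_0=\rho_0$ is an arbitrary density operator, I would obtain $\Rc_\As\Qc(\rho_0)=\Rc_\As\Qc\,\CE_\As(\rho_0)$, and likewise for $\Gc_{D_j}$ and $\Kc_{C_j}$. Since density operators span $\Bf(\Hc)$, these become the operator identities $\Rc_\As\Gc_{D_j}=\check\Gc_{D_j}\Rc_\As$, $\Rc_\As\Kc_{C_j}=\check\Kc_{C_j}\Rc_\As$, $\Rc_\As\Qc=\check\Qc\,\Rc_\As$; subtracting gives $\Rc_\As\Lc=\check\Lc\,\Rc_\As$, and the lemma of the first step returns $\Lc^*$-, $\Gc_{D_j}^*$- and $\Kc_{C_j}^*$-invariance of $\As$.

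The main obstacle I anticipate is making the coefficient-matching in the converse fully rigorous: one must argue that equality of two jump--diffusion semimartingales driven by the same independent Brownian motions and Poisson processes forces the separate equality of the $dt$, $dY^j$ and $dN^j$ integrands. This is the identifiability already exploited in Lemma \ref{lem:trick}, and I would carry it out by the same device---taking real and imaginary parts, isolating the Brownian integrands through the conditional quadratic variation and the compensated-jump integrands through the jumps of the process, and reading off the drift from the remaining finite-variation part---rather than by any new idea. A minor point to verify is that the predictable integrands are identified at $t=0^+$, which is precisely what licenses evaluating the identities at the arbitrary initial state $\rho_0$.
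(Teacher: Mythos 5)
Your proposal is correct and follows essentially the same route as the paper's proof: for sufficiency you intertwine $\Rc_\As$ with the SDE coefficients (via $\Rc_\As\Mc=\check\Mc\,\Rc_\As$, equivalent to invariance through $\Rc_\As=\Rc_\As\CE_\As$ and $\CE_\As=\Jc_\As\Rc_\As$) and invoke strong uniqueness, and for necessity you match the drift, diffusion and jump integrands of the two semimartingale representations using the conditional-quadratic-variation device of Lemma \ref{lem:trick} and then exploit arbitrariness of $\tau_0$ to pass to operator identities, exactly as the paper does. Your explicit evaluation at $t=0$ to harvest the arbitrariness of $\rho_0$ is a slightly cleaner packaging of the paper's "since this is true for arbitrary $\tau_0$" step, but it is not a different argument.
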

\begin{proof}
    Let us denote $\Tilde{\tau}_t \equiv \Rc_\As(\tau_t)$. Assume that $\As$ is $\Lc^*$-, $\Gc_{D_j}^*$ and $\Kc_{C_j}^*$-invariant for all $\{D_j\}_{j=1}^p$ and all $\{C_j\}_{j=1}^q$, then we have that $\CE_{\As}  \Lc  = \CE_{\As}  \Lc  \CE_{\As}$, $\CE_{\As} \Gc_{D_j} = \CE_{\As}\Gc_{D_j}\CE_{\As}$ and $\CE_{\As} \Kc_{C_j} = \CE_{\As}\Kc_{C_j}\CE_{\As}$. Recalling that $\Rc_\As = \Rc_\As\CE_\As$ we have 
    \begin{align*}
        \Tilde{\tau}_t&=\Rc_\As(\tau_t) \\&= \underbrace{\Rc_\As(\tau_0)}_{\check{\tau}_0} + \int_0^t \Rc_\As\Qc(\tau_{s-}) ds + \sum_{j=1}^p \int_0^t \Rc_\As\Gc_{D_j}(\tau_{s-}) dW_s^j + \sum_{j=1}^q\int_0^t \Rc_\As [\Kc_{C_j}-\Ic](\tau_{s-}) dN_j^s\\
        &= \check{\tau}_0 + \int_0^t \Rc_\As\CE_\As\Qc\CE_\As(\tau_{s-}) ds + \sum_{j=1}^p \int_0^t \Rc_\As\CE_\As\Gc_{D_j}\CE_\As(\tau_{s-}) dW_s^j \\&\hphantom{=}+ \sum_{j=1}^q\int_0^t \Rc_\As \CE_\As [\Kc_{C_j}-\Ic]\CE_\As(\tau_{s-}) dN_j^s\\
        &= \check{\tau}_0 + \int_0^t \underbrace{\Rc_\As\Qc \Jc_\As}_{\check{\Qc}}\underbrace{\Rc_\As(\tau_{s-})}_{\Tilde{\tau}_t} ds + \sum_{j=1}^p \int_0^t \underbrace{\Rc_\As \Gc_{D_j} \Jc_\As}_{\check{\Gc}_{D_j}}\underbrace{\Rc_\As (\tau_{s-})}_{\Tilde{\tau}_t} dW_s^j \\&\hphantom{=}+ \sum_{j=1}^q\int_0^t \underbrace{\Rc_\As [\Kc_{C_j}-\Ic]\Jc_\As}_{\check{\Kc}_{C_j}-\Ic_{\check{\As}}}\underbrace{\Rc_\As(\tau_{s-})}_{\Tilde{\tau}_s} dN_j^s\\
        &= \check{\tau}_0 + \int_0^t {\check{\Qc}}({\Tilde{\tau}_{s-}}) ds + \sum_{j=1}^p \int_0^t {\check{\Gc}_{D_j}}({\Tilde{\tau}_{s-}}) dW_s^j + \sum_{j=1}^q\int_0^t [{\check{\Kc}_{C_j}-\Ic_{\check{\As}}}]({\Tilde{\tau}_{s-}}) dN_j^s,
    \end{align*}
    where 
    \begin{align*}
        \check{\tau}_t &= \check{\tau}_0 + \int_0^t {\check{\Qc}}(\check{\tau}_{s-}) ds + \sum_{j=1}^q \int_0^t {\check{\Gc}_{D_j}}(\check{\tau}_{s-}) dW_s^j + \sum_{j=1}^p \int_0^t {\check{\Gc}_{D_j}}({\Tilde{\tau}_{s-}}) dW_s^j + \sum_{j=1}^q\int_0^t [{\check{\Kc}_{C_j}}-\Ic_{\check{\As}}]({\check{\tau}_{s-}}) dN_j^s,
    \end{align*}
    which implies $\Tilde{\tau}_t = \check{\tau}_t$ for all $t\geq0$ by strong uniqueness of the solution.

    Assume now that $\Rc_\As(\tau_t) = \check{\tau}_t$ for all $t\geq0$ and for all $\tau_0\in\Df(\Hc)$. Then, it holds 
    \begin{align*}
    &\cancel{\Rc_\As(\tau_0)} + \int_0^t \Rc_\As\Qc(\tau_{s-}) ds + \sum_{j=1}^q \int_0^t \Rc_\As\Gc_{D_j}(\tau_{s-}) dW_s^j+\sum_{j=1}^q \int_{0}^t \Rc_\As(\Kc_{C_j}-\Ic)(\tau_{s-}) dN_s^j \\&= \cancel{\check{\tau}_0} + \int_0^t {\check{\Qc}}(\check{\tau}_{s-}) ds + \sum_{j=1}^q \int_0^t {\check{\Gc}_{D_j}}(\check{\tau}_{s-})dW_s^j +\sum_{j=1}^q \int_{0}^t (\check{\Kc}_{C_j}-\Ic_{\check{\As}})(\check{\tau}_{s-}) dN_s^j.
    \end{align*}
    We then have
     \begin{align*}
         &\int_0^t \Rc_\As\Qc(\tau_{s-}) ds + \sum_{j=1}^p \int_0^t \Rc_\As\Gc_{D_j}(\tau_{s-}) dW_s^j +\sum_{j=1}^q \int_{0}^t \Rc_\As(\Kc_{C_j}-\Ic)(\tau_{s-}) dN_s^j\\ &= \int_0^t \Rc_\As{{\Qc}}\CE_\As({\tau}_s) ds + \sum_{j=1}^q \int_0^t \Rc_\As{{\Gc}_{D_j}}\CE_\As({\tau}_s) dW_s^j +\sum_{j=1}^q \int_{0}^t \Rc_\As(\Kc_{C_j}-\Ic_{\check{\As}})\CE_\As(\tau_{s-}) dN_s^j .
     \end{align*}
    In particular this yields
    \begin{align*}
        \int_0^t [\Rc_\As\Qc -\Rc_\As{{\Qc}}\CE_\As](\tau_{s-}) ds + \sum_{j=1}^p \int_0^t [\Rc_\As\Gc_{D_j}-\Rc_\As{{\Gc}_{D_j}}\CE_\As](\tau_{s-}) dW_s^j \\+ \sum_{j=1}^q \int_{0}^t [\Rc_\As\Kc_{C_j}-\cancel{\Rc_\As} - \Rc_\As\Kc_{C_j}\CE_\As + \cancel{\Rc_\As}](\tau_{s-}) dN_s^j = 0. 
    \end{align*}
    Taking again conditional quadratic variation  (as in the proof of Lemma \ref{lem:trick}) in the above equality, we have for all $s\geq0$
    \begin{eqnarray*}
     \Rc_\As\Qc -\Rc_\As{{\Qc}}\CE_\As(\tau_{s}) &=& 0 \\
     \Rc_\As\Gc_{D_j}-\Rc_\As{{\Gc}_{D_j}}\CE_\As(\tau_{s})&=&0\\
     \Rc_\As\Kc_{C_j} - \Rc_\As\Kc_{C_j}\CE_\As  (\tau_{s})&=&0,
    \end{eqnarray*}
    for all $\{D_j\}_{j=1}^p$ and for all $\{C_j\}_{j=1}^q$. Since this is true for arbitrary $\tau_0$, we have $\Rc_\As\Qc -\Rc_\As{{\Qc}}\CE_\As = 0$, $\Rc_\As\Gc_{D_j}-\Rc_\As{{\Gc}_{D_j}}\CE_\As=0$ and $\Rc_\As\Kc_{C_j} - \Rc_\As\Kc_{C_j}\CE_\As  =0$. Using the definition of $\Qc$, we deduce $\Rc_\As\Lc -\Rc_\As{{\Lc}}\CE_\As = 0$.
 
 Then, left-applying $\Jc_\As$ to both sides of both equations we obtain $\CE_\As\Lc =\CE_\As{{\Lc}}\CE_\As$, $\CE_\As\Gc_{D_j}=\CE_\As{{\Gc}_{D_j}}\CE_\As$ and $\CE_\As\Kc_{C_j} = \CE_\As\Kc_{C_j}\CE_\As$ or, in other words, $\As$ is $\Lc^*$-, $\Gc_{D_j}^*$-and $\Kc_{C_j}^*$-invariant for all $\{D_j\}_{j=1}^p$and all $\{C_j\}_{j=1}^q$, concluding the proof.
\end{proof}

\begin{figure}
    \centering
    \begin{subfigure}[t]{0.45\textwidth}
        \includegraphics[width=0.95\textwidth]{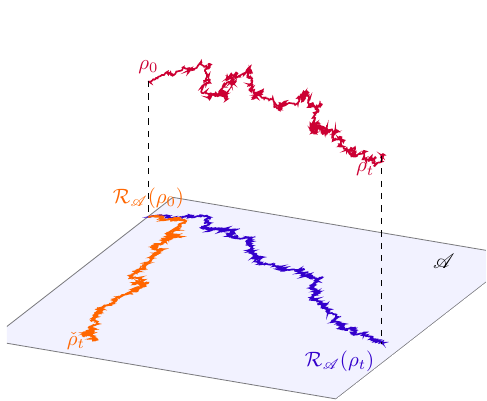}
        \caption{Case of $\As$ that is NOT invariant.}
    \end{subfigure}
    \begin{subfigure}[t]{0.45\textwidth}
        \includegraphics[width=0.95\textwidth]{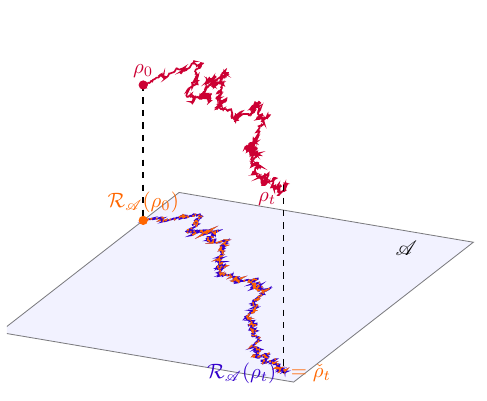}
        \caption{Case of $\As$ that is invariant.}
    \end{subfigure}
    \caption{Graphical representation of the reduced model when the algebra $\As$ is or is not $\Lc^*$-, $\Gc_{D_j}^*$- and $\Kc_{C_j}^*$-invariant.}
    \label{fig:algebra_invariance}
\end{figure}

Albeit the two trajectories $(\Rc_\As(\tau_t))_{t\geq0}$ and $(\check{\tau}_t)_{t\geq0}$ might differ in principle, as shown in Figure \ref{fig:algebra_invariance}, the important property to ensure that the two filters produce the same expectation values is that their projection onto $\Ns^\perp$ is identical, which was proven in the previous sections.

\subsection{Filter stability}
Given a quantum filter of the type \eqref{eq:SME}, we might imagine it is initialized in an estimated state $\rho_0^e$ of the true initial condition of the system $\rho_0$. This is typically the situation where the initial state $\rho_0$ is unknown. We then update a process $(\rho_t^e)$, initialized with an arbitrary state $\rho_0^e$, following the result of the measurement as if it was the true trajectory. Note that the results of measurement rely on the true trajectory. Assuming that the original filter is stable, i.e. the estimate $(\rho_t^e)$ converges to the true state $(\rho_t)_{t\geq0}$, we want to investigate if the reduced model \eqref{eq:reduced_SME}, initialized in the state $\check{\rho}_0^e =\Rc_\As(\rho_0^e)$ converges to the quantum trajectory that is initialized in $\check{\rho}_0 = \Rc_\As(\rho_0)$.
Let $\Fs$ denote the quantum fidelity, that is $\Fs(\rho,\sigma)=\tr\left[\sqrt{\sqrt{\rho}\sigma\sqrt{\rho}}\right],$
for any two states $\rho,\sigma\in\Df(\Hc).$ Recalling that  that $\Fs(\rho,\sigma)=\Fs(\sigma,\rho)$ and $0\leq\Fs(\rho,\sigma)\leq1$, with $\Fc(\rho,\sigma)=1$ if and only if $\rho=\sigma$ and $\Fs(\rho,\sigma)=0$ if and only if $\rho$ and $\sigma$ have orthogonal images we have that the fidelity $\Fs$ provides a good similarity measure between states. We refer the reader to \cite[Section 3.2]{watrous2018theory} for more details on the quantum fidelity function.

\begin{proposition}
\label{prop:filter_stability}
    Assume that the algebra $\As$ on which we restrict our model is $\Lc^*$-, $\Gc_{D_j}^*$ and $\Kc_{C_j}^*$-invariant for all $j$. Then, for all $t\geq 0$, $$\Fs(\check{\rho}_t,\check{\rho}_t^e)\geq\Fs(\rho_t,\rho_t^e).$$
\end{proposition}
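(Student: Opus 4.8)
The plan is to reduce the statement to the data-processing (monotonicity) inequality for the quantum fidelity under a completely positive trace-preserving map, the relevant channel being the reduction map $\Rc_\As$. The whole argument is pathwise, at each fixed realization of the driving records $\{(Y_t^j)_{t\geq0}\}_{j=1}^p$ and $\{(N_t^j)_{t\geq0}\}_{j=1}^q$, which are common to the true trajectory, the estimate, and both of their reductions.

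First I would invoke Proposition \ref{thm:invariant_algebra}. Since $\As$ is assumed $\Lc^*$-, $\Gc_{D_j}^*$- and $\Kc_{C_j}^*$-invariant for all $j$, the unnormalized reduced trajectory coincides exactly with the reduction of the full unnormalized trajectory, $\check{\tau}_t = \Rc_\As(\tau_t)$ for all $t\geq0$ and for every initial condition in $\Df(\Hc)$. Applying this separately to the two initial conditions $\rho_0$ and $\rho_0^e$ (the proposition holds for arbitrary $\tau_0\in\Df(\Hc)$), and driving both reduced filters with the same measurement records, I obtain $\check{\tau}_t = \Rc_\As(\tau_t)$ and $\check{\tau}_t^e = \Rc_\As(\tau_t^e)$, $\mathbb{Q}$-almost surely.

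Next I would pass to the normalized states. Because $\Rc_\As$ is trace-preserving, $\tr[\check{\tau}_t]=\tr[\Rc_\As(\tau_t)]=\tr[\tau_t]$, so by linearity $\check{\rho}_t = \check{\tau}_t/\tr[\check{\tau}_t] = \Rc_\As(\tau_t)/\tr[\tau_t] = \Rc_\As(\rho_t)$, and likewise $\check{\rho}_t^e = \Rc_\As(\rho_t^e)$. Thus the two reduced normalized trajectories are the images of the two full trajectories under one and the same CPTP map $\Rc_\As$. Finally I would conclude by the monotonicity of fidelity under quantum channels (see \cite[Section 3.2]{watrous2018theory}): for any CPTP map $\Phi$ and any states $\rho,\sigma$, one has $\Fs(\Phi(\rho),\Phi(\sigma))\geq\Fs(\rho,\sigma)$. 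Taking $\Phi=\Rc_\As$, $\rho=\rho_t$, and $\sigma=\rho_t^e$ yields $\Fs(\check{\rho}_t,\check{\rho}_t^e) = \Fs(\Rc_\As(\rho_t),\Rc_\As(\rho_t^e))\geq\Fs(\rho_t,\rho_t^e)$ for every $t\geq0$.

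The main obstacle is not the fidelity estimate itself, which is immediate once the channel structure is exposed, but rather securing the clean pathwise identity $\check{\rho}_t=\Rc_\As(\rho_t)$. This is exactly where the invariance hypothesis is indispensable: without it Proposition \ref{thm:invariant_algebra} fails, so one controls only the projection of $\check{\tau}_t$ onto $\Ns^\perp$ rather than the full reduced state, and $\check{\rho}_t$ would no longer be a fixed channel applied to $\rho_t$, leaving the monotonicity argument inapplicable. I would therefore be careful to state explicitly that the step $\check{\rho}_t^e=\Rc_\As(\rho_t^e)$ also requires $\As$-invariance and not merely the containment $\Ns^\perp\subseteq\As$ used in Theorem \ref{thm:SDE_reduction}.
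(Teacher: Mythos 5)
Your proposal is correct and follows essentially the same route as the paper's own proof: invoke Proposition \ref{thm:invariant_algebra} under the invariance hypothesis to get the pathwise identity between the reduced trajectory and the image of the full trajectory under the CPTP map $\Rc_\As$, then conclude by monotonicity of fidelity under quantum channels. Your explicit normalization step ($\tr[\check{\tau}_t]=\tr[\tau_t]$ by trace preservation, hence $\check{\rho}_t=\Rc_\As(\rho_t)$) is a detail the paper glosses over but is the same argument.
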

\begin{proof}
From the assumption that $\As$ is $\Lc^*$- and $\Gc_{D_j}^*$ and $\Kc_{C_j}^*$-invariant for all $j$ and from Proposition \ref{thm:invariant_algebra} we have that $\check{\rho}_t=\Rc_\As(\rho_t)$ and $\check{\rho}_t^e=\Rc_\As(\rho_t^e)$ with $\Rc_\As$ CPTP for all $t\geq0$. Then, it is known that for any CPTP map $\Psi$ and any states $\rho,\sigma$, we have $\Fs(\Psi(\rho),\Psi(\sigma))\geq \Fs(\rho,\sigma)$, hence $\Fs(\check{\rho}_t,\check{\rho}_t^e)\geq\Fs(\rho_t,\rho_t^e)$.
\end{proof}

The following corollary expresses that under the condition of the previous proposition the stability of a quantum filter is preserved by the reduction.

\begin{corollary}
Under the assumptions of Proposition \ref{prop:filter_stability} and assuming that the original quantum filter is stable, i.e. $\lim_{t\to\infty}\Fs(\rho_t,\rho_t^e) =1$ we have $\lim_{t\to\infty}\Fs(\check{\rho}_t,\check{\rho}_t^e)=1$.
\end{corollary}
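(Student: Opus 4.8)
The plan is to obtain the conclusion directly from the pointwise fidelity inequality of Proposition \ref{prop:filter_stability} by a squeeze argument. First I would record the two bounds that bracket the reduced fidelity for every fixed $t\geq0$. On one side, since $\check{\rho}_t$ and $\check{\rho}_t^e$ are density operators on $\check{\As}$, the general property $0\leq\Fs(\rho,\sigma)\leq 1$ recalled just before the statement gives the upper bound $\Fs(\check{\rho}_t,\check{\rho}_t^e)\leq 1$. On the other side, the invariance hypotheses assumed here are exactly those of Proposition \ref{prop:filter_stability}, so I may invoke its conclusion $\Fs(\check{\rho}_t,\check{\rho}_t^e)\geq\Fs(\rho_t,\rho_t^e)$, valid for all $t\geq0$.

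Combining the two gives, for every $t\geq0$, the sandwich
\[
\Fs(\rho_t,\rho_t^e)\;\leq\;\Fs(\check{\rho}_t,\check{\rho}_t^e)\;\leq\;1.
\]
I would then pass to the limit $t\to\infty$. By the stability hypothesis the left-hand side tends to $1$, while the right-hand side is the constant $1$; hence the middle term is forced to converge to $1$ by the squeeze theorem, which is precisely the claimed identity $\lim_{t\to\infty}\Fs(\check{\rho}_t,\check{\rho}_t^e)=1$.

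There is essentially no genuine obstacle in this argument: the substance has already been absorbed into Proposition \ref{prop:filter_stability}, whose proof uses that $\check{\rho}_t=\Rc_\As(\rho_t)$ and $\check{\rho}_t^e=\Rc_\As(\rho_t^e)$ together with the monotonicity of fidelity under the CPTP map $\Rc_\As$. The only points one must be careful about are that the inequality from the Proposition holds for every finite $t$ (not merely asymptotically) and that the bound $\Fs\leq 1$ is uniform in $t$, both of which are immediate; thereafter the squeeze theorem closes the argument with no further computation.
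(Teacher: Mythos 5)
Your proof is correct and matches the paper's (implicit) argument: the paper states this corollary without proof, treating it as an immediate consequence of Proposition \ref{prop:filter_stability} via exactly the sandwich $\Fs(\rho_t,\rho_t^e)\leq\Fs(\check{\rho}_t,\check{\rho}_t^e)\leq 1$ that you write down. The same inequality also handles the other modes of convergence the paper mentions (in probability, in $L^p$), since $0\leq 1-\Fs(\check{\rho}_t,\check{\rho}_t^e)\leq 1-\Fs(\rho_t,\rho_t^e)$ pointwise.
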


The above convergences can be in any sense, almost surely, in $L^p$, in probability. In particular, we know that if purification occurs
$$\lim_{t\to\infty} \Fs(\rho_t,\rho^{e}_t)=1$$
almost surely and in $L^1$\cite{amini2021asymptotic}, then the results holds for the reduced model.

\section{Illustrative applications}
\label{sec:examples}
\subsection{Generalized quantum non demolition}\label{sec:qnd}
Let us now consider a case where all the operators are block-diagonal in the same basis. This represents a generalization of the Quantum Non-Demolition (QND) condition, which will be later derived as a special case. 

Consider an Hilbert space $$\Hc=\bigoplus_{k=1}^K \Hc_k.$$ This decomposition of the Hilbert space induces a natural block-decomposition of the set of operators $\Bf(\Hc)$. In the following we consider operators that are block-diagonal in the block-decomposition induced by the decomposition of $\Hc$.

As announced, an important model which fits this situation is the so called \emph{quantum non demolition measurement} model. This case is given by one dimensional component $\mathcal H_k=\mathbb C\vert e_k\rangle$, where $\{e_1,\ldots,e_K\}$ is an orthonormal basis for $\Hc$. This basis is called a pointer basis and the involved operators are diagonal in this basis. When studying the long time behaviour of the monitored system under non demolition measurements the quantities $(\tr(\vert e_l\rangle\langle e_l\vert\rho_t)), l=1,\ldots,K$ play a crucial role. It is then natural to consider our reduction model in this situation. In particular we study the general case where block diagonal are of any dimension (not only one)

\subsubsection{Generalized QND}

Consider a Hamiltonian, noise operators and measurement operators that are block-diagonal in the basis provided by this decomposition of the Hilbert space $\Hc$, i.e. 
\[H= \bigoplus_{k=1}^K H_k, \qquad L_j= \bigoplus_{k=1}^K L_{j,k}, \qquad D_j = \bigoplus_{k=1}^K D_{j,k}, \qquad\text{and}\qquad C_j = \bigoplus_{k=1}^K C_{j,k} \]
with $H_k,L_{j,k},D_{j,k},C_{j,k}\in\Bf(\Hc_k)$.
We further assume to be interested in reproducing the expectation value observables $O_j$ that are also block-diagonal:
\[O_j = \bigoplus_{k=1}^K O_{j,k}\]
where $O_{j,k}\in\Bf(\Hc_k)$. As an example one can consider to be interested in reproducing the probability of the state being in each of the subspaces $\Hc_j,$ which implies the observables of interest are the orthogonal projectors onto each $\Hc_j$. 
In order to satisfy both Assumptions \ref{ass:observables} and  \ref{ass:identity}, we need to include $\one$, $D_j+D_j^*$ and $C_j^* C_j$ to the set of observables of interest. One can easily verify that these operators are block-diagonal as well.

For generic choices of the diagonal blocks, the space orthogonal to the non-observable space $\Ns^\perp$ generates the entire block-diagonal algebra
\[\As = \bigoplus_{k=1}^K \Bf(\Hc_k).\]
Note that, it is possible, for specific choices of the diagonal blocks, that the non-observable space is such that $\Ns^\perp\subseteq\alg(\Ns^\perp)\subsetneq\As$ or, in other words, a smaller reduction could exist. An example of this is shown in the next subsection.
Furthermore, for any choice of the diagonal blocks of $L_j,D_j,C_j$ and $O_j$, the algebra $\As$ contains $\Span\{O_j\}$, is $\Lc^*$-, $\Gc_{D_j}^*$- $\Kc^*_{C_j}$-invariant (trivially since sums and products of block-diagonal matrices remain block-diagonal) hence, in general, we have $\Ns^\perp \subseteq \alg(\Ns^\perp) \subseteq \As$. This shows that, albeit in certain cases the reduction of the filter onto $\As$ could be non-minimal, it is always possible to reduce the filter onto $\As$. 

In such a case the reduction and injection super-operators result to be 
\begin{align*}
    \Rc_\As(X) &\equiv \bigoplus_{k=1}^K V_k^* X V_k = \bigoplus_{k=1}^K X_k = \check{X},\\
    \Jc_\As(\check{X}) &\equiv \check{X}
\end{align*}
where $V_k = \ket{k} \otimes \one_{\Hc_k}$ are the isometries $V_k:\Hc_k\to\Hc$. 

The reduced un-normalized state $$\check{\tau}_t = \Rc_\As(\tau_t) = \bigoplus_{k=1}^K \check{\tau}_{t,k}$$ then evolves according to the linear stochastic differential equation 
\begin{align*}
    d\check{\tau}_{t,k} = \check{\Lc}_{k}(\check{\tau}_{t-,k})dt + \sum_{j=1}^{p}\Gc_{{D}_{j,k}}(\check{\tau}_{t-,k}) dY_t^j +\sum_{j=1}^q \left[\Kc_{{C}_{j,k}}-\Ic_{\check{\As}}\right](\check{\tau}_{t-,k}) (dN_t^j-dt)  \qquad \forall k=1,\dots,M
\end{align*}
where 
\[\check{\Lc}_{k}(\check{\tau}) = -i[H_k,\check{\tau}] + \sum_{j=1}^{m}\Dc_{L_{j,k}}(\check{\tau}) + \sum_{j=1}^{p}\Dc_{D_{j,k}}(\check{\tau}) + \sum_{j=1}^{q}\Dc_{C_{j,k}}(\check{\tau}).\]
With this, one can observe that each block $\check{\tau}_{t,k}$ of the un-normalized state $\check{\tau}_{t}$ evolves independently of all the others. This is not the case when considering the normalized state $$\check{\rho}_t = \frac{\check{\tau}_t}{\tr(\check{\tau}_t)}= \Rc_\As(\rho_t) = \bigoplus_{k=1}^K \check{\rho}_{t,k}$$ where each block evolves trough the SME 
\begin{align*}
        d\check{\rho}_{t,k} = & \check{\Lc}_k(\check{\rho}_{t-,k}) dt + \sum_{j=1}^{p} \left[ \Gc_{D_{j,k}}(\check{\rho}_{t-,k}) - \check{\rho}_{t-,k}\sum_{h=1}^K\tr[\Gc_{D_{j,h}}(\check{\rho}_{t-,h})] \right] \left(dY_t^j - \sum_{h=1}^K\tr[\Gc_{D_{j,h}}(\check{\rho}_{t-,h})] dt\right)\\ & +  \sum_{j=1}^{q}\left[\frac{\Kc_{C_{j,k}}(\check{\rho}_{t-,k})}{\sum_{h=1}^K\tr[\Kc_{C_{j,h}}(\check{\rho}_{t-,h})]}-\check{\rho}_{t-,k}\right]\left(dN_t^j-\sum_{h=1}^K\tr[\Kc_{C_{j,h}}(\check{\rho}_{t-,h})]dt\right),
\end{align*}
which clearly depends on all the blocks, (see also the discussion in \cite{amini2021asymptotic}).

The expectation values of interest is then obtained as
\[\tr[O_j \rho_t] = \sum_{k=1}^K\tr[O_{j,k}\check{\rho}_{t,k}] =  \frac{\sum_{k=1}^K\tr[O_{j,k}\check{\tau}_{t,k}]}{\sum_{h=1}^K\tr[\check{\tau}_{t,h}]}. \]

Note that the fact that each block of the reduced un-normalized state $\check{\tau}_t$ evolves independently of the others might provide a computational simulation advantage. In fact, to simulate the expectation values of interest,  one can simulate each block $\check{\tau}_{t,k}$ independently, either in parallel or in series depending on the available resources, and then sum the results to obtain the desired expectation values. Notice that the potential for independent block simulation for the average semigroup dynamics was also found in \cite{prxq2024} in presence of strong symmetries, albeit in that case the evolution is already linear and can be simulated directly in block form. 

\subsubsection{Quantum non-demolition continuous measurement}
As a special case of the example we just presented, we can focus on quantum non-demolition measurements in continuous time.
Consider an Hamiltonian and noise operators that are block-diagonal in the basis provided by the decomposition of the Hilbert space $\Hc$, i.e. 
\[H= \bigoplus_{k=1}^K H_k, \qquad\text{and}\qquad L_j= \bigoplus_{k=1}^K L_{j,k}\]
with $H_k,L_{j,k}\in\Bf(\Hc_k)$.
Furthermore, let us consider measurement operators that, in each diagonal block, are proportional to the identity operator acting on the relative subspace 
$\Hc_k$, i.e. 
\[\qquad D_j = \bigoplus_{k=1}^K d_{j,k}\one_{\Hc_{k}}, \qquad\text{and}\qquad C_j = \bigoplus_{k=1}^K c_{j,k} \one_{\Hc_k} \]
with $d_{j,k},c_{j,k}\in\Rb$.
Assume then that we are interested in reproducing the probability of the state being in each of the subspaces $\Hc_j$, i.e. we consider the observables of interest $O_j$ to be the orthogonal projectors onto $\Hc_j$:
\[O_j = \bigoplus_{k=1}^K \one_{\Hc_k}\delta_{j,k}\qquad \forall j=1,\dots,K,\]
where $\delta_{j,k}$ denotes the Kronecker delta.
With this, one can verify that both Assumptions \ref{ass:observables} and \ref{ass:identity} are satisfied, and we can thus proceed with our proposed procedure. 
Note that the typical QND setting \cite{Benoist_2014,LIANG2024111590} can be seen as a special case where $\dim(\Hc_k)=1$ for all $k$ or where $H, L_j$ are also diagonal in the standard basis.

As a first step, we shall compute the operator subspace $\Ns^\perp$. From the properties of the non-observable subspace presented in Proposition \ref{prop:non-observable} we know that $\Span\{O_j\}_{j=1}^K\subseteq\Ns^\perp$. We also know that $\Ns^\perp$ is also $\Lc^*$- $\Gc_{D_j}^*$- and $\Kc_{C_j}^*$-invariant. One can then observe that $\Span\{O_j\}_{j=1}^K$ is $\Gc_{D_j}^*$- and $\Kc_{C_j}^*$-invariant since 
\begin{align*}
    \Gc_{D_j}^*(O_h) &= D_j O_h + O_h D_j^* = \bigoplus_{k=1}^K 2d_{j,k}\delta_{h,k}\one_{\Hc_k},\\
    \Kc_{C_j}^*(O_h) &= C_j^* O_h C_j = \bigoplus_{k=1}^K c_{j,k}^2\delta_{h,k}\one_{\Hc_k}.
\end{align*}
Verifying that $\Span\{O_j\}_{j=1}^K$ is also $\Lc^*$-invariant is also straightforward. One can in fact verify that $\Span\{O_j\}_{j=1}^K$ is contained in $\ker\Lc^*$ by computing:
\begin{align*}
    [H,O_h] &= HO_h - O_hH = \bigoplus_{k=1}^K [H_k, \one_{\Hc_k}]\delta_{h,k} = 0\\
    \Dc_{L_j}^*(O_h) &= L_j^* O_h L_j -\frac{1}{2}\{L_j^* L_j, O_h\} = \bigoplus_{k=1}^K (L_{j,k}^* \one_{\Hc_k} L_{j,k} - \frac{1}{2} \{L_{j,k}^*  L_{j,k}, \one_{\Hc_k}\} )\delta_{h,k}=0,
\end{align*}
and $\Dc_{D_j}^*(O_h) = 0$ and $\Dc_{C_j}^*(O_h)=0$.

Then, since $\Span\{O_j\}_{j=1}^K$ is $\Lc^*$-, $\Gc_{D_j}^*$-, and $\Kc_{C_j}^*$-invariant and is trivially the smallest operators subspace that contains itself we find
\[\Ns^\perp = \Span\{O_j\}_{j=1}^K. \]
To proceed with the reduction procedure one should then find the algebra $\alg(\Ns^\perp)$ generated by the subspace $\Ns^\perp$. One can however notice that $\Ns^\perp$ is already an abelian algebra of dimension $M$ hence \[\As = \Span\{O_j\}_{j=1}^K = \bigoplus_{k=1}^K \Cb  \one_{\Hc_k}\]
where we also expressed its Wedderburn decomposition. This allows us to write the reduction and injection super-operators that factor the conditional expectation:
\begin{align*}
    \Rc_\As(X) &\equiv \bigoplus_{k=1}^K \tr[ V_k^* X V_k]  = \bigoplus_{k=1}^K x_k = \check{X} \quad \in\Cb^K \\
    \Jc_\As(\check{X}) &\equiv \bigoplus_{k=1}^K x_k \one_{\Hc_k},
\end{align*}
where $V_k = \ket{k} \otimes \one_{\Hc_k}$ are the isometries $V_k:\Hc_k\to\Hc$. 

We finally have all the elements to compute the reduced model. 
First of all one can notice that the action of the Lindblad generator on the algebra is null, i.e. $\Lc\circ\Rc = 0$ hence there is no need in computing the reduced Hamiltonian and noise operators since $\check{\Lc}=0$.

For the measurement operators the reduction process is quite simple. First of all we can notice that, by assumption, both $D_j$ and $C_j$ are all block-diagonal. This implies that the reduced operators will also result (block-)diagonal (in the terminology used in Appendix \ref{sec:noise_operators_reduction} we only have $e=0$). We then need to construct an orthonormal operator basis for the spaces that get factored out and express the original operators in that basis. Let us consider a set of operator basis $\{E_j^k\}$ for the spaces $\Bf(\Hc_k)$ such that $E_0^k=\one_{\Hc_k}/\dim(\Hc_k)^\um$ for all $k$. Then the original measurement operators can be written as 
\[ \qquad D_j = \bigoplus_{k=1}^K d_{j,k}\dim(\Hc_k)^\um E_0^k, \qquad C_j = \bigoplus_{k=1}^K c_{j,k}\dim(\Hc_k)^\um E_0^k.\]
Using then Proposition \ref{prop:operator_reduction} we can then directly compute the reduced measurement operators:
\[D_j \,\to\, \check{D}_j = \bigoplus_{k=1}^K d_{j,k},\qquad C_j \,\to\, \check{C}_j = \bigoplus_{k=1}^K c_{j,k} .\]

The reduced state \[\check{\rho}_t = \Rc_\As(\rho_t) = \bigoplus_{k=1}^K p_{t,k}\] with $p_{t,k}\in[0,1]$ and $\sum_{k=1}^K p_{t,k} =1$, $\forall t\geq0$ then evolves according to the SME 
\[d\check{\rho}_t = \sum_{j=1}^p (\Gc_{\check{D}_j}(\check{\rho}_{t-}) - \tr[\Gc_{\check{D}_j}(\check{\rho}_{t-})]\check{\rho}_{t-}) dW_t^j + \sum_{j=1}^q\left(\frac{\Kc_{\check{C}_j}(\check{\rho}_{t-})}{\tr[\Kc_{\check{C}_j}(\check{\rho}_{t-})]}  - \check{\rho}_{t-} \right) (dN_t^j-\tr[\Kc_{\check{C}_j}(\check{\rho}_{t-})]).\]

Because the reduced state $\check{\rho}_t$ is diagonal, one can also represent the same evolution explicitly expressing each diagonal element, obtaining
\[dp_{t,k} = \sum_{j=1}^p \left(2d_{j,k}p_{t-,k}- p_{t-,k} \sum_{h=1}^K 2d_{j,h}p_{t-,h} \right)dW_t^j + \sum_{j=1}^q \left(\frac{c_{j,k}^2 p_{t-,k}}{\sum_{h=1}^K c_{j,h}^2 p_{t-,h}}-p_{t-,k}\right)\left(dN_t^j - \sum_{h=1}^K c_{j,h}^2 p_{t-,h}\right)\]
rediscovering the form found in \cite{Benoist_2014}.
The probability of the state being in a subspace $\Hc_j$ can then be computed as
\[\tr[O_j \rho_t] = \tr[\check{O}_j \check{\rho}_t] = p_{t,j}\]
where $\check{O}_j = \ketbra{j}{j}$. 

\subsection{Measured spin chains}
\label{sec:measured_spin_chains}
We  consider here a model consisting of a spin chain undergoing both homodyne- and counting-type measurement. 
Specifically, we here consider a model composed by $N$ qubits, i.e. $\Hc= \otimes_{j=1}^N \Hc_j$ with $\Hc_j \simeq \Cb^2$. Let then $\sigma_q$ with $q\in\{0,x,y,z\}$ denote the usual Pauli matrices with $\sigma_0\equiv\one_2$ and \[\sigma_q^{(j,N)} \equiv \one_{2^{j-1}} \otimes \sigma_q \otimes \one_{2^{N-j}}\] the operators in $\Bf(\Hc) = \Cb^{2^{N}\times 2^N}$ that act nontrivially only on the $j$-th qubit. Similarly, we denote by $\sigma_\pm^{(j,N)}\equiv \frac{1}{2}\left(\sigma_x^{(j,N)}\pm i\sigma_y^{(j,N)}\right)$ local raising and lowering operators. Whenever there is no confusion on the space onto which $\sigma_q^{(j,N)}$ acts we drop the dependence on $N$ using the symbol $\sigma_q^{(j)}$. 

\subsubsection{Model description}
We assume that the spins in the chain interact trough an inhomogeneous Ising Hamiltonian with transverse field, which reads 
\begin{equation}
    H = \sum_{j=1}^{N-1} \delta_j \sigma_x^{(j)}\sigma_x^{(j+1)} + \sum_{j=1}^N \mu_j \sigma_z^{(j)},
\end{equation}
with $\delta_j,\mu_j\in\Rb$.
Furthermore, the entire system undergoes continuous-time local measurement described by the operator
\[ D_j \equiv \gamma_j \sigma_z^{(j)}, \qquad \forall j = 1,\dots, N\]
as well as counting-type measurements described by the operators
\[ C_j \equiv \alpha_j \sigma_-^{(j)}, \qquad \forall j=1,\dots,N.\]

These measurement operators can be considered either as physical description of measurement processes such as photon emission (or absorption\footnote{Note that if one were to consider $C_j=\sigma_+^{(j)}$ the derivation that follows would not change.}), or as unravelings of Lindblad generators \cite{PhysRevB.105.064305,turkeshi_measurement-induced_2021}. 
While the simultaneous continuous- and counting-type measurement considered in this example might not be physically realistic, handling both of them presents no mathematical challenge.
More importantly, as we shall see, this model allows us to study the effects of off-diagonal blocks that can be present only in counting-type measurement operators. For these reasons, we consider and reduce the model with both processes. Removing either counting-type or continuous type measurement is possible by simply setting the parameters $\alpha_j$ or $\gamma_j$ to zero. 

One could then be interested, for example, in reproducing the probability distribution in the standard basis, i.e. the considered observables of interest are
\[ O_k = \ketbra{k}{k}, \qquad \forall k = 1,\dots,2^{N}.\]
With this, one can verify that, $D_j+D_j^*\in\Span\{O_k\}_{k=1}^{2^N}$, since $D_j$ are diagonal in the basis given by $\ket{k}$,
as well as $C_j^* C_j\in\Span\{O_k\}_{k=1}^{2^N}$ since 
\begin{align*}
    {\sigma_+^{(j)}}^*\sigma_{+}^{(j)} = \sigma_{-}^{(j)}\sigma_+^{(j)} = \one_{2^{j-1}} \otimes \ketbra{1}{1}\otimes \one_{2^{N-j}},\\ 
    {\sigma_-^{(j)}}^*\sigma_{-}^{(j)} = \sigma_{+}^{(j)}\sigma_-^{(j)} = \one_{2^{j-1}} \otimes \ketbra{0}{0}\otimes \one_{2^{N-j}} 
\end{align*}
hence both assumptions \ref{ass:observables} and \ref{ass:identity} are satisfied. Note that, since $D_j\in\Span\{O_k\}_{k=1}^{2^N}$, the reduced model is also able to reproduce the expectation value of the local magnetization $\expect{\sigma_z^{(j)}}$ as well as the probability distribution $\expect{O_k}$. 

\subsubsection{Numerical reduction for $N=2,3$}
To perform the proposed model reduction procedure one can then compute the super-operator algebra $\Ts$, the space orthogonal to the non-observable space $\Ns^\perp$ defined in Proposition \ref{prop:non-observable}, compute the algebra $\alg(\Ns^\perp)$, The conditional expectation and its two factors $\Eb_{|\As}$ and then compute the reduced model as described in Appendix \ref{sec:noise_operators_reduction}. 

Note that, in principle, these tasks can be performed on a (classical) computer by obtaining numerically the required spaces, algebras and operators. This however, can be computationally demanding (the computation of $\Ts$ in particular) and depends on the size and the complexity of the dynamics for the system at hand. 
For this toy model, for example, we were able to compute numerically the spaces of interest, $\Ns^\perp$ and $\alg(\Ns^\perp)$ as well as the reduced Hamiltonian and noise operators only for $N=2,3$. 

Specifically, for $N=2,3$ we numerically verified that
\begin{equation}
    \alg(\Ns^\perp) =  \alg\left(\{ \sigma_z^{(j)}\}_{j=1}^N\cup\{\sigma_x^{(j)}\sigma_x^{(j+1)}\}_{j=1}^{N-1}\right)\simeq \Cb^{2^{N-1} \times 2^{N-1}} \oplus \Cb^{2^{N-1} \times 2^{N-1}}.
    \label{eq:example_algebra}
\end{equation}

While the computational complexity of the numerical methods might constrain us to reduce numerically only systems of small sizes, the intuition we can develop for systems of small sizes, as well as the theoretical results we developed in this work, allow us to extend some of the results theoretically to larger systems. This is in fact the approach undertaken in the next subsection, where we compute a sub-optimal reduced model that can provably be defined for any system size $N$ and is inspired by the numerical computations we just described.    

\subsubsection{Sub-optimal reduction for any $N$}\label{sec:scalableex}

Although it may be difficult to analytically compute $\Ns^\perp$, we can take inspiration from the numerically computed $*$-algebra given in \eqref{eq:example_algebra} and define the algebra
\[\As \equiv \alg\left(\{ \sigma_z^{(j)}\}_{j=1}^N\cup\{\sigma_x^{(j)}\sigma_x^{(j+1)}\}_{j=1}^{N-1}\right)\] 
with $N\geq2$. We might then wonder if, at least, such an algebra contains $\Ns^\perp,$ which would be a sufficient condition for reduction.
The following Lemma answers this question. 
\begin{lemma} 
    For the measured spin chain example described above and for $\As$ we have:
    \begin{enumerate}
        \item $\Span\{O_j\}_{j=1}^{2^N}\subset \As$;
        \item $\As$ is $\Lc^*$-, $\Gc_{D_j}^*$- and $\Kc_{C_j}^*$-invariant for all $j=1,\dots,N$;
        \item $\alg(\Ns^\perp)\subseteq\As$.
    \end{enumerate}
\end{lemma}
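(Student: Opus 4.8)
The plan is to first pin down the structure of $\As$ exactly, and then read off all three claims from it. Set $P \equiv \prod_{j=1}^N \sigma_z^{(j)}$ for the global parity operator and let $\Bc_P \equiv \{X\in\Bf(\Hc) : XP=PX\}$ be its commutant, i.e. the algebra of parity-preserving operators, which is spanned by the Pauli strings $\bigotimes_j \sigma_{a_j}^{(j)}$ having an even number of indices with $a_j\in\{x,y\}$. I claim $\As=\Bc_P$, which in particular exhibits the Wedderburn decomposition $\As\simeq\Cb^{2^{N-1}\times 2^{N-1}}\oplus\Cb^{2^{N-1}\times 2^{N-1}}$ through the two eigenspaces of $P$, consistent with the numerics in \eqref{eq:example_algebra}. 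The inclusion $\As\subseteq\Bc_P$ is immediate since every generator commutes with $P$: $\sigma_z^{(j)}$ trivially, and $\sigma_x^{(j)}\sigma_x^{(j+1)}$ because it flips exactly two spins. For the reverse inclusion $\Bc_P\subseteq\As$ I would show that the generators produce every parity-even Pauli string: telescoping products $\prod_{i=j}^{k-1}\sigma_x^{(i)}\sigma_x^{(i+1)}=\sigma_x^{(j)}\sigma_x^{(k)}$ give all two-body $x$-terms on arbitrary pairs, multiplying by the $\sigma_z^{(j)}$ converts $\sigma_x$ into $\sigma_y$ and dresses the string with arbitrary $z$'s, and taking products over disjoint pairs builds up every even-weight string up to a scalar. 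This combinatorial generation step is the main obstacle and essentially the only place where real work is needed.

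Granting $\As=\Bc_P$, claim 1 is immediate: each $O_k=\ketbra{k}{k}$ is diagonal, hence commutes with the diagonal operator $P$, so $O_k\in\Bc_P=\As$ (equivalently, $O_k=\prod_j\tfrac12(\one\pm\sigma_z^{(j)})$ is a product of elements of $\As$). For the invariance claim 2, I would check each dual super-operator termwise, recalling $\Gc_{D_j}^*(X)=D_j^*X+XD_j$, $\Kc_{C_j}^*(X)=C_j^*XC_j$, and $\Lc^*(X)=i[H,X]+\sum_j\Dc_{D_j}^*(X)+\sum_j\Dc_{C_j}^*(X)$ with $\Dc_{A}^*(X)=A^*XA-\tfrac12\{A^*A,X\}$. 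The Hamiltonian term is harmless because $H=\sum_j\delta_j\sigma_x^{(j)}\sigma_x^{(j+1)}+\sum_j\mu_j\sigma_z^{(j)}\in\As$, so $[H,X]\in\As$; the contribution $\Dc_{D_j}^*(X)$ only involves $\sigma_z^{(j)}\in\As$ together with $D_j^*D_j=|\gamma_j|^2\one$, so it stays in $\As$; and $\Gc_{D_j}^*(X)=D_j^*X+XD_j\in\As$ since $D_j,D_j^*$ are multiples of $\sigma_z^{(j)}\in\As$ and $\As$ is an algebra.

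The genuinely non-obvious piece of claim 2 is the counting term $\Kc_{C_j}^*(X)=|\alpha_j|^2\,\sigma_+^{(j)}X\sigma_-^{(j)}$, where $\sigma_\pm^{(j)}\notin\As$. Here the parity characterization pays off: since $\sigma_\pm^{(j)}$ flip a single spin they anticommute with $P$, so for $X\in\As=\Bc_P$ one computes $P\,\sigma_+^{(j)}X\sigma_-^{(j)}=-\sigma_+^{(j)}(PX)\sigma_-^{(j)}=-\sigma_+^{(j)}(XP)\sigma_-^{(j)}=\sigma_+^{(j)}X\sigma_-^{(j)}\,P$, whence $\sigma_+^{(j)}X\sigma_-^{(j)}\in\Bc_P=\As$. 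The same observation, together with $C_j^*C_j=|\alpha_j|^2\sigma_+^{(j)}\sigma_-^{(j)}$ being diagonal (hence in $\As$), handles the full $\Dc_{C_j}^*$ term, so $\Lc^*$, $\Gc_{D_j}^*$ and $\Kc_{C_j}^*$ all leave $\As$ invariant.

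Finally, claim 3 follows with no further computation from Proposition \ref{prop:non-observable}: that result gives $\Ns^\perp=\Ts[\Span\{O_j\}]$ with $\Ts=\alg(\{\Ic,\Lc^*\}\cup\{\Gc_{D_j}^*\}\cup\{\Kc_{C_j}^*\})$, and since $\Span\{O_j\}\subseteq\As$ by claim 1 while $\As$ is invariant under all these super-operators by claim 2, every element of $\Ts[\Span\{O_j\}]$ lands in $\As$. Thus $\Ns^\perp\subseteq\As$, and as $\As$ is itself a $*$-algebra we conclude $\alg(\Ns^\perp)\subseteq\As$, as required.
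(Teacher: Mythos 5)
Your proof is correct, and it hinges on the same structural fact as the paper's proof: that $\As$ is the commutant of the global parity operator $Q=\prod_{j=1}^N\sigma_z^{(j)}$ (your $\Bc_P$), which is precisely what makes the counting-type terms tractable. The differences are in how that fact is used and justified, and they are worth noting. For the $\Kc_{C_j}^*$-invariance the paper performs the same two-line Pauli computation you do, but packages it differently: it shows $Q C_j^* X C_j Q^* = C_j^* Q X Q^* C_j$, concludes that $\Kc_{C_j}^*$ commutes with the superoperator $\Sc(X)=QXQ^*$, and then cites prior work for the fact that every eigenspace of $\Sc$ is invariant under any commuting superoperator, the $1$-eigenspace being $\{Q\}'=\As$. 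Your anticommutation argument ($P\sigma_+^{(j)}X\sigma_-^{(j)}=\sigma_+^{(j)}X\sigma_-^{(j)}P$ for $X\in\Bc_P$, via two sign flips) reaches the same conclusion more elementarily, with no external machinery. Conversely, you are more explicit than the paper about the one genuinely nontrivial input: the paper simply asserts $\{Q\}'=\As$ inside its proof, with the justification effectively deferred to the subsequent lemma (the $U_N$ construction showing $U_N\As U_N^*=\Bf(\Cb^{2^{N-1}})\oplus\Bf(\Cb^{2^{N-1}})$, which forces $\dim\As=2^{2N-1}=\dim\{Q\}'$), whereas you flag the inclusion $\Bc_P\subseteq\As$ as the real work and sketch a correct generation argument (telescoping to get $\sigma_x^{(j)}\sigma_x^{(k)}$ on arbitrary pairs, multiplication by $\sigma_z$'s to produce $\sigma_y$'s and $z$-dressings, products over disjoint pairs); to make this fully rigorous you should carry the induction out or close it with the dimension count $\dim\Bc_P=2^{2N-1}$. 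Claims 1 and 3 are handled exactly as in the paper: the diagonal algebra $\alg\{\sigma_z^{(j)}\}_{j=1}^N$ already contains every $\ketbra{k}{k}$, and claim 3 follows from the characterization of $\Ns^\perp$ in Proposition \ref{prop:non-observable} together with the fact that $\As$ is a $*$-algebra.
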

\begin{proof}
Let us start by defining $\As_0\equiv\alg\{\sigma_z^{(j)}\}_{j=1}^N = \Span\{O_j = \ketbra{j}{j}\}_{j=1}^{2^{N}}$ where $\{\ket{j}\}_{k=1}^{2^N}$ forms the standard basis for $\Hc$ and $O_j$ are the observables of interest. By definition of $\As$ we have that $\As_0\subset\As$ hence $\Span\{O_j\}_{j=0}^N\subset \As$.

To prove the second claim, we can observe that $H\in\As$ hence, $\As$ is invariant under the action of $[H,\cdot]$, i.e. $[H,X]\in\As$ for all $X\in\As$. Similarly, since $D_j\in\As$, we have that $\As$ is also $\Dc_{D_j}^*$- and $\Gc_{D_j}^*$-invariant. 
It thus remains to prove that $\As$ is $\Kc_{C_j}^*$- and $\Dc_{C_j}^*$-invariant. Let us first observe that for any operator $X\in\Bf(\Hc)$, we have $Q C_j^* X C_j Q^* = C_j^* Q X Q^* C_j$ where $Q\equiv \prod_{j=1}^N \sigma_z^{(j)}$. Direct calculations lead to 
\begin{align*}
    Q C_j^* X C_j Q^* &= F \underbrace{\sigma_z^{(j)} \sigma_+^{(j)}}_{\sigma_+^{(j)}} X \underbrace{\sigma_-^{(j)} \sigma_z^{(j)}}_{\sigma_-^{(j)}} F = F  \sigma_+^{(j)} X \sigma_-^{(j)} F\\
    C_j^* Q X Q^* C_j &= \underbrace{\sigma_+^{(j)} \sigma_z^{(j)}}_{-\sigma_+^{(j)}} F X F \underbrace{\sigma_z^{(j)} \sigma_-^{(j)}}_{-\sigma_-^{(j)}} = F \sigma_+^{(j)} X \sigma_-^{(j)} F
\end{align*}
where we defined $F\equiv \prod_{k\neq j} \sigma_z^{(k)}$ for convenience. Then, since the superoperators $\Kc_{C_j}^*$ and $\Sc(X)\equiv Q X Q^*$ commute, they share the same eigen-decomposition and, more importantly, every eigenspace of $\Sc$ is $\Kc_{C_j}^*$-invariant, see e.g. \cite[Sec. V.B]{prxq2024} or \cite{Buča_2012}. In particular, we shell note that the $1$-eigenspace of $\Sc$ coincides with $\{Q\}' = \As$ and is $\Kc_{C_j}^*$-invariant. This proves that $\As$ is $\Kc_{C_j}^*$-invariant. The proof of the fact that $\As$ is also $\Dc_{C_j}^*$-invariant follows from this fact and from the fact that $C_j^* C_j\in\As$.    

From the first and second claim we have that $\Ns^\perp\subseteq\As$ by  Proposition \ref{prop:non-observable}, i.e. the fact that $\Ns^\perp$ is the smallest operator space that contains $\Span\{O_j\}$ and that is $\Lc^*$-, $\Gc_{D_j}^*$- and $\Kc_{C_j}^*$-invariant for all $j$. Then, by definition of $\alg$ we have that $\alg(\Ns^\perp)$ is the smallest operator $*$-algebra that contains $\Ns^\perp$ hence $\Ns^\perp\subseteq\alg(\Ns^\perp)\subseteq\As$. 
\end{proof}

This Lemma allows us to conclude that we can reduce the spin-chain quantum filter onto the algebra $\As$, regardless of the number of spins $N$. Note that we here only proved that $\alg(\Ns^\perp) \subseteq \As$ hence, in principle, there could be smaller model than the one we compute next. 
We next show how to unitarily obtain the Wedderburn decomposition of the operators in the algebra $\As,$ which we need in order to find the reduced model. 

\begin{lemma}
    Let us define the permutation matrix \[P \equiv \one_4 + \sigma_x\otimes\one_2 + \one_2\otimes\sigma_z  -\sigma_x\otimes\sigma_z = \begin{bmatrix}
        1&0&0&0\\0&0&0&1\\0&0&1&0\\0&1&0&0
    \end{bmatrix} \in\Cb^{4\times4}\] and define, for any $N\geq1$, the unitary operator \[U_N = \begin{cases}
        \one_2 & \text{if }N=1\\
        (P\otimes\one_{2^{N-2}})(\one_2\otimes U_{N-1}) &\text{if } N\geq2
    \end{cases}.\]

    Then for all $N$: 
    \begin{enumerate}
        \item \[U_N \sigma_z^{(j)} U_N^* = \begin{cases}
            \sigma_z^{(j)}\sigma_z^{(j+1)}&\text{if }j<N\\
            \sigma_z^{(N)}&\text{if }j=N
        \end{cases};\]
        \item \(U_N \sigma_x^{(j)}\sigma_x^{(j)} U_N^* = \sigma_x^{(j+1)}\);
        \item $U_N \As U_N^* = \Bf(\Cb^{2^{N-1}}) \bigoplus \Bf(\Cb^{2^{N-1}})$;
        \item 
        \[U_N \sigma_-^{(j)} U_N^* =\begin{cases}
            \frac{1}{2}\left[\sigma_x^{(0)}\sigma_x^{(1)}\dots\sigma_x^{(j-1)}\sigma_x^{(j)} -i \sigma_x^{(0)}\sigma_x^{(1)}\dots\sigma_x^{(j-1)}\sigma_y^{(j)} \sigma_z^{(j+1)} \right]&\text{if }j < N\\
            \sigma_x^{(0)}\sigma_x^{(1)}\dots\sigma_x^{(N-1)}\sigma_-^{(N)}&\text{if }j=N
        \end{cases}. \]
    \end{enumerate}
\end{lemma}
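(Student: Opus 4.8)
The plan is to argue by induction on $N$, exploiting the recursive definition $U_N=(P\otimes\one_{2^{N-2}})(\one_2\otimes U_{N-1})$. The first thing I would record is that $P$ is exactly the controlled-NOT gate with control on the second qubit and target on the first: it exchanges $\ket{01}\leftrightarrow\ket{11}$ and fixes $\ket{00},\ket{10}$, so it is a Hermitian involution and its conjugation action on single-qubit Paulis is explicit, namely $P\sigma_x^{(1)}P=\sigma_x^{(1)}$, $P\sigma_x^{(2)}P=\sigma_x^{(1)}\sigma_x^{(2)}$, $P\sigma_z^{(1)}P=\sigma_z^{(1)}\sigma_z^{(2)}$, $P\sigma_z^{(2)}P=\sigma_z^{(2)}$, with everything supported on qubits $\geq3$ left untouched. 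Since $\one_2\otimes U_{N-1}$ acts as $U_{N-1}$ on qubits $2,\dots,N$ and trivially on qubit $1$, conjugation by $U_N$ factors as ``apply the inductive hypothesis to the last $N-1$ qubits, then apply the local $P$-rules on qubits $1,2$''.

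The key point, and what I expect to be the main obstacle, is that the four stated identities do not close under this induction on their own: items (2) and (4) describe only how the \emph{pairs} $\sigma_x^{(j)}\sigma_x^{(j+1)}$ and the ladder operators $\sigma_-^{(j)}$ transform, whereas the inductive step feeds through $U_{N-1}$ single-site operators such as $\sigma_x^{(2)}$. I would therefore strengthen the inductive hypothesis to a full single-generator tableau, proving $U_N\sigma_z^{(j)}U_N^*=\sigma_z^{(j)}\sigma_z^{(j+1)}$ for $j<N$ and $\sigma_z^{(N)}$ for $j=N$, together with $U_N\sigma_x^{(j)}U_N^*=\sigma_x^{(1)}\sigma_x^{(2)}\cdots\sigma_x^{(j)}$ for every $j\leq N$. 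For $\sigma_z^{(j)}$ the step is immediate: the $j=1$ case is produced directly by $P\sigma_z^{(1)}P=\sigma_z^{(1)}\sigma_z^{(2)}$ (qubit $1$ being fixed by $\one_2\otimes U_{N-1}$), while for $j\geq2$ the hypothesis gives $\sigma_z^{(j)}\sigma_z^{(j+1)}$ (resp. $\sigma_z^{(N)}$), which $P$ leaves invariant because it fixes $\sigma_z$ on its control qubit $2$ and acts trivially on qubits $\geq3$. For $\sigma_x^{(j)}$ with $j\geq2$, the hypothesis turns it into the string $\sigma_x^{(2)}\cdots\sigma_x^{(j)}$, and then $P$ converts the leading $\sigma_x^{(2)}$ into $\sigma_x^{(1)}\sigma_x^{(2)}$, yielding the full string $\sigma_x^{(1)}\cdots\sigma_x^{(j)}$; the case $j=1$ is trivial since $\sigma_x^{(1)}$ is fixed by both factors. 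The base case $N=1$, $U_1=\one_2$, is immediate.

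With the tableau in hand, the four claims follow by pure algebra and need no further induction. Claim (1) is the $\sigma_z$ formula itself. Claim (2) follows by telescoping the $\sigma_x$ formula: $U_N\sigma_x^{(j)}\sigma_x^{(j+1)}U_N^*=(\sigma_x^{(1)}\cdots\sigma_x^{(j)})(\sigma_x^{(1)}\cdots\sigma_x^{(j+1)})=\sigma_x^{(j+1)}$, the common prefix on sites $1,\dots,j$ squaring to the identity. For claim (4) I would write $\sigma_-^{(j)}=\tfrac12(\sigma_x^{(j)}-i\sigma_y^{(j)})$ and $\sigma_y^{(j)}=i\sigma_x^{(j)}\sigma_z^{(j)}$, so that for $j<N$ one gets $U_N\sigma_y^{(j)}U_N^*=i(\sigma_x^{(1)}\cdots\sigma_x^{(j)})(\sigma_z^{(j)}\sigma_z^{(j+1)})$; collapsing $\sigma_x^{(j)}\sigma_z^{(j)}=-i\sigma_y^{(j)}$ on site $j$ gives $U_N\sigma_y^{(j)}U_N^*=\sigma_x^{(1)}\cdots\sigma_x^{(j-1)}\sigma_y^{(j)}\sigma_z^{(j+1)}$ and hence exactly the claimed combination, while the same manipulation with $U_N\sigma_z^{(N)}U_N^*=\sigma_z^{(N)}$ produces the $j=N$ form $\sigma_x^{(1)}\cdots\sigma_x^{(N-1)}\sigma_-^{(N)}$.

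Finally, for claim (3) I would use that conjugation by a unitary is an algebra automorphism, so $U_N\As U_N^*$ is generated by the images of the generators of $\As$, namely $\{\sigma_z^{(j)}\sigma_z^{(j+1)}\}_{j<N}\cup\{\sigma_z^{(N)}\}$ and $\{\sigma_x^{(j+1)}\}_{j<N}$. The first family multiplicatively generates all of $\sigma_z^{(1)},\dots,\sigma_z^{(N)}$, and the second gives $\sigma_x^{(2)},\dots,\sigma_x^{(N)}$; together these generate the full matrix algebra $\Bf(\Cb^{2^{N-1}})$ on qubits $2,\dots,N$, whereas on qubit $1$ only the commutative algebra $\Span\{\one_2,\sigma_z^{(1)}\}$ appears. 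Thus $U_N\As U_N^*=\Span\{\one_2,\sigma_z^{(1)}\}\otimes\Bf(\Cb^{2^{N-1}})$, and decomposing the qubit-$1$ part through the two spectral projections $\tfrac12(\one_2\pm\sigma_z^{(1)})$ identifies this with $\Bf(\Cb^{2^{N-1}})\oplus\Bf(\Cb^{2^{N-1}})$, as claimed.
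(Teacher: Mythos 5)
Your proof is correct, and it reorganizes the argument along a route that genuinely differs from the paper's. Both proofs are inductions on $N$ driven by the recursion $U_N=(P\otimes\one_{2^{N-2}})(\one_2\otimes U_{N-1})$ and the conjugation rules of $P$ on qubits $1,2$; but the paper runs a separate induction for each of claims (1), (2) and (4) (the last of which it omits entirely, stating only that it ``is proven by induction''), tracking the composite operators $\sigma_x^{(j)}\sigma_x^{(j+1)}$ and $\sigma_-^{(j)}$ themselves through the recursion and invoking along the way the auxiliary fact $U_N\sigma_x^{(1)}U_N^*=\sigma_x^{(1)}$. You instead strengthen the inductive hypothesis to a single-generator tableau --- the $\sigma_z^{(j)}$ rule of claim (1) together with $U_N\sigma_x^{(j)}U_N^*=\sigma_x^{(1)}\cdots\sigma_x^{(j)}$ --- after identifying $P$ as a CNOT with control on qubit $2$ and target on qubit $1$; claims (2) and (4) then drop out by pure Pauli algebra (telescoping of the $\sigma_x$-strings, and $\sigma_-=\tfrac12(\sigma_x-i\sigma_y)$, $\sigma_y=i\sigma_x\sigma_z$) with no further induction. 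This buys two things the paper does not supply: an explicit derivation of claim (4), and a complete proof of claim (3), since your generation argument (the images generate $\Span\{\one_2,\sigma_z\}\otimes\Bf(\Cb^{2^{N-1}})$, which splits under the spectral projections $\tfrac12(\one_2\pm\sigma_z)$) establishes the stated equality, whereas the paper's argument --- that the images act on the first qubit as $\one_2$ or $\sigma_z$, so off-diagonal blocks vanish --- only gives the inclusion $U_N\As U_N^*\subseteq\Bf(\Cb^{2^{N-1}})\oplus\Bf(\Cb^{2^{N-1}})$. Note also that your reading silently corrects two typos in the statement (claim (2) should read $\sigma_x^{(j)}\sigma_x^{(j+1)}$, and the factors $\sigma_x^{(0)}$ in claim (4) are vacuous), in the way the paper clearly intends.
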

\begin{proof}
Note that the fact that $U_N$ is unitary can easily proven by induction.
    The first claim of this Lemma is also proven by induction. 
    We start by proving the case $N=2$. Simple calculations show that 
    \begin{align*}
        U_2\sigma_z^{(1,2)} U_2^* &= P (\sigma_z\otimes\one_2) P^* = \sigma_z\otimes\sigma_z\\
        U_2\sigma_z^{(2,2)} U_2^* &= P (\one_2\otimes\sigma_z) P^* = \one_2\otimes\sigma_z.
    \end{align*}
    Now assume that $U_{N-1}\sigma_z^{(j,N-1)}U_{N-1}^* = \sigma_z^{(j,N-1)}\sigma_z^{(j,N-1)}$ for $j<N-1$ and $U_{N-1}\sigma_z^{(N-1,N-1)}U_{N-1}^* = \sigma_z^{(N-1,N-1)}$. Then,
    \begin{align*}
        U_N\sigma_z^{(1,N)} U_N^* &= (P\otimes\one_{2^{N-2}})(\one_2\otimes U_{N-1}) (\sigma_z\otimes \one_{2^{N-1}}) (\one_2\otimes U_{N-1}^*) (P^*\otimes\one_{2^{N-2}})\\
        &= P(\sigma_z\otimes\one_2)P^* \otimes\one_{2^{N-2}} = \sigma_z\otimes\sigma_z\otimes\one_{2^{N-2}} = \sigma_z^{(1,N)}\sigma_z^{(2,N)},
    \end{align*}
    and, for $j\geq2$
    \begin{align*}
        U_N\sigma_z^{(j,N)} U_N^* &= (P\otimes\one_{2^{N-2}})(\one_2\otimes U_{N-1}) (\one_2\otimes \sigma_z^{(j-1,N-1)}) (\one_2\otimes U_{N-1}^*) (P^*\otimes\one_{2^{N-2}})\\
        &= (P\otimes\one_{2^{N-2}})(\one_2\otimes \sigma_z^{(j-1,N-1)}\sigma_z^{(j,N-1)}) (P^*\otimes\one_{2^{N-2}}) = (\one_2\otimes \sigma_z^{(j-1,N-1)}\sigma_z^{(j,N-1)}) = \sigma_z^{(j,N)}\sigma_z^{(j+1,N)}
    \end{align*}
    concluding the proof of the third point. 

    The next point is also proven by induction. 
    We then prove the base case $N=2$. Simple calculations show that \(U_2\sigma_x^{(1,2)}\sigma_x^{(2,2)} U_2^* = P (\sigma_x\otimes\sigma_x) P^* = \one_2\otimes\sigma_x\) concluding the base case. Similar calculations show that \(P(\sigma_x\otimes\one_2)P^* = \sigma_x\otimes \one_2 \) and \(P(\one_2\otimes\sigma_x)P^* = \sigma_x\otimes \sigma_x \) which are useful in what comes next. We then assume that $U_{N-1} \sigma_x^{(j,N-1)}\sigma_x^{(j,N-1)} U_{N-1^*} = \sigma_x^{(j+1,N-1)}$ and consider
    \begin{align*}
        U_N \sigma_x^{(1,N)}\sigma_x^{(2,N)}U_N^* &= (P\otimes\one_{2^{N-2}})(\one_2\otimes U_{N-1}) (\sigma_x\otimes \sigma_x^{(2,N-1)}) (\one_2\otimes U_{N-1}^*) (P^*\otimes\one_{2^{N-2}})\\
        &= (P\otimes\one_{2^{N-2}}) (\sigma_x\otimes \sigma_x^{(1,N-1)}) (P^*\otimes\one_{2^{N-2}}) = (P\otimes\one_{2^{N-2}}) (\sigma_x\otimes \sigma_x \otimes \one_{2^{N-2}} ) (P^*\otimes\one_{2^{N-2}}) \\&= \one_2\otimes\sigma_x\otimes \one_{2^{N-2}} = \sigma_x^{(2,N)}
    \end{align*}
    where we used the fact that $U_N\sigma_x^{(j,N)}U_N^* = \sigma_x^{(j,N)}$ which can also be proved by induction, and, for $j>1$:
    \begin{align*}
        U_N \sigma_x^{(j,N)}\sigma_x^{(j+1,N)}U_N^* &= (P\otimes\one_{2^{N-2}})(\one_2\otimes U_{N-1}) (\one_2\otimes \sigma_x^{(j-1,N-1)}\sigma_x^{(j,N-1)}) (\one_2\otimes U_{N-1}^*) (P^*\otimes\one_{2^{N-2}})\\
        &= (P\otimes\one_{2^{N-2}})(\one_2\otimes \sigma_x^{(j,N-1)}) (P^*\otimes\one_{2^{N-2}}) = \sigma_x^{(j+1,N-1)}
    \end{align*}
    which concludes the proof of the second statement. 

    The third claim is a direct consequence of the first two claims as, by definition $\As = \alg(\{\sigma_z^{(j)}\}_{j=1}^N\cup\{\sigma_x^{(j)}\sigma_x^{(j+1)}\}_{j=1}^{N-1})$ and the fact that $U_N \sigma_z^{(j)} U_N^*$ and $U_N \sigma_x^{(j)}\sigma_x^{(j)} U_N^*$ act on the first qubit either as the identity operator or as $\sigma_z$, thus the off-diagonal blocks must be zero, e.g. $\bra{k}\otimes\one_{2^{N-1}} \sigma_x^{(j)} \ket{l}\otimes\one_{2^{N-1}} = 0$.

    The fourth claim is proven by induction as the first two claims and its proof is here omitted.
\end{proof}

Using the previous lemma, not only do we know the Wedderburn decomposition of the algebra, $\As = \simeq \Cb^{2^{N-1} \times 2^{N-1}} \oplus \Cb^{2^{N-1} \times 2^{N-1}}$, but we can also compute the reduced Hamiltonian and noise operators. 
Specifically, the operators that define $\As$, for any $N$ have the form:
\begin{align*}
    U_N \sigma_z^{(1)} U_N^* = \sigma_z^{(1)}\sigma_z^{(2)} &= \left[\begin{array}{c|c}
         \sigma_z^{(1,N-1)}&0  \\\hline
         0&-\sigma_z^{(1,N-1)} 
    \end{array}\right],\\
    U_N \sigma_z^{(j)} U_N^* = \sigma_z^{(j)}\sigma_z^{(j+1)} &= \left[\begin{array}{c|c}
         \sigma_z^{(j-1,N-1)}\sigma_z^{(j+1,N-1)}&0  \\\hline
         0&\sigma_z^{(j,N-1)}\sigma_z^{(j+1,N-1)} 
    \end{array}\right], & j=2,\dots,N-1,\\
    U_N \sigma_z^{(N)} U_N^* = \sigma_z^{(N)} &= \left[\begin{array}{c|c}
         \sigma_z^{(N-1,N-1)}&0  \\\hline
         0&\sigma_z^{(N-1,N-1)} 
    \end{array}\right],\\
    U_N \sigma_x^{(j)}\sigma_x^{(j)} U_N^* = \sigma_x^{(j+1)} &= \left[\begin{array}{c|c}
         \sigma_x^{(j-1,N-1)}&0  \\\hline
         0&\sigma_x^{(j-1,N-1)} 
    \end{array}\right] & j=1,\dots,N-1.\\
\end{align*}
Hence the reduced Hamiltonian takes the form
\[\check{H} = U_N H U_N^* = \check{H}_1\bigoplus \check{H}_2 = \left[\begin{array}{c|c}
         \check{H}_1&0  \\\hline
         0&\check{H}_2
    \end{array}\right] \]
where 
\begin{align*}
    \check{H}_1 &= \sum_{j=1}^{N-1}\delta_j\sigma_x^{(j-1,N-1)} + \mu_1 \sigma_z^{(1,N-1)} + \sum_{j=2}^{N-1} \mu_j \sigma_z^{(j-1,N-1)}\sigma_z^{(j,N-1)} + \mu_N \sigma_z^{(N-1,N-1)}\\
    \check{H}_2 &= \sum_{j=1}^{N-1}\delta_j\sigma_x^{(j-1,N-1)} -\mu_1 \sigma_z^{(1,N-1)} + \sum_{j=2}^{N-1} \mu_j \sigma_z^{(j-1,N-1)}\sigma_z^{(j,N-1)} + \mu_N \sigma_z^{(N-1,N-1)}
\end{align*}
while the reduced measurement operators associated to homodyne-type measurements are $\check{D}_j = U_N D_j U_N^* = \gamma_j U_N \sigma_z^{(j)} U_N^*$. Note that, as expected, both operators are block-diagonal in the basis defined by $U_N$.

The noise operators associated to counting-type operators instead do not belong to $\As$ and are thus not block diagonal in the basis provided by $U_N$. None the less they present a particular block-structure that can be described as follows:
\begin{align*}
    \check{C}_j = U_N C_j U_N^* &= \left[\begin{array}{c|c}
         0&F  \\\hline
         F&0
    \end{array}\right] & j<N \\
    \check{C}_N = U_N C_N U_N^* &= \left[\begin{array}{c|c}
         0&\sigma_x^{(0,N-1)}\dots\sigma_x^{(N-1,N-1)}  \\\hline
         \sigma_x^{(0,N-1)}\dots\sigma_x^{(N-1,N-1)}&0
    \end{array}\right] &
\end{align*}
where $F = \frac{1}{2}\left[\sigma_x^{(0,N-1)}\dots\sigma_x^{(j-1,N-1)} -i \sigma_x^{(0,N-1)}\dots\sigma_x^{(j-2,N-1)}\sigma_y^{(j-1,N-1)} \sigma_z^{(j,N-1)}\right]$.
Here, it is interesting to observe that the noise operators $C_j$ have a structure which is not block-diagonal in the basis given by $U_N,$ and yet they leave the algebra $\alg(\Ns^\perp)$ $\Kc_{C_j}$-invariant. This shows a departure from the generalized QND example where all operators were block-diagonal in the same basis. Furthermore, one should notice that this type of structure is only possible with operators associated to counting-type measurement as off-diagonal blocks that are non-zero in measurement operators associated to homodyne-type measurement would break the invariance of $\Ns^\perp$.

If we then represent the state $U_N\rho U_N^*$ into its block diagonal structure, i.e. 
\begin{equation*}
    U_N\rho U_N^* = \left[\begin{array}{c|c}
    \rho_{0,0} & \rho_{0,1}\\\hline
    \rho_{1,0} & \rho_{1,1}
    \end{array}\right], 
\end{equation*}
with $\rho_{j,k}\in\Cb^{2^{N-1}\times 2^{N-1}}$, the reduced density operator results to be 
\[\check{\rho} \equiv \Rc(\rho) = \bigoplus_{j=0,1} {\rho}_{j,j}
= \left[\begin{array}{c|c}
    \rho_{0,0} & 0\\\hline
    0 & \rho_{1,1}
    \end{array}\right].
\]

The observables of interest also belong to the algebra by construction, i.e. \(O_j\in\As\), and remain diagonal in the basis given by $U_N$ and thus can easily be put in block-diagonal form $\check{O} = U_N O_j U_N ^\dag$.

To conclude, \textit{for any number of qubits $N$}, it is possible to reduce the filter onto $\As$ obtaining a reduction of the dimension of the state by a factor $1/2$ and, while it could be non-minimal, can be effectively computed for any number of spins. Furthermore note that while the involved operators remain of the same dimension as the original ones, the reduction comes from the fact only the diagonal blocks of the state are populated at any time. 

\begin{figure}[th]
    \centering
    \begin{subfigure}[t]{0.48\textwidth}
        \includegraphics[width=0.9\textwidth]{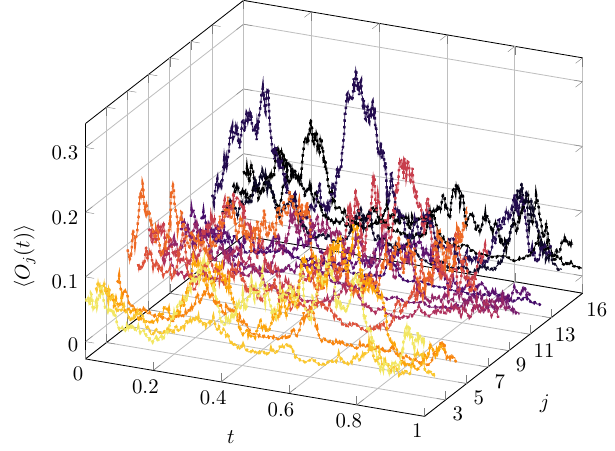}
        \label{fig:prob_diffusive}
    \end{subfigure}
    \begin{subfigure}[t]{0.48\textwidth}
        \includegraphics[width=0.9\textwidth]{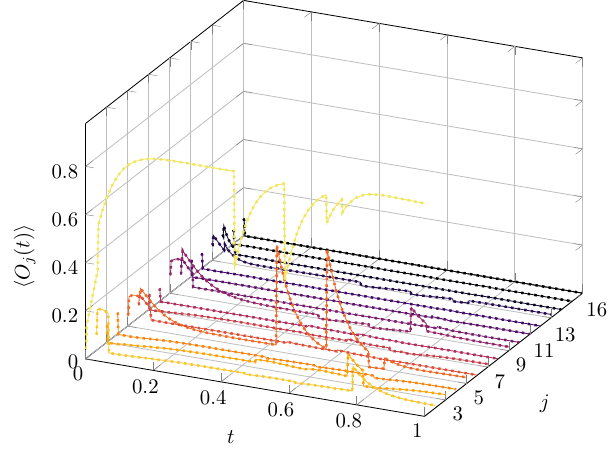}
        \label{fig:prob_jump}
    \end{subfigure}
    \begin{subfigure}[t]{0.48\textwidth}
        \includegraphics[width=0.9\textwidth]{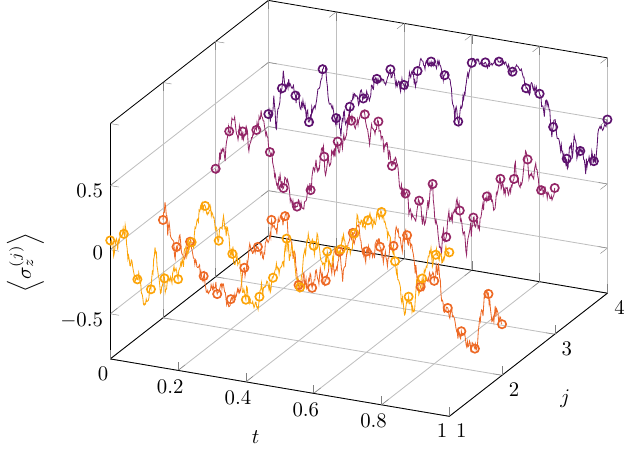}
        \label{fig:mag_diffusive}
    \end{subfigure}
    \begin{subfigure}[t]{0.48\textwidth}
        \includegraphics[width=0.9\textwidth]{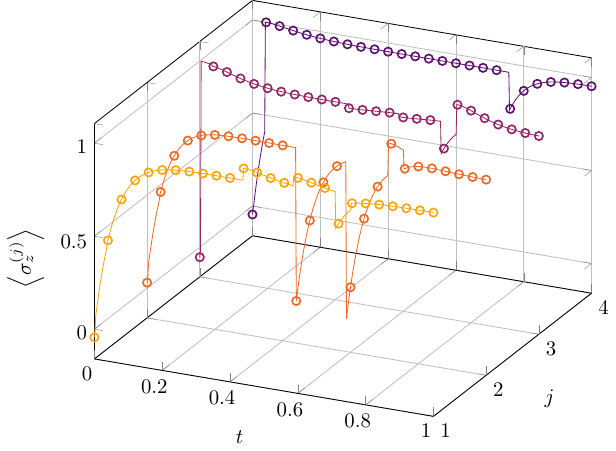}
        \label{fig:mag_jump}
    \end{subfigure}
    \begin{subfigure}[t]{0.48\textwidth}
        \includegraphics[width=0.9\textwidth]{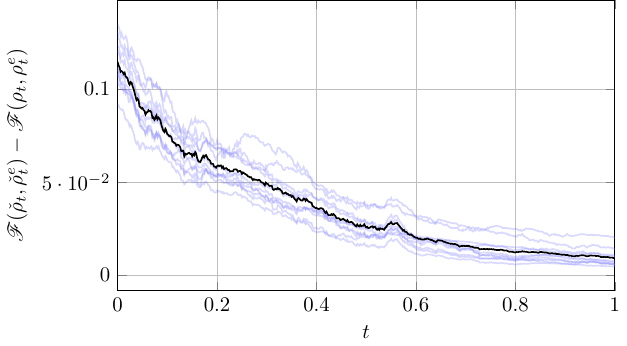}
        \label{fig:fidelity_diffusive}
    \end{subfigure}
    \begin{subfigure}[t]{0.48\textwidth}
        \includegraphics[width=0.9\textwidth]{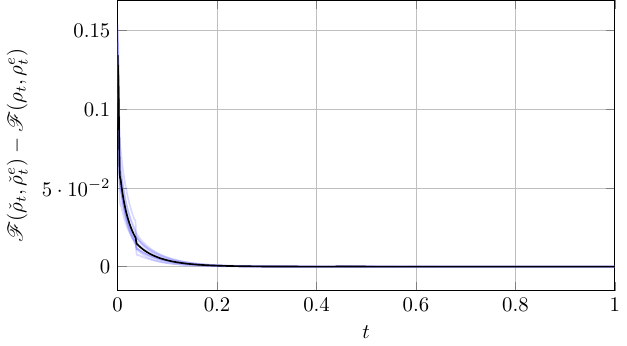}
        \label{fig:fidelity_jump}
    \end{subfigure}
    \caption{Numerical simulations for the measured spin chain with $N=4$, $\delta_j$ sampled from a Gaussian distribution with mean $2$ and standard deviation $0.2$, $\mu_j$ sampled from a Gaussian distribution with mean $1$ and standard deviation $0.2$, and $\gamma_j =\gamma$ and $\alpha_j=\alpha$ for all $j$. The left column shows a diffusive-type evolution, i.e. $\gamma=0.5$, $\alpha=0$, while the right column shows a counting-type evolution, i.e. $\gamma=0$, $\alpha=4$. From top to bottom we have: Comparison of the population in the standard basis versus time $\expect{O_j(t)}$ for the original (dots) and reduced (continuous curves) filters; Comparison of the local magnetization versus time $\expect{\sigma_z^{(j)}(t)}$ for the original (empty circles) and reduced (continuous curves) filters; Difference between the fidelity between a filter initialized in the correct initial condition $\rho_0$ and a filter initialized in a random initial condition $\rho_0^e$ for the original and reduced model. }
    \label{fig:numerical_simulation}
\end{figure}

\subsubsection{Numerical simulations}
In order to further test the validity of the reduced filter we performed numerical simulations of the described model for $N=4$. 
The numerical experiments have been performed as follows. Starting from a random initial condition $\rho_0$ we simulated the stochastic evolution of the filter using the technique proposed by \cite{rouchon_efficient_2015} and obtaining a realization of the quantum trajectory $(\rho_t)_{t\geq 0}$ as well as the measurement records $(Y_t^j)_{t\geq0}$ and $(N_t^j)_{t\geq 0}$. Using the measurement records we then simulated the evolution of the original quantum filter starting from the initial condition $\rho_0$ and the evolution of the reduced filter starting from the initial condition $\check{\rho}_0$ and computed the expectation values for the observables of interest $\expect{O_j(t)}$ and for the local magnetization $\langle{\sigma_z^{(j)}}\rangle$. In Fig.\,\ref{fig:numerical_simulation} (first two rows) one can in fact observe that the population obtained in the standard basis $\expect{O_j(t)}$ and local magnetization $\langle\sigma_z^{(j)}\rangle$ for both the full (dotted curves and empty circles) and reduced quantum model (continuous curves) are identical hence, as expected, the reduced filter correctly reproduces the expectation of the observables of interest. 

To conclude, since we have that $\As$ is $\Lc^*$-, $\Gc_D^*$- and $\Kc_{C_j}^*$- invariant, Proposition \ref{prop:filter_stability} applies and hence $\Fs(\check{\rho}_t,\check{\rho}_t^e)-\Fs(\rho_t,\rho_t^e)\geq0$. This is depicted in the last row of Fig.\,\ref{fig:numerical_simulation} where the simulation has been run ten times with the original filter initialized in a random density operator $\rho_0^e$ and the reduced filter has been initialized $\check{\rho}_0^e = \Rc(\rho_0^e)$ while using the measurement records $(Y_t^j)_{t\geq0}$ and $(N_t^j)_{t\geq 0}$ computed from the evolution of the original filter initialized in $\rho_0$. One can see that the difference between the fidelity of the original model and that of the reduced one is always greater than 0. This shows that the reduced filter is less sensitive to initialization errors. 

\section{Conclusion and outlook}

In this paper, we presented a model reduction method for quantum filters that is capable of exactly reproducing the stochastic processes associated to the expectations of observables of interest, while maintaining the reduced model in the form of a quantum filtering equation. This ensures complete positivity and trace preservation of the state evolution as well as physical interpretability. The results derived here build on the notion of observability of linear systems from control and system theory, and leverage results from quantum probability, specifically the theory of non-commutative conditional expectations. The method also offers a way to compute the minimal linear realization of the filter, which is not necessarily in the form \eqref{eq:Belavkin-Zakai}, but may be used for numerical simulation.
While the numerical complexity of the proposed method might limit the capability of reducing large models, the theoretical framework can still be useful to derive sub-optimal reduced model that work even for large systems. This has been showcased in a concrete example in Section \ref{sec:scalableex}.

This work significantly extends the findings of previous studies \cite{tit2023,prxq2024,letter2024} in several ways. First, the non-observable space defined and utilized in this paper represents a non-trivial generalization of the non-observable subspace introduced in \cite{prxq2024}. Specifically, our framework incorporates the effects of conditioning into the non-observable space, which were not accounted for previously. In particular, the presence of diffusion terms in the filtering equation, which act as a stochastic input for a dynamical system, necessitates a novel approach to defining the relevant invariant operator subspaces.
Second, this paper provides explicit methods for deriving the reduced Hamiltonian, noise and measurement operators from their original counterparts. This addresses a key open problem left unresolved in prior works.

Possible extensions of this work include developing a reduction method based on the reachability analysis of the model, that is, leveraging the knowledge of initial conditions. Since in many practical cases only a few initial conditions are considered when studying or simulating a quantum system, one can devise a dual approach to the one presented here that reduces quantum models to the algebra that contain the trajectories of interest, see e.g. \cite{prxq2024,tit2023}. Such an extension is, however, more challenging: first, one needs to introduce distorted (or smeared) algebras to ensure that the subspaces on which one projects are minimal, see \cite{tit2023} for more detail on the matter; second, as discussed in Section \ref{sec:problem_setting}, the Girsanov change of measure necessary to connect the linear stochastic evolution \eqref{eq:Belavkin-Zakai} and the SME \eqref{eq:SME}, depends on the initial condition of the model, and this should be taken into account when considering a reduction based on initial conditions. 

Other extensions of this method include {\em approximate} model-reduction protocols that are still capable of guaranteeing the complete positivity and preservation of total probability properties of the reduced model. These types of approaches promise to have direct applications in practical scenarios, where the reduced filters obtained through the exact method we propose here are too large to be implemented. Lastly, to further develop these results towards applications in feedback control, where the dynamics is made nonlinear by a state-dependent Hamiltonian perturbation, it could be convenient to extend it to a hybrid quantum-classical scenario, as those presented in \cite{barchielli2023markovian,barchielli2024hybridquantumclassicalsystemsquasifree}.

\section*{Acknowledgment}
The authors would like to thank Tristan Benoist and Lorenza Viola for useful discussions.

\noindent\textbf{Funding:}
F.T. acknowledges funding from the European Union - NextGenerationEU, within the National Center for HPC, Big Data and Quantum Computing (Project No. CN00000013, CN 1,Spoke 10).
T.G. and F.T. were partially supported by the Italian Ministry of University and Research under the PRIN project ‘‘Extracting essential information and dynamics from complex networks’’, grant  no. 2022MBC2EZ.

\noindent\textbf{Conflict of interest:} The authors declare that they have no conflict of interest. 

\bibliographystyle{imsart-number} 
\bibliography{ref}

\section*{Appendix}
\appendix

\section{Reduction at the operator level}
\label{sec:noise_operators_reduction}

In this section we shall prove Proposition \ref{prop:operator_reduction} by constructing the necessary reduced operators. As in the rest of the paper we shall here focus only on the case of orthogonal conditional expectations $\Eb|_\As = \Eb|_{\As}^*$.
The fundamental superoperators that appear in the SDE \eqref{eq:reduced_quantum_filter} and defined in equation \eqref{eq:fundamental_super_operators} are all linear in the state $\rho$ but are either linear or quadratic in the operators that parameterize them, that is $H,C,D,L$. In particular, we have that $[H,\cdot]$ and $\Gc_{D}$ are linear in both the state and the operators $H$ and $D$ while $\Dc_{L}$ and $\Kc_{C}$ are linear in $\rho$ but quadratic in $L$ and $C$. 

We start by computing the reduced operators that define the superoperators that are linear in both arguments, as they are simpler to derive. For that purpose, we recall a useful result.
\begin{proposition}[\cite{prxq2024}, Proposition 5]
\label{prop:combining_reduction_injection}
    Consider a $*$-subalgebra $\mathscr{A}$ of $\mathcal{B(H)}$ with Wedderburn decomposition $$\As = U\left(\bigoplus_{k=1}^K \Bf(\Hc_{F,k})\otimes \one_{G,k} \right)U^* \simeq \bigoplus_{k=1}^K \Bf(\Hc_{F,k}) =: \check{\As}.$$ Let then $\rs_{\As}$ and $\es_{\As}$ be the CPTP factorization of the conditional expectation $\CE_\mathscr{A} = \es_\As\rs_\As$ as defined in equation \eqref{eq:cond_exp_factorization}. Then, for all $A\in\check{\As}$ and for all $X\in\Bf(\Hc)$, we have  
    \begin{equation}
    \rs_\As(X \es_\As(A)) = \es_\As^*(X) A, \qquad 
    \rs_\As(\es_\As(A) X) = A \es_\As^*(X).
    \end{equation}
\end{proposition}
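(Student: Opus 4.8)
The plan is to establish the left identity $\rs_\As(X\es_\As(A)) = \es_\As^*(X)A$ by a direct computation in the Wedderburn representation, and then to deduce the right identity by taking adjoints. Throughout I would work with the block isometries $V_k:\Hc_{F,k}\otimes\Hc_{G,k}\to\Hc$ appearing in \eqref{eq:cond_exp_factorization}, which satisfy $V_k^* V_l = \delta_{kl}\,\id$ and $\sum_k V_k V_k^* = \one$. The first step is to rewrite the two factors in this language: writing $A = \bigoplus_k A_k$ with $A_k\in\Bf(\Hc_{F,k})$, one has $\es_\As(A) = \sum_k V_k(A_k\otimes\sigma_k)V_k^*$, while $\rs_\As(Z) = \bigoplus_l \tr_{\Hc_{G,l}}(V_l^* Z V_l)$ for any $Z\in\Bf(\Hc)$.

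Substituting $Z = X\es_\As(A)$ and expanding, the $l$-th block of $\rs_\As(X\es_\As(A))$ is $\sum_k \tr_{\Hc_{G,l}}\!\big(V_l^* X V_k (A_k\otimes\sigma_k)\,V_k^* V_l\big)$; the orthogonality relation $V_k^* V_l = \delta_{kl}\,\id$ collapses the sum to the single term $\tr_{\Hc_{G,l}}\!\big(V_l^* X V_l\,(A_l\otimes\sigma_l)\big)$. Next I would invoke the module property of the partial trace over $\Bf(\Hc_{F,l})\otimes\one_{G,l}$, namely $\tr_{\Hc_{G,l}}\!\big(M\,(A_l\otimes\one_{G,l})\big) = \tr_{\Hc_{G,l}}(M)\,A_l$, together with the factorization $A_l\otimes\sigma_l = (\one_{F,l}\otimes\sigma_l)(A_l\otimes\one_{G,l})$, to pull $A_l$ out on the right, so that the $l$-th block becomes $\tr_{\Hc_{G,l}}\!\big(V_l^* X V_l\,(\one_{F,l}\otimes\sigma_l)\big)\,A_l$. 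It then remains to recognize the prefactor as the $l$-th block of $\es_\As^*(X)$: computing the Hilbert--Schmidt adjoint of $\es_\As$ from $\es_\As(\check Y) = \sum_k V_k(Y_k\otimes\sigma_k)V_k^*$ gives $\es_\As^*(X) = \bigoplus_l \tr_{\Hc_{G,l}}\!\big(V_l^* X V_l\,(\one_{F,l}\otimes\sigma_l)\big)$, where the Hermiticity of $\sigma_l$ and the identity $\tr_{\Hc_{G,l}}\!\big((\one\otimes\sigma_l)N\big) = \tr_{\Hc_{G,l}}\!\big(N(\one\otimes\sigma_l)\big)$ are used. Since multiplication in $\check\As$ is block-diagonal, this yields $\rs_\As(X\es_\As(A)) = \es_\As^*(X)\,A$.

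For the second identity I would avoid repeating the computation and instead take adjoints: since $\es_\As$, $\rs_\As$ and $\es_\As^*$ are (Hilbert--Schmidt adjoints of) completely positive maps, they all commute with $*$, so applying the first identity to $(\es_\As(A)X)^* = X^*\es_\As(A^*)$ and conjugating back gives $\rs_\As(\es_\As(A)X) = A\,\es_\As^*(X)$. The main obstacle I anticipate is bookkeeping rather than conceptual: one must handle the partial-trace manipulations carefully --- in particular the module property and the commutation of $\one\otimes\sigma_l$ under $\tr_{\Hc_{G,l}}$ --- and correctly identify $\es_\As^*$, making sure the density operators $\sigma_l$ sit on the correct side. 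No positivity or spectral input is required; everything reduces to the block-tensor structure supplied by the Wedderburn decomposition.
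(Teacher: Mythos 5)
Your proof is correct. Note, however, that the paper itself gives no proof of this proposition: it is imported verbatim from \cite{prxq2024} (Proposition 5 there), so there is no in-paper argument to compare against; your write-up is a self-contained verification of the cited result. Your route is the natural one given the paper's conventions: you use exactly the block form of $\Rc_\As$ and $\Jc_\As$ from \eqref{eq:cond_exp_factorization} and the isometry relations $V_k^* V_l = \delta_{kl}\,\mathrm{id}$, $\sum_k V_kV_k^*=\one$ stated in the appendix. The two delicate points are both handled correctly: (i) the module property $\tr_{\Hc_{G,l}}\bigl(M\,(A_l\otimes\one_{G,l})\bigr)=\tr_{\Hc_{G,l}}(M)\,A_l$ together with the factorization $A_l\otimes\sigma_l=(\one_{F,l}\otimes\sigma_l)(A_l\otimes\one_{G,l})$, and (ii) the identification of the Hilbert--Schmidt adjoint $\es_\As^*(X)=\bigoplus_l \tr_{\Hc_{G,l}}\bigl(V_l^* X V_l\,(\one_{F,l}\otimes\sigma_l)\bigr)$, which indeed requires $\sigma_l=\sigma_l^*$ and the cyclicity of $\tr_{\Hc_{G,l}}$ with respect to operators of the form $\one_{F,l}\otimes\sigma_l$; this matches the block formula \eqref{eqn:cond_exp_blocks} for $\CE_\As$. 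Deducing the second identity by conjugation is also sound, since $\rs_\As$, $\es_\As$ and $\es_\As^*$ are all completely positive (adjoints of CP maps are CP) and hence Hermiticity-preserving, so $\Phi(Z^*)=\Phi(Z)^*$ for each of them. A minor remark: your argument works for arbitrary full-rank $\sigma_k$, i.e.\ for general (not necessarily orthogonal) conditional expectations, which is the generality in which the cited proposition is stated, whereas the present paper only ever uses the orthogonal case $\sigma_k=\one_{G,k}/\dim(\Hc_{G,k})$.
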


Using this proposition, we can directly compute the reduced operators for the terms $[H,\cdot]$ and $\Gc_{D}(\cdot)$. Specifically, we have: 
\begin{itemize}
    \item $\Rc_\As[H,\Jc_\As(\check{\rho})] = [\Jc_\As^*(H),\check{\rho}];$ 
    \item $\Rc_\As\Gc_{D} \Jc_\As(\check{\rho}) = \Rc_\As(D \Jc_\As(\check{\rho}) + \Jc_\As(\check{\rho}) D^*) = \Jc_\As^*(D)\check{\rho} + \check{\rho} \Jc_\As^*(D)^* = \Gc_{\check{D}_j}(\check{\rho}).$
\end{itemize}
This proves points (1) and part of point (3) of Proposition \ref{prop:operator_reduction} with the reduced Hamiltonian $\check{H} = \Jc_\As^*(H)\in\check{\As}$ and with $\check{D} \equiv \Jc_\As^*(D)$.
It thus remains to prove points (2), (3) and (4). This requires the computation of the reduced operators for the two quadratic superoperators $\Dc_{L}$ and $\Kc_{C}$.

\begin{lemma}
    Let $\As\subseteq\Bf(\Hc)$ and a decomposition
    \(\Hc = \bigoplus_{k=1}^K \Hc_{F,k}\otimes\Hc_{G,k} \)
so that we can write
\begin{align*}
    \As &= U\left(\bigoplus_{k=1}^K \Bf(\Hc_{F,k}) \otimes \one_{G,k}\right)U^*.
\end{align*}
    Then, any operator $X\in\Bf(\Hc)$ admits a decomposition of the form \[X =\sum_{j,k=1}^K \sum_{\ell=1}^{d} V_j (X_{\ell,F}^{(j,k)}\otimes G_{\ell}^{(j,k)}) V_k^*\]
    where: \begin{itemize}
        \item $V_k$ are non-square isometries $V_k:\Hc_k\equiv\Hc_{F,k}\otimes\Hc_{G,k}\to \Hc$ such that $V_k^* V_j = \one_{\dim(\Hc_k)} \delta_{j,k}$ and $V_kV_j^* = \delta_{j,k}\Pi_{\Hc_k},$ with $\Pi_{\Hc_k}$ the orthogonal projector onto $\Hc_k$;
        \item $\{G_{\ell}^{(j,k)}\}_{\ell=1}^{d_{j,k}}$ are orthonormal operator basis for $\Bf(\Hc_{G,j},\Hc_{G,k})$, i.e. $\tr[G_\ell^{(j,k)*} G_f^{(j,k)}] = \delta_{\ell,f}$ with $d_{j,k}= \dim(\Hc_{G,j})\dim(\Hc_{G,k})$;
        \item for convenience we fixed $d\equiv\max_{j} \dim(\Hc_{G,j})^2$ and $X_{\ell,F}^{(j,k)} = 0$ and $G_{\ell}^{(j,k)}=0$ for all $\ell = d_{j,k}+1,\dots, d$. 
    \end{itemize} 
\end{lemma}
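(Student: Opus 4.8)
The plan is to prove the decomposition in two stages: first split $X$ into ``blocks'' indexed by pairs $(j,k)$ by means of the isometries $V_k$, and then expand each block using the tensor-product structure $\Hc_k=\Hc_{F,k}\otimes\Hc_{G,k}$ of the corresponding subspaces.

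First I would record that the block projectors resolve the identity. Taking $j=k$ in the relation $V_k V_j^*=\delta_{j,k}\Pi_{\Hc_k}$ gives $V_k V_k^*=\Pi_{\Hc_k}$, and since by hypothesis $\Hc=\bigoplus_{k=1}^K\Hc_k$ is an orthogonal direct sum, the block projectors sum to the identity, $\sum_{k=1}^K V_k V_k^*=\one_{\Hc}$. Sandwiching $X$ between two copies of this resolution then yields
\[ X = \left(\sum_{j=1}^K V_j V_j^*\right) X \left(\sum_{k=1}^K V_k V_k^*\right) = \sum_{j,k=1}^K V_j\,\big(V_j^* X V_k\big)\,V_k^*, \]
so that it suffices to put each block $X^{(j,k)}\equiv V_j^* X V_k\in\Bf(\Hc_k,\Hc_j)$ into the required tensor form. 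Here $V_k^*V_j=\one\,\delta_{j,k}$ guarantees that the map $X\mapsto(X^{(j,k)})_{j,k}$ really is the inverse of reassembly, so no information is lost.

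Next I would exploit the factorisation of each subspace. The operator space $\Bf(\Hc_k,\Hc_j)=\Bf(\Hc_{F,k}\otimes\Hc_{G,k},\,\Hc_{F,j}\otimes\Hc_{G,j})$ is canonically isomorphic to $\Bf(\Hc_{F,k},\Hc_{F,j})\otimes\Bf(\Hc_{G,j},\Hc_{G,k})$ (a dimension count gives $\dim(\Hc_{F,k})\dim(\Hc_{F,j})\dim(\Hc_{G,j})\dim(\Hc_{G,k})$ on both sides), and the second tensor factor carries the Hilbert--Schmidt inner product for which $\{G_\ell^{(j,k)}\}_{\ell=1}^{d_{j,k}}$, with $d_{j,k}=\dim(\Hc_{G,j})\dim(\Hc_{G,k})$, is an orthonormal basis. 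Expanding the ``$G$-part'' of $X^{(j,k)}$ against this basis produces a unique expansion
\[ X^{(j,k)} = \sum_{\ell=1}^{d_{j,k}} X_{\ell,F}^{(j,k)}\otimes G_\ell^{(j,k)}, \qquad X_{\ell,F}^{(j,k)}\in\Bf(\Hc_{F,k},\Hc_{F,j}), \]
whose coefficients are recovered by the partial Hilbert--Schmidt projection $X_{\ell,F}^{(j,k)}=\tr_{\Hc_{G,k}}\!\big[(\one_{F}\otimes G_\ell^{(j,k)*})\,X^{(j,k)}\big]$; orthonormality $\tr[G_\ell^{(j,k)*}G_f^{(j,k)}]=\delta_{\ell,f}$ is exactly what makes this formula invert the expansion.

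Finally, to collapse the block-dependent ranges $d_{j,k}$ into a single index I would pad: setting $d\equiv\max_{j}\dim(\Hc_{G,j})^2$ and declaring $X_{\ell,F}^{(j,k)}=0$, $G_\ell^{(j,k)}=0$ for $\ell=d_{j,k}+1,\dots,d$ adds only null terms, so every block may be summed up to the common upper index $d$. Substituting the block expansions back into the sandwich identity yields precisely $X=\sum_{j,k=1}^K\sum_{\ell=1}^d V_j\,(X_{\ell,F}^{(j,k)}\otimes G_\ell^{(j,k)})\,V_k^*$, as claimed. Since the statement is purely structural, I do not expect any analytic obstacle; the one place demanding care is the bookkeeping of the non-square blocks $X^{(j,k)}$, whose domain $\Hc_k$ and codomain $\Hc_j$ have different dimensions when $j\neq k$, together with matching the ordering convention of the tensor factors (in particular the direction of the maps $G_\ell^{(j,k)}$) to the one fixed in the statement. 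I anticipate that this index- and convention-tracking, rather than any substantive difficulty, is where errors would most easily arise.
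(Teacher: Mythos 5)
Your proposal is correct and follows essentially the same route as the paper's proof: decompose $X$ into blocks via the resolution of identity $\sum_k V_kV_k^*=\one$, expand each block $V_j^*XV_k$ in an operator Schmidt decomposition against the orthonormal basis $\{G_\ell^{(j,k)}\}$, and pad with zero operators to reach the common index $d$. The only differences are cosmetic (your partial-trace formula for the coefficients versus the paper's expansion in an explicit operator basis for the $F$-factor, and the domain/codomain ordering convention, which you correctly flag).
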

\begin{proof}
As mentioned above, for a unital algebra $\As\subseteq\Bf(\Hc)$, there exists a decomposition of the Hilbert space $\Hc = \bigoplus_{k=1}^K \Hc_{F,k} \otimes \Hc_{G,k}$ and a unitary matrix $U$ such that $\As = U\left(\bigoplus_{k=1}^K \Bf(\Hc_{F,k})\otimes \one_k\right)U^*$ with $\one_k\in\Bf(\Hc_{G,k})$. Such a decomposition of the Hilbert space induces a decomposition for $\Bf(\Hc)$, namely, \[\Bf(\Hc) = \bigoplus_{j,k=1}^K \Bf(\Hc_{F,j},\Hc_{F,k})\otimes\Bf(\Hc_{G,j},\Hc_{G,k}) \]
where $\Bf(\Hc_{F,j},\Hc_{F,k})$ ($\Bf(\Hc_{G,j},\Hc_{G,k})$) is the set of bounded operators from $\Hc_{F,k}$ to $\Hc_{F,j}$ (from $\Hc_{G,k}$ to $\Hc_{G,j}$ respectively).
Let us then denote by $V_k$ the non-square isometries $V_k:\Hc_k\equiv\Hc_{F,k}\otimes\Hc_{G,k}\to \Hc$ such that $V_k^* V_j = \one_{\dim(\Hc_k)} \delta_{j,k}$ and $V_kV_j^* = \Pi_{\Hc_k}\delta_{j,k}$ the orthogonal projector onto $\Hc_k$. Then, for any operator $X\in\Bf(\Hc)$ we have $X = \sum_{j,k=1}^K V_jV_j^* X V_kV_k^*$ since $\sum_{j=1}^K V_jV_j^* = \one_{\dim(\Hc)}$ which, by defining $X_{j,k} \equiv V_j^* X V_k \in\Bf(\Hc_j,\Hc_k)$, can be rewritten as $X = \sum_{j,k=1}^K V_j X_{j,k} V_k^*$. 

Let us then consider an orthonormal operator basis $\{G_{\ell}^{(j,k)}\}_{\ell=1}^{d_{j,k}}$ for $\Bf(\Hc_{G,j},\Hc_{G,k})$, i.e. $\tr[G_\ell^{(j,k)*} G_f^{(j,k)}] = \delta_{\ell,f}$.  
Each $X_{j,k}\in\Bf(\Hc_j,\Hc_k)$ can then be represented by its operator Schmidt decomposition, i.e. \[X_{j,k} = \sum_{\ell=1}^{d_{j,k}} X_{F,\ell}^{(j,k)}\otimes G_{\ell}^{(j,k)}\] with $\{X_{F,\ell}^{(j,k)}\}\subset\Bf(\Hc_{F,j},\Hc_{F,k})$.
To practically compute the operators $X_{F,\ell}^{(j,k)}$ one can construct two orthonormal vector basis $\{\ket{\phi_a}\}$ and $\{\ket{\psi_b}\}$ for $\Hc_{F,j}$ and $\Hc_{F,k}$ respectively, so that $\{\ketbra{\phi_a}{\psi_b}\}$ forms an orthonormal operator basis for $\Bf(\Hc_{F,j}, \Hc_{F,k})$ so that \[X_{F,\ell}^{(j,k)} = \sum_{a,b} \ketbra{\phi_a}{\psi_b} \tr\left[(\ketbra{\phi_a}{\psi_b}\otimes G_\ell^{(j,k)})^* X_{j,k} \right],\] which is simply a generalization of the partial trace for operator spaces of the type $\Bf(\Hc_j,\Hc_k)$ with $\Hc_{j} = \Hc_{F,j}\otimes\Hc_{G,j}$ and $\Hc_k=\Hc_{F,k}\otimes\Hc_{G,k}$. To avoid carrying with us the dependence of $d_{j,k}$ on the indexes $j,k$ one can define $d\equiv \max_{j,k} d_{j,k} = \max_{j=1,\dots,N}\dim(\Hc_{G,j})^2$ and complete the sets of operators $\{X_{F,\ell}^{(j,k)}\}_{j=1}^{d_{j,k}}$ and $\{G_\ell^{(j,k)}\}_{\ell=1}^{d_{j,k}}$ with zeros operators until their cardinality is exactly $d$. In this manner $\ell=1,\dots,d$ regardless of $j$ and $k$.
By composing the two representations we then obtain the statement. 
\end{proof}

Similarly, we can obtain a natural representation for the reduced representation of the algebra $\As$, $\check{\As}\subset\Bf(\check{\Hc})$ whose Wedderburn decomposition is $\check{\As} = \bigoplus_{k=1}^K \Bf(\Hc_{F,k})$ with $\check{\Hc} = \bigoplus_{k=1}^K \Hc_{F,k}$. Let us then denote by $W_k$ the non-square isometries $W_k:\Hc_{F,k}\to\check{\Hc}$, such that $W_j W_k^* = \Pi_{\Hc_{F,k}}\delta_{j,k}$ and $W_k^* W_j = \one_{\dim(\Hc_{F,k})}\delta_{j,k}$. A density operator $\check{\rho}\in\Df(\check{\As})$ is of the form $\check{\rho} = \sum_{k=1}^K W_k \check{\rho}_k W_k^*$ with each $\check{\rho}_k\geq0$ and such that $\sum_{k=1}^K\tr[\check{\rho}_k] = 1$ while any operator $\check{X}\in\Bf(\check{\Hc})$ can be represented as \[\check{X} = \sum_{j,k=1}^{N} W_j \check{X}_{j,k} W_k^*.\]

Let us now focus on the term $\Kc_{C}$ for any $C\in\Bf(\Hc)$.
Because $\Kc_C$, $\Rc_{\As}$ and $\Jc_\As$ are CP, we know that $\check{\Kc}_{C}\equiv \Rc_\As \Kc_C \Jc_\As$ is CP, and thus admits a Kraus representation $\check{\Kc}_C(\check{\rho}) = \sum_j \check{C}_j \check{\rho} \check{C}_j^*$ for a set of Kraus operators $\{\check{C}_j\}\subset\Bf(\check{\Hc})$. We now want to obtain one possible set $\{\check{C}_j\}$ starting from $C$ and the Wedderburn decomposition of $\As$.  

 Notice that: 1) there are clearly multiple, equivalent Kraus representations of $\check{\Kc}_C$ and 2) the reduction does not necessarily preserve the Kraus rank, i.e., the Kraus rank of $\check{\Kc}_C$ might be greater or equal than the Kraus rank of $\Kc_C$. This means that, in general, the cardinality of $\{\check{C}_j\}$ could be greater than $1$. 

Among all possible representations of the reduced map $\check{\Kc}_C$ we decided, for convenience, to use a representation where the Kraus operators are (principal) block diagonals. For this reason before the next important result we introduce the block-diagonal index $e$. Let $N$ be the number of blocks in the Wedderburn decomposition of $\check{\As}$ (and thus of $\As$), i.e. $j=1,\dots,N$. 
We can now define the principal block-diagonal index $e\equiv k-j$ and observe that $e\in[-N+1,N-1]$. 
In order to specify a single block of $\check{X}$ we can specify either its row $j$ and column $k$ position or, equivalently, its column $k$ and its principal diagonal $e$, since $j=k-e$. Once a block-diagonal index $e$ is specified, the possible column indexes are $\max(1,1+e)\leq k\leq\min(N,N+e)$.  
This means that we can equivalently rewrite any operator $\check{X}\in\Bf(\check{\Hc})$ as
\[\check{X} = \sum_{j,k=1}^K W_j \check{X}_{j,k} W_k^* = \sum_{e=-N+1}^{N-1} \sum_{k=\max(1,1+e)}^{\min(N,N+e)} W_{k-e} \check{X}_{k-e,k} W_{k}^*.\]
This description divides the operator $\check{X}$ into its principal components as depicted in Figure \ref{fig:diagonals}. In particular we find that $e=0$ is associated with the main block diagonal, i.e. the blocks $\check{X}_{k,k}$, while $e>0$ is associated with upper diagonals and $e<0$ is associated with lower diagonals.

\begin{figure}
    \centering
    \includegraphics[width=0.8\linewidth]{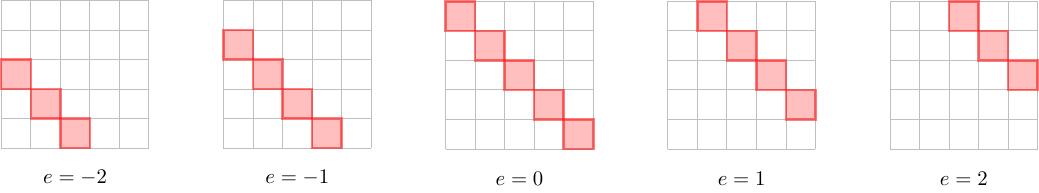}
    \caption{Pictorial representation of the block-diagonal index $e\in[-N+1,N-1]$ of the principal block diagonals.}
    \label{fig:diagonals}
\end{figure}
 
We are now finally ready for the following proposition.
\begin{proposition}
\label{prop:cp_reduction}
    Let $\Kc_{C}$ be the CP map $\Kc(\rho) \equiv C \rho C^*$ and let \[C =\sum_{j,k=1}^K\sum_{\ell=1}^{d} V_j (C_{\ell,F}^{(j,k)}\otimes G_{\ell}^{(j,k)}) V_k^*\] be the representation of the Kraus operator $C\in\Bf(\Hc)$ induced by the Wedderburn decomposition of $\As$ as above. Then, the reduced CP map $\check{\Kc}_{C} \equiv \Rc_\As \Kc_C \Jc_\As$, with $\check{\rho} = \sum_{k=1}^K W_k \check{\rho}_k W_k^*$, can be written as 
    \[\check{\Kc}_C(\check{\rho}) = \sum_{e=-N+1}^{N-1} \sum_{\ell=1}^d \Kc_{\check{C}_{\ell,e}}(\check{\rho})= \sum_{e=-N+1}^{N-1} \sum_{\ell=1}^d \check{C}_{\ell,e} \check{\rho} \check{C}_{\ell,e}^* \qquad \text{with}\qquad 
    \check{C}_{\ell,e} = \sum_{k=\max(1,1+e)}^{\min(N,N+e)} W_{k-e} \frac{C_{\ell,F}^{(k-e,k)}}{\dim(\Hc_{G,k})^\um} W_k^*.\]
\end{proposition}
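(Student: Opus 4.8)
The plan is to compute $\check{\Kc}_C = \Rc_\As\,\Kc_C\,\Jc_\As$ directly, block by block, and then to recognize the resulting expression as the sum of elementary CP maps indexed by the block-diagonal label $e$. Since we work with the orthogonal conditional expectation, the state-extension densities are $\sigma_k = \one_{G,k}/\dim(\Hc_{G,k})$, so that for $\check\rho = \sum_k W_k \check\rho_k W_k^*$ one has $\Jc_\As(\check\rho) = \sum_k V_k(\check\rho_k\otimes\sigma_k)V_k^*$, while $\Rc_\As(X) = \bigoplus_n \tr_{\Hc_{G,n}}(V_n^* X V_n)$. First I would substitute the Wedderburn-induced decomposition of $C$ and its adjoint $C^* = \sum_{j,k,\ell} V_k(C_{\ell,F}^{(j,k)*}\otimes G_\ell^{(j,k)*})V_j^*$ into $C\,\Jc_\As(\check\rho)\,C^*$ and collapse the products of isometries using $V_j^* V_k = \delta_{j,k}\one$. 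This forces the three ``$G$-space'' column indices to coincide, leaving a single sum over the common index $m$ with summand $V_j\big(C_{\ell,F}^{(j,m)}\check\rho_m C_{\ell',F}^{(j',m)*}\otimes G_\ell^{(j,m)}\sigma_m G_{\ell'}^{(j',m)*}\big)V_{j'}^*$.

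Second, I would apply $\Rc_\As$. Conjugating by $V_n$ selects the $(n,\cdot)$ blocks, again via $V_j^* V_n = \delta_{j,n}\one$, so that $j=j'=n$, and the partial trace over $\Hc_{G,n}$ acts only on the $G$-factor. Using cyclicity and $\sigma_m = \one_{G,m}/\dim(\Hc_{G,m})$ together with the orthonormality $\tr[G_\ell^{(n,m)*}G_{\ell'}^{(n,m)}]=\delta_{\ell,\ell'}$, the $G$-factor contributes $\tr[G_\ell^{(n,m)}\sigma_m G_{\ell'}^{(n,m)*}] = \delta_{\ell,\ell'}/\dim(\Hc_{G,m})$. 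The cross terms $\ell\neq\ell'$ thus vanish, leaving the $n$-th block
$$[\check{\Kc}_C(\check\rho)]_n = \sum_{m,\ell}\frac{1}{\dim(\Hc_{G,m})}\,C_{\ell,F}^{(n,m)}\,\check\rho_m\,C_{\ell,F}^{(n,m)*}.$$

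Third, I would check that this block-diagonal expression coincides with $\sum_{e,\ell}\check{C}_{\ell,e}\,\check\rho\,\check{C}_{\ell,e}^*$ for the claimed $\check{C}_{\ell,e}$. Expanding $\check{C}_{\ell,e}\check\rho\check{C}_{\ell,e}^*$ with $\check\rho = \sum_m W_m\check\rho_m W_m^*$ and using $W_k^* W_m = \delta_{k,m}\one$ collapses the sum so that each $\check{C}_{\ell,e}$ contributes, in the $(k-e)$-th diagonal block, the term $\dim(\Hc_{G,k})^{-1}\,C_{\ell,F}^{(k-e,k)}\check\rho_k C_{\ell,F}^{(k-e,k)*}$, the two factors $\dim(\Hc_{G,k})^{-1/2}$ combining into the single $\dim(\Hc_{G,k})^{-1}$. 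Setting $n = k-e$ then converts the sum over $(e,k)$ into a sum over all ordered pairs $(n,m)$ of blocks, reproducing exactly the expression displayed above.

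The routine part is the index bookkeeping; the point requiring care is that the reduction does not preserve Kraus rank, so one must organize the reduced operators by the diagonal index $e$ rather than seek a one-to-one correspondence, and verify that the square-root normalization $\dim(\Hc_{G,k})^{-1/2}$ is attached to the correct (column) block $k$ so that the sandwich yields $\dim(\Hc_{G,k})^{-1}$. This, together with the orthonormality of the $G$-basis eliminating the off-diagonal $\ell\neq\ell'$ contributions, is precisely what makes the block-diagonal Kraus form close up correctly.
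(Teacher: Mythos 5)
Your proposal is correct and follows essentially the same route as the paper's proof: a direct block computation of $\Rc_\As\,\Kc_C\,\Jc_\As$ using the isometry orthogonality relations and the orthonormality of the $G$-basis to kill the $\ell\neq\ell'$ cross terms, followed by expanding the claimed Kraus form and matching via the re-indexing $n=k-e$. The bookkeeping of the $\dim(\Hc_{G,k})^{-1/2}$ normalization on the column index is also handled exactly as in the paper.
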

\begin{proof}
The application of the injection map $\Jc_\As$ to $\check{\rho}$ gives $\Jc_\As[\check{\rho}] = \sum_h V_h \left(\check{\rho}_h\otimes \frac{\one_h}{\dim(\Hc_{G,h})}\right) V_h^*$. 
We can then compute
\begin{align*}
    \check{\Kc}_C[\check{\rho}] &= \Rc_\As\left[ \sum_{\substack{j,k,\\h,s,\\t=1}}^K\sum_{\ell,u=1}^{d} V_j (C_{\ell,F}^{(j,k)}\otimes G_{\ell}^{(j,k)}) \underbrace{V_k^* V_h}_{\one_{\dim(\Hc_h)}\delta_{k,h}} \left(\check{\rho}_h\otimes \frac{\one_h}{\dim(\Hc_{G,h})}\right) \underbrace{V_h^* V_t}_{\one_{\dim(\Hc_h)}\delta_{h,t}} (C_{u,F}^{(s,t)*}\otimes G_{u}^{(s,t)*}) V_s^* \right]\\ 
    &= \Rc_\As\left[ \sum_{j,h,s=1}^K\sum_{\ell,u=1}^{d} V_j \left(C_{\ell,F}^{(j,h)} \check{\rho}_h C_{u,F}^{(s,h)*} \otimes \frac{G_{\ell}^{(j,h)} G_{u}^{(s,h)*}}{\dim(\Hc_{G,h})}\right) V_s^* \right]\\
    &= \sum_{k,j,h,s=1}^K\sum_{\ell,u=1}^{d} \, W_k \,\,\tr_{\Hc_{G,k}} \left[ \underbrace{V_k^* V_j}_{\one_{\dim(\Hc_k)}\delta_{k,j}} \left( \underbrace{L_{\ell,F}^{(j,h)} \check{\rho}_h L_{u,F}^{(s,h)*}}_{\in\Bf(\Hc_{F,k})} \otimes \underbrace{\frac{G_{\ell}^{(j,h)} G_{u}^{(s,h)*}}{\dim(\Hc_{G,h})}}_{\in\Bf(\Hc_{F,k})}\right) \underbrace{V_s^* V_k}_{\one_{\dim(\Hc_k)}\delta_{s,k}} \right] W_k^*\\
    &= \sum_{k,h=1}^K\sum_{\ell,u=1}^{d} W_k \frac{C_{\ell,F}^{(k,h)} \check{\rho}_h C_{u,F}^{(k,h)*}}{\dim(\Hc_{G,h})} \underbrace{\tr\left[G_{\ell}^{(k,h)} G_{u}^{(k,h)*}\right]}_{\delta_{\ell,u}} W_k^*
    = \sum_{k,h=1}^K \sum_{\ell=1}^{d} W_k \frac{C_{\ell,F}^{(k,h)} \check{\rho}_h C_{\ell,F}^{(k,h)*}}{\dim(\Hc_{G,h})} W_k^*.
\end{align*}
On the other hand:

\begin{align*}
    &\sum_{e=-N+1}^{N-1} \sum_{\ell=1}^d\quad \check{C}_{\ell,e} \check{\rho} {\check{C}_{\ell,e}}^* = \\
    & = \sum_{e=-N+1}^{N-1} \sum_{\ell=1}^d\quad \left( \sum_{s=\max(1,1+e)}^{\min(N,N+e)} W_{s-e} \frac{C_{\ell,F}^{(s-e,s)}}{\dim(\Hc_{G,s})^\um} W_s^*\right) \left( \sum_{h=1}^K W_h \check{\rho}_h W_h^*\right) \left( \sum_{t=\max(1,1+e)}^{\min(N,N+e)} W_t \frac{C_{\ell,F}^{(t-e,t)*}}{\dim(\Hc_{G,t})^\um} W_{t-e}^*\right)\\
    &= \sum_{e=-N+1}^{N-1} \sum_{\ell=1}^d \sum_{h=1}^K \sum_{s,t=\max(1,1+e)}^{\min(N,N+e)}  \quad W_{s-e} \frac{C_{\ell,F}^{(s-e,s)}}{\dim(\Hc_{G,s})^\um} \underbrace{W_s^* W_h}_{\one \delta_{s,h}} \check{\rho}_h \underbrace{W_h^* W_t}_{\one \delta_{s,t}} \frac{C_{\ell,F}^{(t-e,t)*}}{\dim(\Hc_{G,t})^\um} W_{t-e}^*\\
    &= \sum_{e=-N+1}^{N-1} \sum_{h=\max(1,1+e)}^{\min(N,N+e)} \sum_{\ell=1}^{d_{h-e,h}}\quad W_{h-e} \frac{C_{\ell,F}^{(h-e,h)} \check{\rho}_h C_{\ell,F}^{(h-e,h)*}}{\dim(\Hc_{G,h})} W_{h-e}^*\\
    &= \sum_{k,h=1}^K \sum_{\ell=1}^{d}\quad W_k \frac{C_{\ell,F}^{(k,h)} \check{\rho}_h C_{\ell,F}^{(k,h)*}}{\dim(\Hc_{G,h})} W_k^* 
\end{align*}
where in the last line we substituted $k=h-e$. This concludes the proof.
\end{proof}

\begin{example}
    To provide better intuition on how the reduced operators are constructed we consider here a simple example. Consider four finite-dimensional Hilbert spaces $\Hc_{F,j},\Hc_{G,j}$ with $j=1,2$, $\dim(\Hc_{F,1}) = \dim(\Hc_{F,2})$ and $\dim(\Hc_{G,1}) = \dim(\Hc_{G,2})$. Assume that the total Hilbert space is given by $\Hc=\bigoplus_{j=1}^2 \Hc_{F,j}\otimes\Hc_{G,j}$ and consider the unital algebra $\As\subset\Bf(\Hc)$ whose Wedderburn decomposition is $\As = \bigoplus_{j=1}^2 \Bf(\Hc_{F,j}) \otimes \one_{G,j}$. Consider then a CP map $\Kc_{C}(\rho) = C\rho C^*$ with Kraus operator $C\in\Bf(\Hc)$ constructed as follows:
    \[C=\left[\begin{array}{c|c}
         F_{1,1}\otimes G_{1,1}& F_{1,2}\otimes G_{1,2} \\\hline
         F_{2,1}\otimes G_{2,1}& F_{2,2}\otimes G_{2,2}
    \end{array}\right]\]
    with generic operators $F_{j,k}\in\Bf(\Hc_{F,j})$ and $G_{j,k}\in\Bf(\Hc_{F,j}).$ The reduced CP map $\check{\Kc}_C(\check{\rho}) \equiv \Rc_\As \Kc_C \Jc_\As(\check{\rho})$ can be written in Kraus form as \[\check{\Kc}_{C}(\check{\rho}) = \sum_{e=-1}^1 \Kc_{\check{C}_e}(\check{\rho}) = \sum_{e=-1}^1 \check{C}_e \check{\rho} \check{C}_e^*\] with 
    \[\check{C}_{-1} = \left[\begin{array}{c|c}
         0& 0 \\\hline
         \gamma_{2,1}F_{2,1}& 0
    \end{array}\right],\quad
    \check{C}_{0} = \left[\begin{array}{c|c}
         \gamma_{1,1}F_{1,1}& 0 \\\hline
         0& \gamma_{2,2}F_{2,2}
    \end{array}\right],\quad
    \check{C}_{1} = \left[\begin{array}{c|c}
         0& \gamma_{1,2}F_{1,2} \\\hline
         0& 0
    \end{array}\right]\]
    where $\gamma_{j,k} = \frac{\tr[G_{j,k}^* G_{j,k}]^\um}{\dim(\Hc_{G,k})^\um}$. 
\qed
\end{example}

Let us denote with $\Xc$ the map that takes the noise operator $C$ and returns the set $\Xc(C)\equiv\{\check{C}_{\ell,e}\}_{\ell,e}$ defined in Proposition \ref{prop:cp_reduction}, and such that $\check{\Kc}_C(\check{\rho}) = \sum_{\check{C}\in\Xc(C)} \check{C} \check{\rho} \check{C}^*$. 

Clearly, for CP maps with Kraus rank $>1$, i.e. $\Psi(\rho) = \sum_j C_j\rho C_j^*,$ it holds that,  $\Rc_\As \Psi \Jc_\As[\check{\rho}] = \sum_j \sum_{\check{C}\in\Xc(C_j)} \check{C} \check{\rho} \check{C}^*$ by linearity. One can further verify that if $\Psi$ is CPTP then, since $\Rc_\As,\Jc_\As$ are CPTP then $\Rc_\As \Psi \Jc_\As$ is CPTP, and thus $\sum_j \sum_{\check{C}\in\Xc(C_j)} \check{C}^*\check{C} = \one$. This solves the problem, left open in \cite{tit2023}, of finding the reduced operators in Kraus representation for the reduction of a CPTP map $\Psi$.

The connection between the map $\Xc$ and the map $\Jc_\As^*$ we used for linear terms follows from the next corollary.
\begin{corollary}
\label{cor:reduced_operators}
    Under the assumptions of Proposition \ref{prop:cp_reduction}, and assuming, without loss of generality, that $G_{1}^{(j,j)} = \frac{\one_{\Hc_{G,j}}}{\dim(\Hc_{G,j})^\um}$ we have that $\Jc_{\As}^*(C)\in\Xc(C)$. Moreover, if $C\in\As$, we have $\Xc(C) = \{\Jc_\As^*(C)\}$.
\end{corollary}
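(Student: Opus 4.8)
The plan is to compute the dual injection map $\Jc_\As^*$ explicitly in the Wedderburn--Schmidt coordinates introduced above and to recognize that it coincides, up to the normalization fixed by the assumption $G_1^{(j,j)}=\one_{\Hc_{G,j}}/\dim(\Hc_{G,j})^\um$, with the single Kraus operator $\check C_{1,0}$ produced by Proposition \ref{prop:cp_reduction} (the $e=0$, $\ell=1$ term). First I would characterize $\Jc_\As^*(C)$ through the defining relation $\inner{\Jc_\As^*(C)}{\check Y}_{HS}=\inner{C}{\Jc_\As(\check Y)}_{HS}$, valid for all $\check Y\in\check\As$. Substituting the block form $\Jc_\As(\check Y)=\sum_k V_k(\check Y_k\otimes\sigma_k)V_k^*$ with $\sigma_k=\one_{\Hc_{G,k}}/\dim(\Hc_{G,k})$, using $V_k^* C^* V_k=(C_{k,k})^*$ and expanding $C_{k,k}$ in its operator Schmidt decomposition $C_{k,k}=\sum_\ell C_{\ell,F}^{(k,k)}\otimes G_\ell^{(k,k)}$, reduces the pairing to $\sum_{k,\ell}\tr[(C_{\ell,F}^{(k,k)})^*\check Y_k]\,\tr[(G_\ell^{(k,k)})^*\sigma_k]$. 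The crucial observation is that $\sigma_k$ is proportional to $\one_{\Hc_{G,k}}$, hence to $G_1^{(k,k)}$, while the remaining basis operators $G_\ell^{(k,k)}$, $\ell\geq2$, are Hilbert--Schmidt orthogonal to $G_1^{(k,k)}$; therefore $\tr[(G_\ell^{(k,k)})^*\sigma_k]=\delta_{\ell,1}/\dim(\Hc_{G,k})^\um$ and only the $\ell=1$ term survives. This yields $\Jc_\As^*(C)=\sum_k W_k\,C_{1,F}^{(k,k)}/\dim(\Hc_{G,k})^\um\,W_k^*$, which is exactly $\check C_{1,0}$ in the notation of Proposition \ref{prop:cp_reduction}, establishing $\Jc_\As^*(C)\in\Xc(C)$.

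For the second claim I would assume $C\in\As$ and read off the consequences for the blocks $C_{j,k}=V_j^* C V_k$. Since membership in $\As$ forces $C=\sum_k V_k(C_{F,k}\otimes\one_{\Hc_{G,k}})V_k^*$, all off-diagonal blocks vanish, $C_{j,k}=0$ for $j\neq k$, so every $\check C_{\ell,e}$ with $e\neq0$ (which by construction only involves blocks $C_{\ell,F}^{(k-e,k)}$ with $k-e\neq k$) is identically zero. On the diagonal, $C_{k,k}=C_{F,k}\otimes\one_{\Hc_{G,k}}=\dim(\Hc_{G,k})^\um\,C_{F,k}\otimes G_1^{(k,k)}$, so its Schmidt expansion has only an $\ell=1$ component and $C_{\ell,F}^{(k,k)}=0$ for $\ell\geq2$; this annihilates all $\check C_{\ell,0}$ with $\ell\geq2$. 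Hence the only possibly nonzero element of $\Xc(C)$ is $\check C_{1,0}=\Jc_\As^*(C)$, giving $\Xc(C)=\{\Jc_\As^*(C)\}$.

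I expect the only real obstacle to be the bookkeeping of normalization factors: one must check that the $\dim(\Hc_{G,k})$ coming from $\sigma_k$ combines with the $\dim(\Hc_{G,k})^\um$ coming from the normalization $G_1^{(k,k)}=\one_{\Hc_{G,k}}/\dim(\Hc_{G,k})^\um$ to reproduce exactly the prefactor $1/\dim(\Hc_{G,k})^\um$ appearing in the formula for $\check C_{\ell,e}$ in Proposition \ref{prop:cp_reduction}. Everything else is a direct consequence of the Hilbert--Schmidt orthogonality of the Schmidt basis $\{G_\ell^{(k,k)}\}$, which is precisely what isolates a single Kraus operator from the generically larger set $\Xc(C)$.
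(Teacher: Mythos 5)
Your proof is correct and follows essentially the same route as the paper: both compute $\Jc_\As^*(C)$ in the Wedderburn--Schmidt coordinates (the paper plugs into its explicit partial-trace formula for $\Jc_\As^*$, you re-derive that formula from the Hilbert--Schmidt duality relation, which is the same computation), and both hinge on the isometries killing the off-diagonal blocks and on the orthogonality $\tr[(G_\ell^{(k,k)})^*\one]=0$ for $\ell\geq 2$ isolating the single Kraus operator $\check C_{1,0}$. Your treatment of the second claim, including the normalization bookkeeping $\sigma_k = G_1^{(k,k)}/\dim(\Hc_{G,k})^{\um}$, matches the paper's argument as well.
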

\begin{proof}
Note that, by picking the basis $\{G_\ell^{(j,k)}\}$ so that $G_1^{(j,j)}$ is proportional to $\one_{\Hc_G,j}$ for all $j$, we have $\tr[G_\ell^{(j,j)}] = 0$ for all $\ell\neq1$ and for all $j=1,\dots,N$. Considering then $C =\sum_{j,k=1}^K\sum_{\ell=1}^d V_j (C_{\ell,F}^{(j,k)}\otimes G_{\ell}^{(j,k)}) V_k^*$ and recalling that \[\Jc_\As^*(X) = \sum_{q=1}^K W_q \tr_{\Hc_{G,q}}\left[V_q^* X V_q \left(\one_{\Hc_{F,q}}\otimes\frac{\one_{\Hc_{G,q}}}{\dim(\Hc_{G,q})}\right) \right]W_q^*\] we have:
\begin{align*}
    \Jc_\As^*(C) &= \sum_{j,k,q=1}^K \sum_{\ell=1}^d W_q \tr_{\Hc_{G,q}}\left[\underbrace{V_q^* V_j}_{\one\delta_{j,q}} (C_{\ell,F}^{(j,k)}\otimes G_{\ell}^{(j,k)}) \underbrace{V_k^* V_q}_{\delta_{k,q}} \left(\one_{\dim(\Hc_{F,q})}\otimes\frac{\one_{\dim(\Hc_{G,q})}}{\dim(\Hc_{G,q})}\right) \right]W_q^*\\
    &= \sum_{q=1}^K \sum_{\ell=1}^d W_q \tr_{\Hc_{G,q}}\left[C_{\ell,F}^{(q,q)} \otimes\frac{G_{\ell}^{(q,q)}}{\dim(\Hc_{G,q})} \right]W_q^*
    = \sum_{q=1}^K \sum_{\ell=1}^d W_q \frac{C_{\ell,F}^{(q,q)}} {\dim(\Hc_{G,q})}\underbrace{{\tr\left[G_{\ell}^{(q,q)}\right]}}_{\delta_{l,1} \tr[G_1^{(q,q)}]} W_q^*\\ 
    &= \sum_{q=1}^K W_q \frac{C_{1,F}^{(q,q)}}{\cancel{\dim(\Hc_{G,q})}} \frac{\cancel{\tr[\one_{\Hc_{G,q}}]}}{\dim(\Hc_{G,q})^\um} W_q^*
    = \sum_{q=1}^K W_q \frac{C_{1,F}^{(q,q)}}{\dim(\Hc_{G,q})^\um} W_q^* = \check{L}_{1,0}.
\end{align*}
The second statement follows directly from the fact that $L\in\As$ implies \(L =\sum_{k} V_k (L_{1,F}^{(k,k)}\otimes G_{1}^{(k,k)}) V_k^*.\)
\end{proof}

Proposition \ref{prop:cp_reduction} proves the first half of point (4) of Proposition \ref{prop:operator_reduction} and we next focus on the terms $\Dc_L$ and $\check{\Dc}_L = \Rc_\As\Dc_L\Jc_\As$ for which the following lemma comes useful.
\begin{lemma}
\label{lem:operator_reduction_dissipative}
    Under the assumptions of Proposition \ref{prop:cp_reduction}, we have $\sum_{\check{L}\in\Xc(L)}\check{L}^* \check{L} = \Jc_\As^*(L^* L)$.
\end{lemma}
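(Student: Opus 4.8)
The plan is to prove the identity by a direct computation of both sides in the block representation induced by the Wedderburn decomposition of $\As$, and then to verify that they coincide. Both sides are elements of $\Bf(\check{\Hc})$, so it suffices to match their block structure against the isometries $\{W_k\}$ and $\{V_k\}$, using only the orthonormality relations $W_i^* W_j = \one\,\delta_{i,j}$, $V_i^* V_j = \one\,\delta_{i,j}$, and $\tr[G_u^{(j,k)*} G_\ell^{(j,k)}] = \delta_{u,\ell}$.

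First I would expand the left-hand side. Recall from Proposition \ref{prop:cp_reduction} that $\Xc(L) = \{\check{L}_{\ell,e}\}$ with $\check{L}_{\ell,e} = \sum_{k} W_{k-e}\,\dim(\Hc_{G,k})^{-\um} L_{\ell,F}^{(k-e,k)}\, W_k^*$, whose adjoint is $\check{L}_{\ell,e}^* = \sum_{s} W_{s}\,\dim(\Hc_{G,s})^{-\um} L_{\ell,F}^{(s-e,s)*}\, W_{s-e}^*$. In the product $\check{L}_{\ell,e}^*\check{L}_{\ell,e}$ the relation $W_{s-e}^* W_{k-e} = \one\,\delta_{s,k}$ forces $s=k$, giving
\[
\sum_{\check{L}\in\Xc(L)}\check{L}^*\check{L} = \sum_{e=-N+1}^{N-1}\sum_{\ell=1}^d \sum_{k} W_k \frac{L_{\ell,F}^{(k-e,k)*} L_{\ell,F}^{(k-e,k)}}{\dim(\Hc_{G,k})} W_k^*.
\]
The substitution $j = k-e$ is the key bookkeeping step: as $e$ ranges over $[-N+1,N-1]$ and $k$ over $\max(1,1+e)\le k\le \min(N,N+e)$, the pair $(j,k)$ ranges exactly once over $\{1,\dots,N\}^2$, so the sum becomes $\sum_{j,k,\ell} W_k\,\dim(\Hc_{G,k})^{-1} L_{\ell,F}^{(j,k)*} L_{\ell,F}^{(j,k)}\, W_k^*$.

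Next I would compute the right-hand side. Starting from $L = \sum_{j,k,\ell} V_j (L_{\ell,F}^{(j,k)}\otimes G_{\ell}^{(j,k)}) V_k^*$, forming $L^*L$ and using $V_i^* V_j = \one\,\delta_{i,j}$ to identify the summed middle indices, then sandwiching with $V_q^*(\cdot)V_q$ selects $t=q$ and $k=q$, leaving $V_q^*(L^*L)V_q = \sum_{u,j,\ell} L_{u,F}^{(j,q)*}L_{\ell,F}^{(j,q)}\otimes G_u^{(j,q)*}G_\ell^{(j,q)}$. Inserting $\one_{F,q}\otimes \tfrac{\one_{G,q}}{\dim(\Hc_{G,q})}$ and applying $\tr_{\Hc_{G,q}}$ as in the definition of $\Jc_\As^*$, the orthonormality $\tr[G_u^{(j,q)*}G_\ell^{(j,q)}]=\delta_{u,\ell}$ eliminates the off-diagonal $u\neq\ell$ terms, yielding $\Jc_\As^*(L^*L) = \sum_{j,q,\ell} W_q\,\dim(\Hc_{G,q})^{-1} L_{\ell,F}^{(j,q)*}L_{\ell,F}^{(j,q)}\, W_q^*$.

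Comparing the two expressions after relabeling $q\to k$ shows they are identical, which proves the claim. The only genuinely delicate point is the index-range argument in the change of variables $e=k-j$: one must check that the constraints $\max(1,1+e)\le k\le\min(N,N+e)$ are precisely equivalent to $1\le j\le N$ and $1\le k\le N$, so that no block is double-counted or dropped. A secondary bookkeeping point is the normalization: each factor $\check{L}$ carries $\dim(\Hc_{G,k})^{-\um}$, and these combine into the single $\dim(\Hc_{G,k})^{-1}$ that matches the factor $\tfrac{\one_{G,q}}{\dim(\Hc_{G,q})}$ appearing inside $\Jc_\As^*$.
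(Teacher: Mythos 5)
Your proof is correct and follows essentially the same route as the paper: a direct block computation of both sides in the representation induced by the Wedderburn decomposition, expanding $\check{L}_{\ell,e}^*\check{L}_{\ell,e}$ via the isometry orthogonality relations, reindexing with $j=k-e$, and matching against the expansion of $\Jc_\As^*(L^*L)$ obtained from the $G$-basis orthonormality. If anything, your explicit verification that the constraints $\max(1,1+e)\le k\le\min(N,N+e)$ with $e\in[-N+1,N-1]$ put the pairs $(j,k)=(k-e,k)$ in exact bijection with $\{1,\dots,N\}^2$, and your consistent placement of the column index on the outer isometries $W_k(\cdot)W_k^*$ and on the normalization $\dim(\Hc_{G,k})^{-1}$, is tidier than the paper's own intermediate lines.
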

\begin{proof}
Recalling that $L =\sum_{j,k=1}^K\sum_{\ell=1}^d V_j (L_{\ell,F}^{(j,k)}\otimes G_{\ell}^{(j,k)}) V_k^*$  we have 
\begin{align*}
    \Jc_\As^*(L^* L) &=\sum_{j,k,s,t=1}^K\sum_{\ell,u=1}^d \Jc_\As^* \left[V_k (L_{\ell,F}^{(j,k)*}\otimes G_{\ell}^{(j,k)*}) \underbrace{V_j^* V_s}_{\one_{\Hc_j}\delta_{j,s}} (L_{u,F}^{(s,t)}\otimes G_{u}^{(s,t)}) V_t^*\right] \\
    &=\sum_{j,k,t=1}^K\sum_{\ell,u=1}^d \Jc_\As^*\left[V_k (L_{\ell,F}^{(j,k)*}L_{u,F}^{(j,t)}\otimes G_{\ell}^{(j,k)*}G_{u}^{(j,t)}) V_t^*\right] \\
    &=\sum_{j,k,t, q=1}^K\sum_{\ell,u=1}^d W_q \tr_{\Hc_{G,q}} \left[\underbrace{V_q^* V_k}_{\one_{\Hc_q}\delta_{q,k}} \left(L_{\ell,F}^{(j,k)*}L_{u,F}^{(j,t)}\otimes \frac{G_{\ell}^{(j,k)*}G_{u}^{(j,t)}}{ \dim(\Hc_{G,q})  }\right) \underbrace{V_t^* V_q}_{\one_{\Hc_q}\delta_{q,t}}\right] W_q^* \\
    &=\sum_{j,q=1}^K\sum_{\ell,u=1}^d W_q \tr_{\Hc_{G,q}} \left[L_{\ell,F}^{(j,q)*}L_{u,F}^{(j,q)}\otimes \frac{G_{\ell}^{(j,q)*}G_{u}^{(j,q)}}{\dim(\Hc_{G,q})} \right] W_q^* \\
    &=\sum_{j,q=1}^K\sum_{\ell,u=1}^d W_q  \frac{L_{\ell,F}^{(j,q)*}L_{u,F}^{(j,q)}}{\dim(\Hc_{G,q})}\underbrace{\tr\left[ G_{\ell}^{(j,q)*}G_{u}^{(j,q)} \right]}_{\delta_{\ell,u}} W_q^* 
    =\sum_{j,q=1}^K\sum_{\ell=1}^d W_q  \frac{L_{\ell,F}^{(j,q)*}L_{\ell,F}^{(j,q)}}{\dim(\Hc_{G,q})} W_q^*.
\end{align*}
On the other hand instead we have:
\begin{align*}
    \sum_{e=-N+1}^{N-1}\sum_{\ell=1}^d {\check{L}_{\ell,e}}^* \check{L}_{\ell,e} 
    &=\sum_{e=-N+1}^{N-1} \sum_{\ell=1}^d \sum_{s,t=\max(1,1+e)}^{\min(N,N+e)} W_{s-e} \frac{L_{\ell,F}^{(s-e,s)*}}{\dim(\Hc_{G,s})^\um} \underbrace{W_s^* W_t}_{\delta_{s,t}} \frac{L_{\ell,F}^{(t-e,t)}}{\dim(\Hc_{G,t})^\um} W_{t-e}^* \\
    &=\sum_{e=-N+1}^{N-1} \sum_{\ell=1}^d \sum_{s=\max(1,1+e)}^{\min(N,N+e)} W_{s+e} \frac{L_{\ell,F}^{(s-e,s)*}L_{\ell,F}^{(s-e,s)}}{\dim(\Hc_{G,s})} W_{s-e}^* 
    = \sum_{k,s=1}^K \sum_{\ell=1}^d W_s \frac{L_{\ell,F}^{(k,s)*} L_{\ell,F}^{(k,s)} }{\dim(\Hc_{G,s})} W_s^* 
\end{align*}
where we substituted $s-e=k$, 
which confirms that $\sum_{\check{L}\in\Xc(L)}\check{L}^* \check{L} = \Jc_\As^*(L^* L)$. 
\end{proof}

We are now ready to compute the reduced noise operators for the dissipative terms of the type $\Dc_{L}$. 
\begin{proposition}
\label{prop:dissipative}
    Under the assumptions of Proposition \ref{prop:cp_reduction}, given $\Dc_{L}(\rho) = L\rho L^* -\frac{1}{2}\{L^* L, \rho\}$, we have $\check{\Dc}_{L}(\check{\rho}) = \Rc_\As \Dc_L \Jc_\As(\check{\rho}) = \sum_{\check{L}\in\Xc(L)} \Dc_{\check{L}}(\check{\rho})$. 
\end{proposition}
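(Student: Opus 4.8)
The plan is to decompose the dissipator into its completely positive part and its anticommutator part, reduce each piece separately using the tools already established, and then recognize that the two pieces recombine into a genuine sum of dissipators. Writing $\Dc_L(\rho) = \Kc_L(\rho) - \tfrac{1}{2}L^*L\rho - \tfrac{1}{2}\rho L^*L$ and using linearity of $\Rc_\As$, I would first split
$$\check{\Dc}_L(\check{\rho}) = \Rc_\As\Kc_L\Jc_\As(\check{\rho}) - \tfrac{1}{2}\Rc_\As\big(L^*L\,\Jc_\As(\check{\rho})\big) - \tfrac{1}{2}\Rc_\As\big(\Jc_\As(\check{\rho})\,L^*L\big).$$
For the first term, Proposition \ref{prop:cp_reduction} applies directly and gives $\Rc_\As\Kc_L\Jc_\As(\check{\rho}) = \sum_{\check{L}\in\Xc(L)}\check{L}\check{\rho}\check{L}^*$.

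For the two remaining terms, the key observation is that each has the form treated in Proposition \ref{prop:combining_reduction_injection}: applying that result with $X = L^*L$ and $A = \check{\rho}$ yields $\Rc_\As(L^*L\,\Jc_\As(\check{\rho})) = \Jc_\As^*(L^*L)\check{\rho}$ and $\Rc_\As(\Jc_\As(\check{\rho})L^*L) = \check{\rho}\,\Jc_\As^*(L^*L)$. Substituting these produces
$$\check{\Dc}_L(\check{\rho}) = \sum_{\check{L}\in\Xc(L)}\check{L}\check{\rho}\check{L}^* - \tfrac{1}{2}\Jc_\As^*(L^*L)\check{\rho} - \tfrac{1}{2}\check{\rho}\,\Jc_\As^*(L^*L).$$

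The crucial step, and the one that makes the dissipative structure survive the reduction, is to rewrite the anticommutator coefficient $\Jc_\As^*(L^*L)$ so that it matches exactly the Kraus operators appearing in the first sum. This is precisely the content of Lemma \ref{lem:operator_reduction_dissipative}, which asserts $\sum_{\check{L}\in\Xc(L)}\check{L}^*\check{L} = \Jc_\As^*(L^*L)$. Inserting this identity, I obtain
$$\check{\Dc}_L(\check{\rho}) = \sum_{\check{L}\in\Xc(L)}\Big(\check{L}\check{\rho}\check{L}^* - \tfrac{1}{2}\{\check{L}^*\check{L},\check{\rho}\}\Big) = \sum_{\check{L}\in\Xc(L)}\Dc_{\check{L}}(\check{\rho}),$$
which is the desired statement.

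I do not anticipate a genuine obstacle, since all the required machinery has been set up in the preceding results; the computation is essentially an assembly of Proposition \ref{prop:cp_reduction}, Proposition \ref{prop:combining_reduction_injection}, and Lemma \ref{lem:operator_reduction_dissipative}. The only subtlety worth flagging is that the exact correspondence between the set of jump operators $\Xc(L)$ generated by the completely positive part and the single operator $\Jc_\As^*(L^*L)$ appearing in the anticommutator is what guarantees that the reduced term is again a sum of GKLS dissipators built from the \emph{same} noise operators. This is why Lemma \ref{lem:operator_reduction_dissipative} is stated and proven just beforehand: its role here is essential rather than cosmetic, as without it the two pieces would not collapse into the canonical dissipative form.
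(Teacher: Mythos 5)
Your proof is correct and follows essentially the same route as the paper's own proof: both decompose $\Dc_L$ into its completely positive part and the anticommutator part, reduce the former via Proposition~\ref{prop:cp_reduction}, handle the latter via Proposition~\ref{prop:combining_reduction_injection}, and invoke Lemma~\ref{lem:operator_reduction_dissipative} to identify $\Jc_\As^*(L^*L)$ with $\sum_{\check{L}\in\Xc(L)}\check{L}^*\check{L}$ so the pieces reassemble into a sum of dissipators. The only difference is that you spell out the intermediate steps that the paper compresses into a single displayed chain of equalities.
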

\begin{proof}
    Combining Propositions \ref{prop:combining_reduction_injection}, \ref{prop:cp_reduction} and Lemma \ref{lem:operator_reduction_dissipative} we have \[\check{\Dc}_{L}(\check{\rho})\equiv \Jc_\As \Dc_\Lc \Rc_\As (\check{\rho}) = \sum_{\check{L}\in\Xc(L)} \check{L}\check{\rho}\check{L}^* -\frac{1}{2}\{\Jc_\As^*(L^* L),\check{\rho}\} = \sum_{\check{L}\in\Xc(L)} \check{L}\check{\rho}\check{L}^* -\frac{1}{2}\{\check{L}^* \check{L},\check{\rho}\} = \sum_{\check{L}\in\Xc(L)} \Dc_{\check{L}}(\check{\rho}).\]
\end{proof}
With this, we have a description of the reduced Lindblad generator $\check{\Lc}$ in terms of the defining operators, namely 
\begin{equation}
    \check{\Lc}(\cdot) = -i [\check{H},\cdot] + \sum_{j=1}^m \sum_{\check{L}\in\Xc(L_j)} \Dc_{\check{L}}(\cdot) + \sum_{j=1}^p \sum_{\check{D}\in\Xc(D_j)} \Dc_{\check{D}}(\cdot) + \sum_{j=1}^q \sum_{\check{C}\in\Xc(C_j)} \Dc_{\check{C}}(\cdot)
\end{equation}
Notice that, this solves the problem left open in \cite{prxq2024} of finding the reduced description of the noise operators in the general case, e.g. in the presence of weak (not strong) symmetries. 

In summary, the proof of Proposition \ref{prop:operator_reduction} then follows from direct applications of the Propositions shown in this subsection.
\begin{proof}[Proof of Proposition \ref{prop:operator_reduction}]
    Point (1) follows from Proposition \ref{prop:combining_reduction_injection}, with $\check{H} = \Jc_\As^*(H)$. Similarly, point (2) follow from Proposition \ref{prop:dissipative} with $\{\check{L}_k\} = \Xc(L)$.
    Point (3) follows from Proposition \ref{prop:combining_reduction_injection}, Proposition \ref{prop:dissipative} and Corollary \ref{cor:reduced_operators} with $\check{D} = \Jc_\As^*(D)$ and $\{\check{D}'_k\} = \Xc(D)/\{\Jc_\As^*(D)\}$. 
    To conclude, point (4) follows from Proposition \ref{prop:dissipative} and Proposition \ref{prop:cp_reduction} with $\{\check{C}_k\} = \Xc(C)$.
\end{proof}

\end{document}